\documentclass[11pt]{article}
\usepackage{fullpage}
\usepackage{latexsym}
\usepackage{amsthm}
\usepackage{amsfonts}
\usepackage{amssymb}
\usepackage{amsmath}
\usepackage{nicefrac}
\usepackage{authblk}
\usepackage{float}
\usepackage[backref=page,colorlinks,citecolor=blue,linkcolor=purple]{hyperref}
\usepackage{graphicx}
\usepackage{xcolor}
\usepackage{subfigure}
\usepackage{algorithm}
\usepackage{algorithmic}
\usepackage{import}
\usepackage{url}

\allowdisplaybreaks[1]

%The Theorem, Lemma and Corollary numbers in the version of RT we cite.
%The current values are based on the arxiv version. We can change them
%to the official SODA proceeding (but I am not sure if that is helpful to readers.)
%But please comment out these values of you change them in case we want to change them back.
%Commented out values are the official SODA values.
\def\RTMutualInfLemma{4.5}
\def\RTCovarMlInfLemma{5.2}
\def\RTAlphaIndepThm{4.6}
\def\RTVarianceThm{5.6}
\def\RTBalanceCor{5.7}
\def\R{\mathbb{R}}

\def\eps{\epsilon}

\def\dif{\:\mbox{d}}

\def\iprod#1#2{\langle #1, #2 \rangle}
\DeclareMathOperator*{\expectationE}{\mathbb{E}}
\DeclareMathOperator*{\VarO}{Var}

\DeclareMathOperator{\Cov}{Cov}

\DeclareMathOperator{\poly}{poly}

\newcommand{\Var}[2][]{\VarO_{#1}\!\left[#2\right]}
\newcommand{\Exp}[2][]{\expectationE_{#1}\!\left[#2\right]}
\newcommand{\pro}[2][]{\Pr_{#1}\!\left[#2\right]}

\def\sign{\mathrm{sign}}

\def\csp#1{\textsc{Max CSP}$(#1)$}
\def\balcsp#1{\textsc{Max Bisect-CSP}$(#1)$}
\def\balcsps{\textsc{Max Bisect-CSP}}
\def\maxcut{\textsc{Max Cut}}
\def\maxbisection{\textsc{Max Bisection}}
\def\maxtwosat{\textsc{Max 2-Sat}}
\def\balmaxtwosat{\textsc{Max Bisect-2-Sat}}

\def\bu{{\bf u}}
\def\bv{{\bf v}}
\def\bw{{\bf w}}
\def\obw{\overline{\bw}}
\def\boldg{{\bf g}}

\def\set#1{\left\{#1\right\}}
\def\symd{\bigtriangleup}

\newcommand{\norm}[2][]{\left\|#2\right\|_{#1}}
\newtheorem{theorem}{Theorem}[section]

\newtheorem{lemma}[theorem]{Lemma}

\newtheorem{conjecture}[theorem]{Conjecture}
\newtheorem{definition}[theorem]{Definition}
\newtheorem{fact}[theorem]{Fact}

\newtheorem{numericalclaim}[theorem]{Claim (Numerical)}

\newenvironment{relemma}[1]{\vspace{0.25cm}

\noindent{\bf Lemma \ref{#1} (restated)}.\it}{\vspace{0.25cm}%
}

\floatstyle{boxed}
\newfloat{algo}{hbt}{alg}
\floatname{algo}{Algorithm}

\newsavebox{\sdpbox}

\def\ugc{Unique Games Conjecture}

\newcommand{\deriv}{\partial}
\newcommand{\ud}{\text{d}}
\newcommand{\trho}{\tilde{\rho}}
\DeclareMathOperator{\Val}{\mathsf{Val}}
\DeclareMathOperator{\Opt}{\mathsf{Opt}}
\DeclareMathOperator{\SDPVal}{\mathsf{SDPVal}}
\DeclareMathOperator{\SDP}{\mathsf{SDP}}
\newcommand{\Conf}{\mathbf{Conf}}

\DeclareMathOperator{\argmin}{arg min}
\DeclareMathOperator{\argmax}{arg max}
\DeclareMathOperator{\sgn}{sgn}

\newcommand{\la}{\textsf{Lasserre}}
\newcommand{\iSDP}{\textsf{SDP}}

%%% Local Variables:
%%% mode: latex
%%% TeX-master: "bisection.tex"
%%% End:

\newcommand{\alphagw}{\alpha_{\textnormal{GW}}}
\newcommand{\alphagwnum}{0.8786}

\newcommand{\alphalinear}{0.8736}
\newcommand{\alphalinearfull}{0.87368287}
\newcommand{\alphalinearbestc}{0.86450318}
\newcommand{\worstconfiglinearb}{0.176945}
\newcommand{\worstconfiglinearrho}{-0.646110}
\newcommand{\alphalinearlimit}{0.873829}
\newcommand{\alphalinearlimitp}{0.931935}

\newcommand{\alphasecond}{0.8776}
\newcommand{\alphasecondfull}{0.87765366}

\newcommand{\alphabest}{0.8776}

\newcommand{\RTratio}{0.85}

\title{Better Balance by Being Biased:\\A $\alphabest$-Approximation for Max Bisection}

\author[*]{Per Austrin}
\author[*]{Siavosh Benabbas}
\author[$\dagger$]{Konstantinos Georgiou}
\affil[*]{Department of Computer Science, University of Toronto}
\affil[$\dagger$]{Department of Combinatorics \& Optimization, University of Waterloo}
\affil[$ $]{\texttt{\{austrin,siavosh\}@cs.toronto.edu}, \texttt{k2georgiou@math.uwaterloo.ca}}

\begin{document}
\maketitle
\begin{abstract}
  Recently Raghavendra and Tan (SODA 2012) gave a
  $\RTratio$-approximation algorithm for the \maxbisection{} problem.
  We improve their algorithm to a \alphabest{}-approximation.  As
  \maxbisection{} is hard to approximate within $\alphagw + \epsilon
  \approx \alphagwnum$ under the Unique Games Conjecture (UGC), our
  algorithm is nearly optimal.  We conjecture that \maxbisection{} is
  approximable within $\alpha_{GW}-\epsilon$, i.e., that the bisection
  constraint (essentially) does not make \maxcut{} harder.

  We also obtain an optimal algorithm (assuming the UGC) for the
  analogous variant of \maxtwosat{}.  Our approximation ratio for this
  problem exactly matches the optimal approximation ratio for
  \maxtwosat{}, i.e., $\alpha_{LLZ} + \epsilon \approx 0.9401$,
  showing that the bisection constraint does not make \maxtwosat{}
  harder. This improves on a $0.93$-approximation for this problem due
  to Raghavendra and Tan.
\end{abstract}

\newpage
\tableofcontents

\newpage
\section{Introduction}

In the \maxbisection{} problem we are given a (weighted) graph $G =
(V,E)$, and the objective is to find a \emph{bisection} $V = S \cup
\overline{S}$, $|S| = |\overline{S}| = |V|/2$ such that the number
(weight) of edges between $S$ and $\overline{S}$ is maximized.

\maxbisection{} is closely related to the \maxcut{} problem, in which
the constraint $|S| = |\overline{S}|$ is dropped.  \maxcut{} is one of
Karp's original 21 NP-Complete problems~\cite{karp72reducibility} and
is one of the most well-studied NP-hard problems.  In a seminal work
Goemans and Williamson~\cite{goemans95improved} show how to use
Semidefinite Programming to obtain an $\alpha_{GW}\approx \alphagwnum$
approximation algorithm for \maxcut{}. Here we say that a (randomized) algorithm
is an $\alpha$-approximation if for every graph $G$ it outputs a cut
in which the number of edges cut is (in expectation) at least an $\alpha$ fraction of
the optimum number of edges cut.
Since then, a series of results have continued the study of
the approximability of \maxcut{}, by providing improved approximation
ratios in special classes of graphs \cite{arora99polynomial,feige02improved}, integrality gaps for
(strengthenings of) the Semidefinite Programming relaxation
\cite{feige01integrality,khot05integrality,khot09sdp}, and hardness of
approximation results \cite{hastad01optimal}.  In a celebrated result,
Khot et al.~\cite{khot07optimal} proved that, assuming the
\emph{Unique Games Conjecture}, it is hard to approximate \maxcut{}
within a factor $\alpha_{GW} + \epsilon$ for any $\epsilon > 0$.
Subsequently, O'Donnell and Wu \cite{odonnell08optimal} determined the
entire ``approximability curve'' of \maxcut{}, thereby completely
settling the approximability of \maxcut{} modulo the Unique Games
Conjecture.

Overall, one can think of \maxcut{} as a problem whose
approximability has been (essentially) resolved. It is worthwhile to
note that this mostly stems from the \emph{local} nature of
the problem, i.e., that one can analyze the value of the objective
function by analyzing whether each edge is cut separately. 
In other words both feasibility and the objective value
of a potential solution to \maxcut{} are very local.  

\maxbisection{} on the other hand has a global condition $|S| =
|\overline{S}|$ determining feasibility.  It is perhaps not surprising
then that settling the approximability of \maxbisection{} has turned
out to be more challenging.  While it is well-known and easy
to see that \maxbisection{} is at least as hard to approximate as
\maxcut{} (the reduction from \maxcut{} to \maxbisection{} simply
outputs two disjoint copies of the graph), it is not known
whether the converse holds, i.e.,
\begin{center}
  \emph{Is \maxbisection{} as easy to approximate as \maxcut{}?}
\end{center}
There has been a long chain of results obtaining improved
approximation algorithms for \maxbisection{}.  Frieze and
Jerrum~\cite{frieze97improved}, in the first nontrivial approximation
algorithm, showed that the problem can be
approximated to within a factor of $0.6514$. Subsequently,
Ye~\cite{ye01approximation}, Halperin and
Zwick~\cite{halperin02unified}, and Feige and
Langberg~\cite{feige06rprr} gave algorithms for \maxbisection{} with
ratios $0.699$, $0.7016$, and $0.7028$ respectively. For the case of
regular graphs, Feige et al.~\cite{feige01note} showed that one can
improve the approximation ratio to $0.795$ (or even $0.834$ for
3-regular graphs). Very recently, in a 
significant improvement,
Raghavendra and Tan~\cite{raghavendra12approximating} gave a
$\RTratio$-approximation algorithm (based on a computer-assisted
analysis), improving upon these previous results.

%% In the case when one knows that the input graph has a bisection cutting almost all the edges, i.e. the normalized objective value is $1-\epsilon$, Guruswami et al.~\cite{guruswami11finding} showed how to cut $1-O\left( \epsilon^{1/3} \log 1/\epsilon \right)$ fraction of the edges. This result was also improved by \cite{raghavendra12approximating} to $1-O\left( \sqrt{\epsilon} \right)$. Interestingly, the latter is asymptotically optimal assuming the \ugc{} (see Section~\ref{sec:opt} and
%% Corollary 1 in \cite{khot07optimal})

\subsection{Our Contributions}

Our main contribution is a further improvement on the approximability
of \maxbisection{}.  We present a new approximation algorithm for
\maxbisection{} with approximation factor $\alpha$, where $\alpha$ is
the minimum of a certain function over a simple 3-dimensional
polytope.  Using a Matlab program we non-rigorously estimate that
$\alpha \approx \alphasecondfull$, and using a computer-assisted case
analysis we can formally prove this up to four digits of accuracy.

\begin{theorem}\label{thm:main}
  \maxbisection{} is approximable in polynomial time to within a factor $\alphabest$.
\end{theorem}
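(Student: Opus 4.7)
The plan is to follow the Raghavendra--Tan template of \cite{raghavendra12approximating} but replace their unbiased hyperplane rounding with a \emph{biased} threshold rounding (hence the paper title). First, I would solve a strengthened Lasserre-style semidefinite relaxation of \maxbisection{} with $\poly(1/\epsilon)$ rounds, which enforces the bisection constraint $\sum_i x_i = 0$ at the level of pseudo-moments. The strength of the relaxation is used only to invoke the Raghavendra--Tan global-correlation machinery (their Theorem \RTAlphaIndepThm{} and Lemma \RTMutualInfLemma): after conditioning on the values along a short random walk of vertices, a uniformly random pair of remaining vertices becomes $\epsilon$-independent in the pseudo-distribution, so a random edge behaves like two Gaussians with well-defined marginal biases $b_u,b_v$ and correlation $\rho$.

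Next, I would design the rounding. Given the SDP vectors, I would pick a global bias parameter $\mu$ and round each vector $v_i$ to $+1$ or $-1$ according to whether a Gaussian projection of $v_i$ exceeds a threshold that depends on the marginal bias $b_i$ and on $\mu$. The key insight is that the purely unbiased rounding ($\mu=0$) is locally optimal per edge (it is essentially Goemans--Williamson), but it produces a cut that is already almost balanced, so any rebalancing afterwards is expensive. Choosing a small bias $\mu$ creates slack: the cut becomes slightly imbalanced, but the expected cost of rebalancing (bounded via the variance lemma \RTVarianceThm{} and its corollary \RTBalanceCor) is lower order, while the per-edge probability of separation, computed against a biased correlated Gaussian integral, can be analyzed to be strictly better than the unbiased value at every ``hard'' edge configuration.

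The analysis proceeds by decomposing the rounded objective into a sum over edges, where each edge contributes a function $f(b_u,b_v,\rho)$ equal to the probability that the two endpoints land on opposite sides of the biased threshold. By the $\epsilon$-independence of the pseudo-distribution, up to an additive $O(\epsilon)$ loss this decomposition is exact, and the approximation ratio becomes the minimum of $f(b_u,b_v,\rho)$ divided by the per-edge SDP contribution over the compact three-dimensional polytope of feasible triples $(b_u,b_v,\rho)$. After adding the (cheap) rebalancing step whose cost is absorbed into $\epsilon$, the approximation ratio of the whole algorithm is the infimum of this three-variable ratio function.

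The final step is numerical: a Matlab search for the worst configuration yields $\alpha \approx \alphasecondfull$, and to turn this into a rigorous bound up to four digits I would partition the 3-dimensional polytope into a fine grid, bound the partial derivatives of the ratio function analytically on each cell (the function is a smooth combination of bivariate Gaussian CDFs), and certify by computer-assisted case analysis that the minimum on each cell exceeds $\alphabest$. The main obstacle I anticipate is choosing $\mu$ as a suitable function of the SDP solution so that the three-variable lower bound strictly improves over the unbiased Raghavendra--Tan ratio of $\RTratio$ \emph{uniformly} over the polytope; in particular, the delicate trade-off is calibrating $\mu$ so that at the worst configuration the gain from the biased per-edge probability exactly balances the rebalancing loss.
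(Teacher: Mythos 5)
Your overall architecture --- a Lasserre relaxation with conditioning to obtain globally uncorrelated vectors, a thresholded hyperplane rounding, a Chebyshev-based rebalancing step, and a computer-certified minimization over the three-dimensional configuration polytope --- matches the paper's framework. But the rounding you propose, a threshold for vertex $i$ depending only on its own marginal bias and a single global parameter, is exactly the ``linear bias'' family of Section~\ref{sec:first_improvement}, and it provably cannot reach $\alphabest$. The balance requirement $\sum_i r_i = 0$ (given only $\sum_i \mu_i = 0$) forces a per-vertex rule $r_i = f(\mu_i)$ to be linear, $r_i = c\,\mu_i$, and the paper exhibits an explicit distribution $D_{\Conf}$ supported on the two critical configurations $\phi_1$ and $\phi_2 = (1,-1,-1)$ on which \emph{every} choice of the global parameter $c$ --- including one chosen adaptively after seeing the SDP solution --- yields ratio at most $\alphalinearlimit$; the integral configuration alone forces $\alpha(c) \le (1+c^2)/2$. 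The missing idea is the combinatorial pairing step of Algorithm~\ref{alg:biasselect}: after initializing $r_i = c\mu_i$, vertices with opposite-sign $\mu$'s are greedily matched and boosted by $(1-c)f(\min(|\mu_i|,|\mu_j|))$, so a vertex's bias depends on which partner it is matched with rather than on $\bv_i$ alone (two vertices with identical vectors may be rounded differently). This breaks the clean three-variable decomposition you rely on --- the cut probability of an edge is no longer determined by $(\mu_i,\mu_j,\rho_{ij})$ --- and the paper must prove Lemma~\ref{lem:rvaluebound} (confining the biases to a permissible region) and Lemma~\ref{lemma:worstr} (reducing the inner minimization over biases to finitely many candidate points) before the case analysis over $\Conf$ can even be formulated.

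Two further substantive errors. First, you assert that unbiased hyperplane rounding ``produces a cut that is already almost balanced''; the paper stresses the opposite --- plain hyperplane rounding need \emph{not} be near-balanced even after conditioning (if it were, an $(\alphagw-\epsilon)$-algorithm would follow immediately), and concentration of the imbalance is established only for roundings that threshold the projections $\iprod{\obw_i}{\boldg}$ of the components orthogonal to $\bv_0$. Second, to make the balance lemma work for \emph{arbitrary} biases $r_i$ (as the pairing scheme requires), one needs the strengthened notion of $\epsilon$-uncorrelation --- a bound on the average correlation coefficient $|\iprod{\obw_i}{\obw_j}|$ rather than on mutual information or covariance as in \cite{raghavendra12approximating} --- together with the repair step in Lemma~\ref{lem:cor} for vertices with $\|\bw_i\|$ small. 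Invoking the Raghavendra--Tan independence machinery as a black box, as you do, supports only their choice $r_i=\mu_i$ and is not sufficient for the family of roundings needed here.
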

\noindent
As mentioned above, \maxbisection{} is as hard as \maxcut{}, and hence the UGC
implies that \maxbisection{} cannot be approximated
to within a factor $\alphagw + \epsilon \approx \alphagwnum$ for any $\epsilon>0$, so our
approximation ratio is off from the optimal by less than $10^{-3}$.
As it turns out, our algorithm has a lot of flexibility, indicating
that further improvements may be possible.  We remark that, while
polynomial, the running time of the algorithm is somewhat abysmal;
loose estimates places it somewhere around
$O\!\left(n^{10^{100}}\right)$; the running time of the algorithm of \cite{raghavendra12approximating} is similar.

One can consider bisection-like variants of any \textsc{Max CSP}.  We
refer to the resulting problem as \balcsps.  For
instance, in the \balmaxtwosat{} problem, we are given a \maxtwosat{}
instance and the goal is to obtain an assignment to the variables
maximizing the number of satisfied clauses, subject to the constraint
that exactly half of the variables are set to true, and the other half
are set to false.  For \balmaxtwosat{},
\cite{raghavendra12approximating} gave a $0.93$-approximation
algorithm (again based on a computer-assisted analysis).  Under the
Unique Games Conjecture, the approximation threshold for \maxtwosat{}
is known to be $\alpha_{LLZ} \approx 0.9401$ \cite{lewin02improved, austrin07balanced}
and again it is easy to prove that \balmaxtwosat{} can not be easier
than this (see Section~\ref{sec:2sat}).  We show that a simple
modification to the algorithm of \cite{raghavendra12approximating}
yields the optimal approximation ratio $\alpha_{LLZ}$ for
\balmaxtwosat{}.

\begin{theorem}\label{thm:2sat}
  For every $\epsilon > 0$, \balmaxtwosat{} can be approximated to within $\alpha_{LLZ} -
  \epsilon$ in time $n^{\poly(1/\epsilon)}$. Here $\alpha_{LLZ} \approx 0.9401$ is the approximation
  threshold for \maxtwosat{}.
\end{theorem}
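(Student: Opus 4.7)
The plan is to follow the Raghavendra--Tan framework but replace the hyperplane-style rounding (which is tuned for \maxcut) with the asymmetric threshold rounding of Lewin--Livnat--Zwick / Austrin (which is optimal for \maxtwosat), using a single global bias parameter to enforce the bisection constraint in expectation, followed by a swap-based fixup to enforce it exactly.

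First, I would solve a $\poly(1/\epsilon)$-round Lasserre SDP relaxation of \balmaxtwosat{}, with the bisection constraint $\sum_i x_i = 0$ added as a linear condition (working with $\pm 1$ variables). From this SDP one extracts, as in \cite{raghavendra12approximating}, a jointly Gaussian representation of the variables whose low-order moments agree (up to small error) with a convex combination of bisections. Second, I would round this Gaussian representation by the optimal \maxtwosat{} rounding of Lewin--Livnat--Zwick / Austrin, but with a uniform additive shift $\tau \in \R$ applied to every threshold. Since the expected fraction of true variables is a continuous, monotone function of $\tau$ that ranges over $(0,1)$, there is a $\tau$ for which the expected number of trues is exactly $n/2$; this $\tau$ can be found by binary search. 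Third, I would correct the residual deviation from an exact bisection by swapping pairs of variables.

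The key observation that makes this approach work is that the LLZ/Austrin analysis is fundamentally \emph{local}: for every 2-SAT clause, the ratio between the rounding's success probability and the SDP contribution of that clause is at least $\alpha_{LLZ}$ \emph{regardless} of the choice of $\tau$, because $\tau$ shifts the two variables' biases in the same way and the per-clause optimization underlying $\alpha_{LLZ}$ is robust under such global bias choices. Summing over clauses gives that the expected value of the rounded (not yet perfectly balanced) assignment is at least $\alpha_{LLZ} \cdot \SDPVal - o(1)$.

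The main technical obstacle is controlling the deviation of the rounded assignment from a true bisection. This is where the high Lasserre level is needed: by the ``globally low correlation'' property driving the Raghavendra--Tan rounding, the rounded indicators $\mathbf{1}[x_i = 1]$ are essentially pairwise uncorrelated on average, so $\Var{\sum_i \mathbf{1}[x_i=1]} = o(n^2)$ and the actual number of trues lies within $o(n)$ of $n/2$ with high probability. Fixing this by swapping $o(n)$ pairs of variables changes the objective by $o(n)$, which is absorbed into the $\epsilon$-slack since the SDP optimum, and hence the true optimum, is $\Theta(n)$ on any nontrivial instance. Combining the three steps yields an $(\alpha_{LLZ} - \epsilon)$-approximation running in time $n^{\poly(1/\epsilon)}$, matching the approximation threshold of unconstrained \maxtwosat{} as claimed.
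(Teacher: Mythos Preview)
Your overall framework---high Lasserre levels for low global correlation, LLZ/Austrin rounding, then swap-based repair---matches the paper's.  But you introduce an unnecessary complication and back it with an unjustified claim.

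The paper's key observation is that the optimal \maxtwosat{} rounding of \cite{lewin02improved,austrin07balanced} already takes the form $r_i = c \cdot \mu_i$ for a fixed constant $c$, i.e., it is a \emph{linear} bias selection in the sense of Section~\ref{sec:first_improvement}.  Since the SDP bisection constraint forces $\sum_i \mu_i = 0$, one automatically has $\sum_i r_i = 0$ and the expected assignment is balanced with no additional effort.  Your global shift $\tau$ is therefore superfluous: binary search would simply return $\tau = 0$.

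More seriously, your assertion that the per-clause LLZ ratio is at least $\alpha_{LLZ}$ ``regardless of the choice of $\tau$'' is not justified and is likely false.  In \maxtwosat{} a literal may be a negated variable, so shifting every variable bias by $+\tau$ shifts positive-literal biases by $+\tau$ and negative-literal biases by $-\tau$; this is not a symmetry of the per-clause configuration space, and the Austrin analysis in \cite{austrin07balanced} is carried out for the specific choice $r_i = c \cdot \mu_i$, not for arbitrary affine perturbations of it.  The reason the argument goes through is precisely that linearity of the LLZ rounding makes $\tau = 0$ already balanced, so one never leaves the regime where the $\alpha_{LLZ}$ analysis applies.  Once you replace your shift step by this observation, the remainder of your outline (variance bound from low correlation, $o(n)$ swaps) coincides with the paper's Lemmata~\ref{lemma:approx} and~\ref{lemma:balance_loss} applied verbatim to \balmaxtwosat{}.
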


This may seem surprising at first, but boils down to what seems to be
a lucky coincidence: the rounding scheme of
\cite{raghavendra12approximating} for the semidefinite program uses a
certain variant of random hyperplane rounding.  We generalize this to
a certain family of random hyperplane-based roundings, and it turns
out that the optimal rounding scheme for \maxtwosat{} already
comes from this family.

Given these results, we think it is likely that \maxbisection{} is
essentially as easy to approximate as \maxcut{}, and make the
following conjecture.

\begin{conjecture}
For every $\epsilon > 0$, $\maxbisection{}$ is approximable in
polynomial time within a factor $\alpha_{GW}-\epsilon$.
\end{conjecture}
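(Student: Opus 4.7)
The plan is to close the remaining $\approx 10^{-3}$ gap by enlarging the rounding family used in Theorem~\ref{thm:main} and augmenting it with a cheap rebalancing step. I would keep essentially the same SDP relaxation (augmented with expectation and second-moment bisection constraints), since such relaxations are already tight for the \maxcut{} integrality gap under the Unique Games Conjecture, and I would concentrate all the effort on the rounding and repair procedure.

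First, I would broaden the $3$-parameter hyperplane family used in this paper into a strictly richer parameterized class---for instance, combining biased hyperplane rounding with variable-dependent thresholds determined by the SDP bias vector, or using several independent hyperplanes followed by a weighted majority. The aim is to exhibit a family that, on each edge with SDP inner product $\rho$, achieves a cut probability of at least $\alphagw \cdot (1-\rho)/2$, matching the Goemans--Williamson guarantee edge by edge and not only in the worst global average.

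Second, I would show that for any near-optimal SDP solution, a suitable choice from this family produces a cut whose imbalance $\bigl||S|-|\overline{S}|\bigr|$ is $o(n)$ with high probability, via a concentration or invariance-principle argument applied to the sum of rounded Gaussian projections. A local repair step then swaps $O(\epsilon n)$ carefully selected vertices (low-degree or low-contribution ones) between the two sides to restore exact balance while losing at most an $\epsilon$-fraction of the cut weight; iterating the analysis gives the $\alphagw - \epsilon$ ratio claimed.

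The hard part will be the tension between the two requirements. The worst \maxcut{} instances for the $\alphagw$ barrier are realised by configurations in which essentially all edges sit at the critical angle $\theta^{*} \approx 2.33$, and such configurations do not obviously produce rounded cuts that concentrate near balance; indeed, they are precisely the instances for which naive hyperplane rounding can be maximally skewed. Overcoming this will likely require either (i)~strengthening the SDP with a constant number of \la{} levels, as in \cite{raghavendra12approximating}, to force every hard instance to blend in ``balance-producing'' configurations, or (ii)~allowing the rounding's bias parameter to be chosen \emph{after} inspecting the SDP vectors so that the expected imbalance is identically zero while the per-edge cut probability remains at least $\alphagw$ times the SDP contribution. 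Given that the present work already attains a ratio within $10^{-3}$ of $\alphagw$ using only a $3$-dimensional parameter search, either route seems plausible; turning the additive loss into a genuine $o(1)$ in $\epsilon$ is the step that appears to require a substantively new idea.
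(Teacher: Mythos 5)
This statement is a \emph{conjecture}, not a theorem: the paper explicitly leaves it open (``The most obvious open question is to close this small gap. We conjecture that there is an $(\alphagw-\epsilon)$-approximation algorithm\dots''). The paper offers no proof, and your proposal is a research program rather than a proof --- you acknowledge as much when you write that the final step ``appears to require a substantively new idea.'' So there is nothing here that can be checked against the paper's argument; what can be checked is whether your plan fills the gap the authors identify, and it does not.

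The concrete missing step is exactly your second one: the claim that some rounding from an enlarged family simultaneously (a) cuts every edge with probability at least $\alphagw\cdot(1-\rho)/2$ and (b) produces a cut whose imbalance is $o(n)$ with high probability. The paper points out that plain random hyperplane rounding already achieves (a), and that if it also achieved (b) on the conditioned (uncorrelated) SDP solutions, the conjecture would follow immediately --- so (b) is the entire difficulty, and it is not a concentration argument one can simply ``apply.'' The $\eps$-uncorrelation guarantee of Lemma~\ref{lem:cor} controls the correlation coefficients $\iprod{\obw_i}{\obw_j}$, which yields balance concentration (Lemma~\ref{lemma:almost_balanced}) only for the thresholded roundings applied in the space orthogonal to $\bv_0$; the roundings that attain the per-edge GW ratio on the critical-angle configurations do not live in that family, and Section~4.1 exhibits an explicit distribution over configurations showing that the linear-bias subfamily is capped near $\alphalinearlimit$, while the paired-bias family tops out near $\alphasecondfull$. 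Your suggestion (ii), choosing the bias adaptively after seeing the vectors, is precisely what the limitation in \eqref{eqn:mixed_performance} is designed to rule out for linear biases. Separately, your appeal to the SDP being ``tight for the \maxcut{} integrality gap under UGC'' does not transfer: what is needed is an integrality-gap or rounding statement for the \emph{bisection-constrained} relaxation, which is not known to match $\alphagw$. As written, the proposal restates the obstacle rather than overcoming it.
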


\subsection{Techniques and Comparison to Previous Work}

All approximation algorithms for \maxbisection{} to date use a
semidefinite programming relaxation similar to the Goemans-Williamson
algorithm for \maxcut{}.  In its standard form, each vertex $i$ of the
graph is associated with a high-dimensional unit vector $\bv_i$ simulating
the integral values $\pm 1$, and the goal is to choose these vectors in
such a way that pairs of vertices connected by edges are as far apart as
possible. To be more concrete the goal is to maximize the ``objective value''
of the vectors defined as $\sum_{ij\in E} (1-\iprod{\bv_i}{\bv_j})/2$.  
There is also an additional balance constraint encoding that the vectors somehow correspond to a bisection as opposed to an arbitrary cut (this balance constraint is not important for the high-level discussion of this section).
An (essentially) optimal set of such vectors can be found in polynomial time, and the next step is
to ``round'' these vectors to a bisection of the vertices. 

The vast majority of SDP-based approximation algorithms use a
variant of \emph{random hyperplane} rounding, pioneered by Goemans and
Williamson \cite{goemans95improved}.  For \maxcut{}, this works
as follows: a random hyperplane passing through the origin is chosen.
This hyperplane naturally induces a cut of the graph: each side
of the cut is defined by the vertices whose vector lies on one side of
the hyperplane. Analyzing the resulting cut boils down to a simple local
argument: one can show that each edge of the graph goes across the cut with probability
at least $\alphagw$ times its contribution to the objective value of the
vectors.

It is helpful to see why the same rounding does not work for \maxbisection{}, i.e.,
why the resulting partition is not necessarily a bisection.
Although each vertex has probability $1/2$ of landing on each side of the cut,
these probabilistic events (for different vertices) are \emph{not} independent.  In fact for some vector
solutions they are highly correlated. In other words although the expected size
of each side of the cut is $|V|/2$, the cut may in general be very unbalanced with high probability.

Most of the previous algorithms have coped with this by coming up with
more sophisticated variants of the random hyperplane rounding that
do produce a partition that is (close to) a bisection.  On the other
hand, the most recent work~\cite{raghavendra12approximating}
took a somewhat different approach.
They use a family of stronger \iSDP\ relaxations derived by the so-called
\la\ lift-and-project system, whose vector solutions enjoy nice
structural properties and which can be rounded to yield an improved approximation
ratio.  As this is not the main contribution of our work, we
only briefly comment on the \la\ lift-and-project system and how it
derives the \iSDP\ that we utilize, in Section~\ref{sec:sdp-prelims}.
%\ref{sec:prelims}. 
%% The
%% reader may only want to know that our result, along with the result of
%% Raghavendra and
%% Tan~\cite{raghavendra12approximating}, complement a recent series of \la\ {\iSDP}-based approximation schemes for which no other good approximation algorithms are known~\cite{chlamtac07approximation,chlamtac08improved,guruswami11lasserre}.

The key idea of \cite{raghavendra12approximating} is that using an
operation known as \emph{conditioning},
the \la\ lift-and-project system allows us to obtain solutions to the
standard \iSDP\ in which a typical pair of vertices has very low
correlation.  Therefore, it essentially follows by Chebyshev's
inequality that the size of each side of the partition produced by
hyperplane rounding will be concentrated around $|V|/2$.
Once such a
nearly-balanced partition is found it can be adjusted to a
bisection for a small additive loss in the number of
edges cut.

There is, however, a major caveat hiding in the word ``correlation''
in the paragraph above.  
%% The issue is that the \la\ \iSDP\ has a
%% special vector $\bv_0$ (think of this vector as identifying one of the
%% sides of the cut, thereby breaking the symmetry between the two
%% sides), and the ``low correlation'' mentioned above between two
%% vertices $i$ and $j$ refers to the inner product
%% $\iprod{\bw_i}{\bw_j}$, where $\bw_i$ and $\bw_j$ denote the parts of
%% $\bv_i$ and $\bv_j$ that are orthogonal to $\bv_0$.  
%% Low correlation in this sense
There are many possible ways of defining what it means for the vectors
to have ``low correlation'', and the precise notion used in the algorithm of \cite{raghavendra12approximating}
results in rather severe constraints on the rounding algorithm that
can be applied to the vectors.  In particular plain vanilla random
hyperplane rounding still does not produce a cut that is close to a
bisection; if it did, we would already have an $(\alpha_{GW}-\epsilon)$-algorithm!

In their $\RTratio$-algorithm, \cite{raghavendra12approximating} used
\emph{thresholded} random hyperplane rounding in the space orthogonal
to $\bv_0$.  In this rounding, each vertex $i$ has a threshold $t_i$ which adjusts the probability that vertex $i$ falls on a given side of the cut (by shifting the hyperplane by $t_i$ along its normal when looking at which side of the hyperplane $\bv_i$ lies).
How one chooses these thresholds $t_i$ is the key to both the balance
and the objective value of the resulting cut. Using a certain natural
choice of thresholds, \cite{raghavendra12approximating} show that the
resulting cut is near-balanced while at the same time providing a
good approximation ratio. The main issue that restricts their method
is that their proof that the resulting cut is near-balanced is only
applicable to their particular choice of the thresholds.

The source of our improved approximation ratio is as follows.  First,
we use a stronger notion of what it means for an \iSDP\ solution
to have ``low correlation'', and show that after minor modifications
the techniques of \cite{raghavendra12approximating} can be extended to
produce \iSDP{} solutions that have low correlation under this
stronger definition.  Then, the advantage of this modification is that it
buys us a lot of freedom to choose the thresholds for the
random hyperplane rounding (though plain random hyperplane rounding is still not possible).
This lets us propose a rich family of algorithms all of which would result in
an near-balanced cut.

As it turns out, the family of roundings algorithms is still quite
restrictive.  While a simple modification to the choice of thresholds from
\cite{raghavendra12approximating} gives an improved ratio of
$\alphalinear$, improving this to $\alphabest$ is more
challenging.  As opposed to all previous similar rounding algorithms
that we are aware of, our procedure for choosing thresholds has a
combinatorial flavor.  This results in an interesting side effect that
we think is worth mentioning: two vertices $i$ and $j$ whose vectors
$\bv_i$ and $\bv_j$ are equal, may be treated completely differently
by the rounding algorithm, i.e., they may have completely
different probabilities of landing on each side of the cut. 
We are not aware of any previous rounding algorithms where
this occurs.

The extra flexibility that comes from this combinatorial component makes the approximation ratio harder to analyze.
In previous algorithms, the probability that an edge $ij$ is cut
only depends on the pairwise inner products between the three vectors $\bv_0, \bv_i, \bv_j$.  Thus computing
the approximation ratio boils down to minimizing a certain function
in three variables. In our algorithm however, the rounding thresholds $t_i$ and $t_j$ of
the vertices $i$ and $j$ -- and hence the probability that the edge is cut --
are not determined by these three vectors.

However, we are able to analytically remove this uncertainty and
reduce the problem of computing the approximation ratio to again
minimizing a certain function in the three inner products.
Unfortunately it is not possible to compute this minimum analytically,
and we resort to a computer assisted proof.  In particular, using a
computer program we can break the space of all possible values for the
inner products of $v_i, v_j, v_k$ into small cubes and then lowerbound
the approximation ratio of the algorithm for each such cube. The
approach is in the same spirit as those in
\cite{zwick02computer,sjogren09rigorous} and produces a rigorous
(albeit very large) proof of Theorem~\ref{thm:main}. The details of
the computer assisted proof are presented in
Section~\ref{sec:num-proofs}.

%% We have also written a Matlab program 
%% that tries to find the approximation ratio using Matlab's builtin optimizer. This \emph{does not} produce a rigorous proof but provides evidence towards a stronger lowerbound
%% ($\alphasecondfull$ to be specific) for our algorithm's approximation ratio.
%% We elaborate on the details of the numerical approach in Section~\ref{sec:numerics}.

Our results for \balmaxtwosat{} are easier: the best algorithm
for \maxtwosat{} is already based on a thresholded random hyperplane
rounding and, luckily for us, chooses thresholds in such a way that
the resulting assignment is expected to be close to balanced. In other words
the optimal rounding for \maxtwosat{} is in our family of rounding algorithms
and can be used for \balmaxtwosat{}.

\subsection{Organization}

The rest of the paper is organized as follows.
Section~\ref{sec:prelims} contains some preliminaries and sets up some
notation.  In Section~\ref{sec:general} we describe a fairly general
family of \maxbisection{} algorithms. In fact the algorithm of
\cite{raghavendra12approximating} is the simplest possible algorithm in
our family. We then present a relatively simple improvement over
\cite{raghavendra12approximating} in Section~\ref{sec:first_improvement}.
Then we give our best algorithm in Section~\ref{sec:pairing}, resulting
in our final bound of $\alphabest$.  In Section~\ref{sec:2sat}
we note that the algorithm of Section~\ref{sec:general} can be applied
to \balcsp{P} problems in general, and in particular to
\balmaxtwosat{} for which we immediately obtain Theorem~\ref{thm:2sat}. 
We elaborate further on the details of our computer generated proof in Section~\ref{sec:num-proofs}.
We conclude with some remarks in Section~\ref{sec:conclusions}.

%%% Local Variables:
%%% mode: latex
%%% TeX-master: "bisection.tex"
%%% End:

\section{Preliminaries}\label{sec:prelims}

For notational convenience we work with unweighted graphs throughout
the paper, but we note that our algorithm and its analysis applies
verbatim to the weighted case as well.
Given a graph $G=(V,E)$ the \maxbisection{} problem can be formulated as an
integer program as follows.  To each vertex $i \in V$ we associate a
variable $x_i \in \{-1,1\}$, with the two values representing the two
different pieces of the bisection.  The $0$-$1$ indicator of whether
an edge $ij \in E$ is cut can then be written as $\frac{1-x_ix_j}{2}$.  We define
\begin{align*}
  \Val(x) &= \frac{1}{2} \sum_{ij \in E} (1-x_ix_j) \in [0,1]
\end{align*}
to be the number of edges cut by a partition $x \in
\{-1,1\}^n$.  The \maxbisection{} problem is then
\begin{equation}
  \label{eq:max bisection IP}
  \begin{aligned}
    \max \quad & \Val(x) \\
    \textup{s.t.} \quad & \sum_{i \in V} x_i = 0 \\
    & x_i \in \{-1,1\} & \forall   i \in V.
  \end{aligned}
\end{equation}
We denote by $\Opt(G)$ the optimum of the above program, i.e., the number of
edges cut by the optimal bisection.

\subsection{Semidefinite Relaxation and the Lasserre System}
\label{sec:sdp-prelims}

By replacing the $x_i$'s with high dimensional unit vectors $\bv_i$ and their
products by the corresponding inner products, we
obtain the basic \iSDP{} relaxation for \maxbisection{}. % that is the starting point for all previous algorithms. 
For a set of unit
vectors $\bv_1, \ldots, \bv_n$, we write
\[\SDPVal(\{\bv_i\}) = \frac{1}{2} \sum_{ij \in E} \left(1-\iprod{\bv_i}{\bv_j}\right)\]
for the objective function of the vectors.  The basic
SDP relaxation is then
\begin{align}%\label{equa: standard SDP mBisection}
  \max \quad & \SDPVal(\{\bv_i\}) \notag \\
  \textup{s.t.} \quad & \norm[2]{\sum_{i \in V} \bv_i}^2 = \iprod{\sum_{i \in V} \bv_i}{\sum_{i \in V} \bv_i} = 0 \notag\\
  & \iprod{\bv_i}{\bv_i} = 1 & \forall   i \in V. \notag
\end{align}
To strengthen the standard \iSDP{} for \maxbisection{} one can add variables $\bv_S$ for
any small set $S \subset V$ ($|S| \leq \ell$). This variable will
simulate $\prod_{i\in S} x_i$, i.e., the parity of the number of
vertices $i\in S$ on one side of the cut. If one adds a few intuitive
consistency requirements on these variables one gets an
\iSDP{} relaxation which is equivalent to the so-called level-$\ell$
\la{} strengthening of the standard \iSDP{}.
\begin{align}\label{equa: LA mBisection}
  \max \quad & \SDPVal(\{\bv_i\}) \notag \\
%  \textup{s.t.} \quad & \norm[2]{\sum_{i \in V} \bv_{S \symd \set{i}}}^2 =0 & \forall S \subseteq V, |S| < \ell \notag\\
%  \textup{s.t.} \quad & \iprod{\sum_{i\in V}\bv_{S \symd \set{i}}}{\sum_{i\in V}\bv_{\set{i}}} =0 & \forall S \subseteq V, |S| < \ell \notag\\
  \textup{s.t.} \quad & \iprod{\bv_{\emptyset}}{\sum_{i\in V}\bv_{S \symd \set{i}}} =0 & \forall S \subseteq V, |S| < \ell \notag\\
  & \iprod{\bv_{S_1}}{\bv_{S_2}} = \iprod{\bv_{S_3}}{\bv_{S_4}}& \forall S_1, \ldots, S_4 \subseteq V,\; |S_1|, \ldots, |S_4| \leq \ell,\; S_1 \symd S_2 = S_3 \symd S_4\notag\\
  & \iprod{\bv_\emptyset}{\bv_\emptyset} = 1 \notag
\end{align}
We write $\SDP_{\ell}(G)$ for the optimum of this semidefinite
program. It is not hard to check that this this is valid relaxation for \maxbisection{}, i.e., for all $\ell$, $\SDP_{\ell}(G) \geq \Opt(G)$.

The parameter $\ell$ for us will be a fixed constant that we will choose later. Note that the above program can
be solved in time $n^{O(\ell)}\in \poly(n)$ using semidefinite programming.
We will use $\bv_i$ as a shorthand for $\bv_{\set{i}}$ and $\bv_0$ as a shorthand
for $\bv_\emptyset$.

We note that the above program enjoys many nice properties including a
probabilistic interpretation involving the so-called ``local distributions'',
however as these are not the main focus of the current work we refer
the interested reader to \cite{lasserre02explicit} and \cite{chlamtac11convex}.
We do use the follwing property of the program however.
The vectors $\bv_i$ satisfy the so-called triangle inequalities.
In particular, if $\ell \ge 2$ for any
three vectors $\bu_0 = \pm \bv_0$, $\bu_1, \bu_2 \in \{\pm \bv_1, \ldots,
\pm \bv_n\}$ the following inequality holds:
\begin{equation}
  \label{eq:l-2-2 triangle ineq}
  \norm[2]{\bu_1 - \bu_2}^2 \leq \norm[2]{\bu_1 - \bu_0}^2 + \norm[2]{\bu_0 - \bu_2}^2.
\end{equation}
When analyzing the algorithm, the relevant quantities turn out to be
the pairwise inner products $\iprod{\bv_i}{\bv_0}$,
$\iprod{\bv_j}{\bv_0}$, and $\iprod{\bv_i}{\bv_j}$.
For this reason, we introduce shorthand
notation $\mu_i := \iprod{\bv_i}{\bv_0}$ and $\rho_{ij} := \iprod{\bv_i}{\bv_j}$.
As the $\bv$'s are unit vectors the inequalities \eqref{eq:l-2-2 triangle ineq} are
equivalent to the following inequalities for every $i, j \in [n]$
\begin{equation}
  \label{eq:triangle ineq}
  \begin{aligned}
    \mu_i + \mu_j + \rho_{ij} & \ge -1 &\;\;\; \mu_i - \mu_j - \rho_{ij} & \ge -1\\
    -\mu_i + \mu_j - \rho_{ij} & \ge -1 &\;\;\; -\mu_i - \mu_j + \rho_{ij} & \ge -1.
  \end{aligned}
\end{equation}

This motivates the following definition.

\begin{definition}[Configuration]
  \label{def:conf}
  We denote by $\Conf \subseteq [-1,1]^3$ the polytope defined by
  \eqref{eq:triangle ineq} together with the constraints $\mu_i,
  \mu_j, \rho_{ij} \in [-1,1]$.  A tuple $(\mu_1, \mu_2, \rho) \in
  \Conf$ is called a \emph{configuration}.
\end{definition}

Typical rounding schemes round the
vectors $\bv_i$ by considering their projections on a random vector.
However, while this produces a cut that is balanced in expectation,
it might not be close to balanced with high probability as vertices
might be correlated. 
One of the main ideas in \cite{raghavendra12approximating} is the notion of vectors with low global correlation.  There are many
possibilities for such a notion; \cite{raghavendra12approximating} introduce a notion called $\alpha$-independence.  For our algorithm, we need the following stronger definition.
\begin{definition}[$\eps$-uncorrelated \iSDP{} solution]
  \label{def: eps uncorrelated sdp solutions}
  Let $\bv_0, \ldots, \bv_n$ be a vector solution.  Write $\bw_i =
  \bv_i - \iprod{\bv_0}{\bv_i}\bv_0$ for the part of $\bv_i$ that is
  orthogonal to $\bv_0$, and $\obw_i = \bw_i/\|\bw_i\|$.
  Then, $\bv_0, \ldots, \bv_n$ is \emph{$\epsilon$-uncorrelated} if 
  \[\Exp[i,j \in V]{|\iprod{\obw_i}{\obw_j}|} \leq \epsilon.\]
\end{definition}
For the interested reader that is familiar with the probabilistic
interpretations of the \la{} hierarchy the quantity
$\iprod{\obw_i}{\obw_j}$ precisely equals the \emph{correlation
  coefficient} between the variables $x_i$ and $x_j$.  
In comparison, $\alpha$-independence used in
\cite{raghavendra12approximating} is defined in terms of the
\emph{mutual information} of the same variables, which is within a
quadratic factor of their \emph{covariance}, $\iprod{\bw_i}{\bw_j}$.

\subsection{Normal Distributions}
\label{sec:gaussian prelims}

Throughout the paper, we write $\phi(x) =
\frac{1}{\sqrt{2\pi}}e^{-x^2/2}$ for the density function of a
standard normal random variable, $\Phi(x) = \int_{y = -\infty}^{x}
\phi(y) \ud y$ for its CDF, and $\Phi^{-1}: [0,1] \rightarrow
    [-\infty, \infty]$ for the inverse of $\Phi$.
We also make use of the following standard fact about projections of
Gaussians onto vectors.

\begin{fact}\label{fact: projection on gaussian vectors}
  Let $\bu_1, \ldots, \bu_t \in \R^n$, $\boldg$ an $n$-dimensional standard Gaussian
  vector, and $z_i = \iprod{\bu_i}{\boldg}$.  Then $z_1,
  \ldots, z_t$ are jointly Gaussian random variables with expectation
  $0$ and covariances $\Cov[z_i, z_j] = \iprod{\bu_i}{\bu_j}$.
\end{fact}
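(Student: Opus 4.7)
The plan is to verify the three claimed properties of the random vector $(z_1, \ldots, z_t)$ in turn: joint Gaussianity, zero mean, and the covariance formula. The proof is entirely routine and rests on linearity of expectation and the defining properties of a standard Gaussian vector, namely that its coordinates $g_1, \ldots, g_n$ are i.i.d.\ $\mathcal{N}(0,1)$.

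First I would establish joint Gaussianity by appealing to the standard characterization: a random vector is jointly Gaussian if and only if every linear combination of its coordinates is (univariate) Gaussian. For any $c_1, \ldots, c_t \in \R$, linearity of the inner product gives
\[
\sum_{i=1}^t c_i z_i \;=\; \sum_{i=1}^t c_i \iprod{\bu_i}{\boldg} \;=\; \iprod{\sum_{i=1}^t c_i \bu_i}{\boldg},
\]
which is a fixed linear combination of the i.i.d.\ standard normals $g_1, \ldots, g_n$ and therefore Gaussian. Hence $(z_1, \ldots, z_t)$ is jointly Gaussian. Taking $c_i = 1$ for a single $i$ and $0$ otherwise, the mean is $\Exp{z_i} = \iprod{\bu_i}{\Exp{\boldg}} = 0$ since $\Exp{\boldg} = 0$ coordinate-wise.

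Next I would compute the covariance. Since both means are zero, $\Cov[z_i, z_j] = \Exp{z_i z_j}$, and expanding in the standard basis,
\[
\Exp{z_i z_j} \;=\; \Exp{\Bigl(\sum_{k} u_{i,k} g_k\Bigr)\Bigl(\sum_{\ell} u_{j,\ell} g_\ell\Bigr)} \;=\; \sum_{k,\ell} u_{i,k} u_{j,\ell}\, \Exp{g_k g_\ell}.
\]
Using $\Exp{g_k g_\ell} = \delta_{k\ell}$ collapses the double sum to $\sum_k u_{i,k} u_{j,k} = \iprod{\bu_i}{\bu_j}$, as required.

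There is no real obstacle here: the entire argument consists of pushing the inner product through expectations and invoking the independence of the coordinates of $\boldg$. The only point worth a moment's care is citing the ``every linear combination is Gaussian'' characterization for the joint Gaussianity claim, rather than trying to manipulate densities directly in $\R^t$, which would be more cumbersome and require checking that the covariance matrix may be singular (e.g.\ when the $\bu_i$ are linearly dependent).
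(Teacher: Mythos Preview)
Your proof is correct. The paper itself does not prove this statement: it is labeled a ``Fact'' and cited as standard, so there is no approach in the paper to compare against.
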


We also need notation for the CDF of the bivariate normal distribution.

\begin{definition}\label{def:Gamma}
  Let $\trho \in [-1,1]$.  We define $\Gamma_{\trho}: [0,1]^2 \rightarrow
  [0,1]$ by
  \[\Gamma_{\trho}(x, y) = \Pr\left[X \le \Phi^{-1}(x) \wedge Y \le \Phi^{-1}(y)\right],\]
  where $X$ and $Y$ are jointly normal random variables with mean
  $0$ and covariance matrix $\left(\begin{array}{cc} 1 & \trho \\ \trho
  & 1
  \end{array}\right)$.
\end{definition}

The following parametrization of the $\Gamma$ function is convenient when analyzing our algorithms. 
The motivation will become clear in Section~\ref{sec:analysis of ratio}.

\begin{definition}
  \label{def:Lambda}
  For $\trho \in [-1,1]$, recall the definition of $\Gamma_{\trho}:
  [0,1]^2 \rightarrow [-1,1]$ and define
  $\Lambda_{\trho}: [-1,1]^2 \rightarrow [-1,1]$ as
  \[
  \Lambda_{\trho}(r_1, r_2) = 2 \Gamma_{\trho}\left(\frac{1-r_1}{2}, \frac{1-r_2}{2}\right) + \frac{r_1+r_2}{2}.
  \]
\end{definition}

We now state three lemmata about $\Gamma_{\trho}$ that turn out to be
useful for us.  For completeness, proofs can be found in
Appendix~\ref{sec:gaussian proofs}.

\begin{lemma}
  \label{lemma:gammasymmetry}
  For every $\trho \in [-1,1]$, $q_1, q_2 \in [0,1]$, we have
  \[\Gamma_{\trho}(1-q_1, 1-q_2) = \Gamma_{\trho}(q_1, q_2) + 1 - q_1 - q_2.\]
\end{lemma}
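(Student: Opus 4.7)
The plan is to prove the identity by combining two elementary facts: the odd symmetry of the inverse normal CDF, and the symmetry of the bivariate normal distribution under negation of both coordinates, and then finish with a short inclusion--exclusion calculation.

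First I would unpack the definition. Let $(X,Y)$ be jointly normal with mean $0$, unit variances, and covariance $\tilde\rho$, as in Definition~\ref{def:Gamma}. Set $a_i = \Phi^{-1}(q_i)$ for $i=1,2$. Since $\phi$ is even, $\Phi(-t) = 1-\Phi(t)$, and therefore $\Phi^{-1}(1-q_i) = -\Phi^{-1}(q_i) = -a_i$. Hence
\[
\Gamma_{\tilde\rho}(1-q_1, 1-q_2) = \Pr[X \le -a_1 \wedge Y \le -a_2].
\]

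Next I would use the symmetry of the joint distribution: the density of $(X,Y)$ is an even function of $(x,y)$ (since the covariance matrix is unchanged when both signs flip), so $(-X,-Y)$ has the same distribution as $(X,Y)$. Applying this to the event above gives
\[
\Pr[X \le -a_1 \wedge Y \le -a_2] = \Pr[-X \le -a_1 \wedge -Y \le -a_2] = \Pr[X \ge a_1 \wedge Y \ge a_2].
\]
Finally, I would apply inclusion--exclusion to the complementary event:
\[
\Pr[X \ge a_1 \wedge Y \ge a_2] = 1 - \Pr[X < a_1] - \Pr[Y < a_2] + \Pr[X < a_1 \wedge Y < a_2].
\]
The two univariate terms equal $q_1$ and $q_2$, and the last term is exactly $\Gamma_{\tilde\rho}(q_1, q_2)$. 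Combining these gives the claimed identity.

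There is essentially no obstacle here; the only minor subtlety is being careful that the argument holds for all $\tilde\rho \in [-1,1]$, including the degenerate endpoints where $(X,Y)$ has a singular distribution. In those boundary cases the same argument goes through because the reflection $(X,Y) \mapsto (-X,-Y)$ still preserves the distribution (the covariance matrix is invariant under negation of both coordinates), and because $\Pr[X = a_1]= \Pr[Y = a_2] = 0$ so we can freely swap $\le$ and $<$.
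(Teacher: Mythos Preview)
Your proof is correct and follows essentially the same approach as the paper: use $\Phi^{-1}(1-q)=-\Phi^{-1}(q)$, the symmetry $(-X,-Y)\stackrel{d}{=}(X,Y)$ of the bivariate normal, and then inclusion--exclusion. Your explicit remark about the degenerate endpoints $\tilde\rho=\pm 1$ and the interchangeability of $<$ and $\le$ is a nice touch that the paper leaves implicit.
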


\begin{lemma}
  \label{lemma:gamma indep bound}
  For every $\trho \in [-1,1]$, $q_1, q_2 \in [0,1]$, we have
  \[\Gamma_{\trho}(q_1, q_2) \le q_1 q_2 + 2|\trho|.\]
\end{lemma}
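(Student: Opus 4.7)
The plan is to reduce to the independent case $\trho = 0$ and integrate the derivative of $\Gamma_s$ in $s$. When $\trho = 0$ the variables $X, Y$ in Definition~\ref{def:Gamma} are independent, so $\Gamma_0(q_1, q_2) = \Phi(\Phi^{-1}(q_1))\Phi(\Phi^{-1}(q_2)) = q_1 q_2$, and it suffices to prove the two-sided bound $|\Gamma_{\trho}(q_1, q_2) - q_1 q_2| \le 2|\trho|$.

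Setting $a = \Phi^{-1}(q_1)$, $b = \Phi^{-1}(q_2)$, and letting
$$\phi_2(x, y; s) \;=\; \frac{1}{2\pi \sqrt{1-s^2}} \exp\!\left( -\frac{x^2 - 2sxy + y^2}{2(1-s^2)} \right)$$
denote the bivariate normal density with correlation $s$, I would appeal to the classical Plackett/Sheppard identity
$$\frac{\partial}{\partial s} \Gamma_s(q_1, q_2) \;=\; \phi_2(a, b; s).$$
This identity itself follows from the pointwise equality $\partial_s \phi_2 = \partial_x \partial_y \phi_2$, which is a short, direct calculation from the explicit formula above, combined with differentiating under the double integral $\Gamma_s(q_1,q_2) = \int_{-\infty}^a \int_{-\infty}^b \phi_2(x,y;s)\,dy\,dx$; the required integrability and vanishing boundary terms at $-\infty$ are routine. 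Integrating from $0$ to $\trho$ then yields
$$\Gamma_{\trho}(q_1, q_2) - q_1 q_2 \;=\; \int_0^{\trho} \phi_2(a, b; s) \, ds.$$

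To finish, I would drop the exponential factor to obtain $\phi_2(a,b;s) \le \frac{1}{2\pi\sqrt{1-s^2}}$ and compute the integral explicitly:
$$|\Gamma_{\trho}(q_1, q_2) - q_1 q_2| \;\le\; \int_0^{|\trho|} \frac{ds}{2\pi \sqrt{1-s^2}} \;=\; \frac{\arcsin(|\trho|)}{2\pi} \;\le\; \frac{|\trho|}{4} \;\le\; 2|\trho|,$$
where the second inequality uses $\arcsin(x) \le \tfrac{\pi}{2} x$ on $[0,1]$ (verified by noting both sides vanish at $0$ and agree at $1$, while the derivative of the difference, $\tfrac{\pi}{2} - 1/\sqrt{1-x^2}$, changes sign exactly once on $(0,1)$). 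The constant $2|\trho|$ claimed in the lemma is thus rather loose, which is fine for its intended applications.

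The only genuinely delicate step is justifying the derivative identity. For robustness I would also have in mind an alternative derivation that bypasses Sheppard's lemma: writing $Y = sX + \sqrt{1-s^2}Z$ for $Z$ standard normal independent of $X$, one has $\Gamma_s(q_1,q_2) = \int_{-\infty}^a \Phi\!\bigl((b - sx)/\sqrt{1-s^2}\bigr)\,\phi(x)\,dx$, and differentiating the integrand in $s$ and simplifying produces the same bivariate density $\phi_2(a,b;s)$ after integration by parts in $x$. Either route reduces the lemma to the explicit integral above.
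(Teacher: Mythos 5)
Your proof is correct, and it takes a genuinely different route from the paper's. The paper conditions on $X = x$, writes $\Pr[Y \le t_2 \mid X = x] = \Phi\bigl((t_2 - \trho x)/\sqrt{1-\trho^2}\bigr)$, and controls this by $\Phi(t_2)$ plus error terms via two elementary one-dimensional Gaussian anti-concentration estimates (its Lemma~\ref{lem:gaussian conc}), finishing with $\Exp{|X|}=\sqrt{2/\pi}$ and a harmless reduction to $|\trho|\le 1/2$. You instead differentiate in the correlation parameter via the Plackett--Sheppard identity $\partial_s\Gamma_s(q_1,q_2)=\phi_2(a,b;s)$, integrate from $0$ to $\trho$, and bound the bivariate density by $\frac{1}{2\pi\sqrt{1-s^2}}$ (valid since the quadratic form $x^2-2sxy+y^2$ is positive semidefinite for $|s|\le 1$), yielding the two-sided estimate $|\Gamma_{\trho}(q_1,q_2)-q_1q_2|\le \arcsin(|\trho|)/(2\pi)\le |\trho|/4$. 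Your approach buys a substantially sharper constant ($|\trho|/4$ versus $2|\trho|$) and a symmetric lower bound, at the cost of invoking a classical but nontrivial identity whose verification (the pointwise relation $\partial_s\phi_2=\partial_x\partial_y\phi_2$ plus differentiation under the integral) you sketch rather than carry out; the paper's argument is looser but entirely self-contained with one-dimensional estimates. Either bound is ample for the application in Lemma~\ref{lemma:almost_balanced}, where only the $O(|\trho|)$ dependence matters; the endpoint cases $|\trho|=1$ and $q_i\in\{0,1\}$, which you do not explicitly address, are trivial by continuity or directly.
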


\begin{lemma}
  \label{lemma:gamma deriv}
  For every $\trho \in (-1,1), q_1, q_2 \in [0,1]$, we have
  \begin{align*}
    \frac{\deriv}{\deriv q_1} \Gamma_{\trho}(q_1, q_2) &= \Phi\left(\frac{t_2 - \trho t_1}{\sqrt{1-\trho^2}}\right)%\\
%  \frac{\deriv^2 \Lambda_{\trho}}{\deriv r_1^2}(r_1, r_2) &=  -\frac{\trho}{2\sqrt{1-\trho^2}} \cdot \frac{\phi\left(\frac{t(r_2) - \trho t(r_1)}{\sqrt{1-\trho^2}}\right)}{\phi(t(r_1))}
  \end{align*}
  where $t_i = \Phi^{-1}(q_i)$.
\end{lemma}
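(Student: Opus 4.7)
The plan is to differentiate the defining integral for $\Gamma_{\trho}$ and then recognize the resulting expression as a conditional probability for the bivariate normal.

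First I would write the definition explicitly as a double integral
\[
\Gamma_{\trho}(q_1, q_2) = \int_{-\infty}^{t_1(q_1)}\int_{-\infty}^{t_2(q_2)} f_{\trho}(x, y)\, \ud y\, \ud x,
\]
where $t_i(q_i) = \Phi^{-1}(q_i)$ and $f_{\trho}$ is the density of $(X,Y)$, the jointly normal pair with zero mean, unit variances, and covariance $\trho$. Applying the chain rule and the fundamental theorem of calculus with respect to the upper limit $t_1$, together with the inverse function identity $\ud t_1 / \ud q_1 = 1/\phi(t_1)$, gives
\[
\frac{\deriv}{\deriv q_1} \Gamma_{\trho}(q_1, q_2) \;=\; \frac{1}{\phi(t_1)} \int_{-\infty}^{t_2} f_{\trho}(t_1, y)\, \ud y.
\]

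Next I would observe that the right-hand side is precisely the conditional probability $\Pr[Y \le t_2 \mid X = t_1]$, since $f_{\trho}(t_1, y)/\phi(t_1)$ is exactly the conditional density of $Y$ given $X = t_1$. A standard computation (completing the square in the joint density, or invoking the classical regression formula for jointly normal variables) shows that this conditional distribution is normal with mean $\trho t_1$ and variance $1 - \trho^2$. Therefore
\[
\Pr[Y \le t_2 \mid X = t_1] \;=\; \Phi\!\left(\frac{t_2 - \trho t_1}{\sqrt{1 - \trho^2}}\right),
\]
which matches the claimed expression.

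The main work here is really just bookkeeping; the only mild subtlety is legitimizing the differentiation under the integral sign, which is immediate because $f_{\trho}$ is smooth and rapidly decaying (for $\trho \in (-1,1)$ strictly) and $\Phi^{-1}$ is smooth on $(0,1)$. The degenerate boundary cases $q_1 \in \{0,1\}$ need not be addressed since the lemma is only stated on the interior, and the restriction $\trho \in (-1,1)$ keeps the conditional variance $1 - \trho^2$ strictly positive so that the conditional density is well defined. No obstacle is expected beyond these routine smoothness checks.
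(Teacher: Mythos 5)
Your proof is correct and follows essentially the same route as the paper's: both reduce the computation to the fundamental theorem of calculus together with the identity $\frac{\ud}{\ud q}\Phi^{-1}(q) = 1/\phi(\Phi^{-1}(q))$ and the fact that the conditional law of $Y$ given $X = x$ is $\mathcal{N}(\trho x, 1-\trho^2)$. The only cosmetic difference is that the paper folds the conditional-CDF identity into the integrand before differentiating (writing $\Gamma_{\trho}(q_1,q_2) = \int_{-\infty}^{t_1}\phi(x)\Phi\bigl((t_2-\trho x)/\sqrt{1-\trho^2}\bigr)\ud x$), whereas you differentiate the raw double integral first and identify the conditional probability afterwards.
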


%%% Local Variables:
%%% mode: latex
%%% TeX-master: "bisection.tex"
%%% End:

\section{A Family Of Bisection Algorithms}\label{sec:general}

In this section we describe a general family of rounding algorithms
for \maxbisection{}.  We first describe the following lemma that we need for our algorithm. 

\begin{lemma}
  \label{lem:cor}
  There is an algorithm which, given an integer $t$ and a graph $G =
  (V,E)$, runs in time $n^{O(t)}$ and outputs a set of unit vectors $\bv_0,
  \ldots, \bv_n$ such that
  \begin{enumerate}
  \item $\SDPVal(\{\bv_i\}) \ge \Opt(G) - 10 t^{-1/12}$,
  \item \label{lem:cor:balance} $\sum_{i} \iprod{\bv_0}{\bv_i} = 0$,
  \item The triangle inequalities \eqref{eq:triangle ineq} are satisfied,
  \item \label{lem:cor:uncorr} The vectors $\bv_0, \bv_1,\ldots, \bv_n$ are $t^{-1/4}$-uncorrelated.
  \end{enumerate}
\end{lemma}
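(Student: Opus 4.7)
The plan is to adapt the conditioning framework of Raghavendra and Tan to the stronger $\epsilon$-uncorrelation notion of Definition~\ref{def: eps uncorrelated sdp solutions}. Begin by solving the level-$\ell$ Lasserre SDP \eqref{equa: LA mBisection} for $\ell = \Theta(t)$, which yields vectors with $\SDPVal \ge \Opt(G)$, automatically satisfying the balance constraint (property~2 is the Lasserre constraint with $S=\emptyset$) and the triangle inequalities (property~3 follows from Lasserre at level~$\ge 2$).

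Next, following Raghavendra--Tan, pick a uniformly random $S \subseteq V$ of size $k = \Theta(t^{5/6})$ together with a Lasserre-consistent assignment $\tau$, and condition the Lasserre solution on $X_S = \tau$. The conditioned vectors remain a feasible level-$(\ell-k)$ Lasserre solution (still of level at least $2$), so the triangle inequalities are preserved, and the balance constraint for the conditioned solution is implied by the original level-$\ell$ constraint $\iprod{\bv_\emptyset}{\sum_i \bv_{S\symd\{i\}}}=0$. By RT's Lemma~\RTMutualInfLemma{} (an entropy chain-rule argument on the Lasserre pseudo-distribution), averaging over $S$ and $\tau$ yields $\Exp[S,\tau]{\Exp[i,j]{I(X_i;X_j\mid X_S=\tau)}} = O(1/k)$, while the conditional $\SDPVal$ is a martingale bounded in $[0,1]$. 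A two-sided Markov argument then produces a specific conditioning for which the average pairwise mutual information is $O(1/k)$ and $\SDPVal$ drops by at most $O(1/\sqrt{k}) = o(t^{-1/12})$.

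The main obstacle is converting the mutual information bound into the stronger correlation-coefficient bound. By RT's Lemma~\RTCovarMlInfLemma{} (essentially Pinsker's inequality applied to the local distribution of $(X_i,X_j)$), $|\iprod{\bw_i}{\bw_j}|\le O(\sqrt{I(X_i;X_j)})$; combined with Cauchy--Schwarz, this yields $\Exp[i,j]{|\iprod{\bw_i}{\bw_j}|} = O(1/\sqrt{k})$. To pass to the correlation coefficient $|\iprod{\obw_i}{\obw_j}| = |\iprod{\bw_i}{\bw_j}|/(\|\bw_i\|\|\bw_j\|)$, I would pre-round every vertex with $\|\bw_i\|^2 = 1-\mu_i^2 < \delta^2$ to the vector $\sgn(\mu_i)\,\bv_0$, for a threshold $\delta = \Theta(t^{-1/12})$. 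Since such a vertex is nearly fully decided in the local distribution ($\|\bv_i^{\text{new}}-\bv_i\| = O(\delta)$), the aggregate SDP loss summed over edges incident to pre-rounded vertices is $O(\delta) = O(t^{-1/12})$, and the balance constraint is perturbed by at most $O(|V|\delta^2)$, which can be restored by flipping a small number of additional vertices at negligible additional SDP cost. After pre-rounding, $\|\bw_i\|\|\bw_j\| \ge \delta^2$ for every surviving pair, so
\[
\Exp[i,j]{|\iprod{\obw_i}{\obw_j}|} \;=\; O\!\left(\frac{1}{\sqrt{k}\,\delta^{2}}\right) \;=\; O(t^{-1/4}),
\]
exactly as required. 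The most delicate technical step will be verifying that pre-rounding commutes cleanly with the prior conditioning (so that the averaged correlation bound survives restricting to unrounded vertices) and with the balance restoration, and that all hidden $O(1)$ factors can be absorbed into the constant $10$ in the $\SDPVal$ loss bound.
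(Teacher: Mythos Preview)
Your high-level strategy coincides with the paper's: invoke Raghavendra--Tan's conditioning lemma to bound the average mutual information, pass to a covariance bound $\Exp[i,j]{|\iprod{\bw_i}{\bw_j}|}$ via Pinsker (their Lemma~\RTCovarMlInfLemma), use an averaging argument to fix a specific conditioning, and then deal separately with vertices whose $\|\bw_i\|$ is small in order to upgrade the covariance bound to a correlation-coefficient bound. The parameterization you chose ($k=\Theta(t^{5/6})$, $\delta=\Theta(t^{-1/12})$) is different from the paper's ($k=t$, same $\delta$) but leads to the same exponents.

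The divergence, and the place where your argument develops a real gap, is in the treatment of the small-$\|\bw_i\|$ vertices. You propose to pre-round each such vertex to $\sgn(\mu_i)\bv_0$, which sets $\mu_i \leftarrow \pm 1$ and hence perturbs $\sum_i \mu_i$; you then say you will ``restore balance by flipping a small number of additional vertices at negligible additional SDP cost.'' This step is not justified and in fact is problematic on two counts. First, item~\ref{lem:cor:balance} requires $\sum_i \mu_i = 0$ \emph{exactly}, and negating a vertex changes the sum by $-2\mu_i$; there is no reason a subset of such discrete increments can hit $0$. Second, even if it could, flipping a vertex $i$ changes the contribution of every incident edge by $|\rho_{ij}|$, so the SDP loss is governed by the degrees of the flipped vertices, not by $\delta$; without degree control this is not $O(t^{-1/12})$.

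The paper sidesteps all of this with a cleaner fix: for each vertex with $\|\bw'_i\| < t^{-1/12}$ it keeps the $\bv_0$-component intact and replaces only the orthogonal part $\bw'_i$ by a fresh vector $\bw^*_i$, orthogonal to everything else, of the \emph{same} length. Then $\mu_i$ is unchanged (so balance is preserved exactly, for free), the correlation $\iprod{\obw_i}{\obw_j}$ becomes $0$ for every modified vertex, the SDP value changes by at most $\|\bw'_i\|\cdot\|\bw'_j\| \le t^{-1/12}$ on each affected edge, and the triangle inequalities for a modified pair reduce to $(1\pm\mu_i)(1\pm\mu_j)\ge 0$, which hold trivially. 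This is the key device you are missing; once you use it, the ``balance restoration'' problem disappears and the remaining bookkeeping is exactly as you outlined.
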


Lemma~\ref{lem:cor}, which we prove in Section~\ref{sec:lem cor proof}
below, is analogous to Theorem~\RTAlphaIndepThm{} in the full version of
\cite{raghavendra12approximating}.  The main difference is in item
\ref{lem:cor:uncorr}.  As mentioned in Section~\ref{sec:sdp-prelims},
\cite{raghavendra12approximating} uses the notion of $\alpha$-independence
which bounds the average mutual information in an average pair of variables
$i, j$, corresponding (up to a quadratic factor) to bounding the average
covariance between a pair of variables, whereas the notion of
$\eps$-uncorrelation (Definition~\ref{def: eps uncorrelated sdp solutions})
bounds the average \emph{correlation coefficient}.  We need this stronger
property of the vectors because we use a more general family of rounding
functions than \cite{raghavendra12approximating}.  

The \maxbisection{} algorithm is presented in
Algorithm~\ref{alg:generic}.  It uses a random hyperplane rounding
that is parameterized by a second algorithm, which we refer to as a
\emph{bias selection algorithm}.

\begin{algorithm}
\caption{\maxbisection{} algorithm}
\label{alg:generic}
\begin{algorithmic}[1]
\REQUIRE Graph $G = (V, E)$, parameter $\epsilon > 0$, bias selection algorithm $\textsc{SelectBias}$
\ENSURE Assignment $y \in \{-1,1\}^n$ satisfying $\sum y_i = 0$ 
\STATE Run the procedure of
Lemma~\ref{lem:cor} with $t = (20/\epsilon)^{12}$ to get vectors $\bv_0, \ldots, \bv_n$
\STATE $\mu_i \leftarrow \iprod{\bv_0}{\bv_i}$, $\bw_i \leftarrow \bv_i - \mu_i \bv_0$,\\
$\obw_i \leftarrow \begin{cases}\frac{\bw_i}{||\bw_i||_2}&\mbox{if } ||\bw_i||_2 \neq 0,\\\mbox{a unit vector orthogonal to all other vectors}&\mbox{if } ||\bw_i||_2 = 0\end{cases}$
\STATE $(r_1, \ldots, r_n) \leftarrow \textsc{SelectBias}(\mu_1, \ldots, \mu_n)$ \label{step:bias_select}
\STATE $\boldg \leftarrow$ standard $n$-dimensional Gaussian vector
\STATE $
  x_i \leftarrow \left\{\begin{array}{rl}
  -1 & \text{if $\iprod{\obw_i}{\boldg} < \Phi^{-1}(\frac{1 - r_i}{2})$} \\
  1 & \text{otherwise}
  \end{array}\right.
  $\label{step:x_i}\\
\STATE $b \leftarrow \frac{1}{2} \sum_{i \in V} x_i$ (the imbalance of $x$)
\STATE $S \leftarrow$ a uniformly random set of $|b|$ 
  vertices $i$ s.t.~$x_i = \sign(b)$
\STATE $y_i \leftarrow \left\{\begin{array}{rl}
   x_i & \text{if $i \not\in S$}\\
  -x_i & \text{if $i \in S$}
  \end{array}\right.$\\
\RETURN $y_1, \ldots, y_n$
\end{algorithmic}
\end{algorithm}

To understand the bias selection algorithm,  first note that by
Fact~\ref{fact: projection on gaussian vectors} the
value $\iprod{\obw_i}{\boldg}$, used to determine the value of $x_i$ in step~\ref{step:x_i}, is a
standard Gaussian random variable. It then
follows that $\Exp{x_i} = r_i$, i.e., $r_i$ is precisely
the bias of $x_i$ produced by the rounding algorithm.

Thus, in order for the intermediate cut $x$ to be balanced in
expectation, we require that the output of the bias selection
algorithm satisfies $\sum r_i = 0$.  This could be relaxed to only
requiring that $|\sum r_i| \le \epsilon n$, but we do not need this
relaxed notion.  The bias selection algorithm can be randomized, in
which case, we would require that $\pro{\sum r_i = 0} \approx 1$.  In
principle, the bias selection algorithm is allowed to look at 
the SDP solution $\bv_0, \ldots, \bv_n$ as well as $G$, but our bias
selection algorithm only uses $\mu_1, \ldots, \mu_n$. Notice that
item~\ref{lem:cor:balance} of Lemma~\ref{lem:cor} implies that
$\sum_i \mu_i = 0$.

Varying the bias selection algorithm gives rise to different rounding
algorithms, and the question is how to efficiently find $r_i$'s that
give a good approximation ratio.  The Raghavendra-Tan Algorithm,
achieving an approximation ratio of $\RTratio$, can be expressed in
this framework as choosing $r_i = \mu_i$.  In general, it would be
natural to let $r_i$ depend solely on $\mu_i$, i.e., $r_i := f(\mu_i)$
for some function $f: [-1,1] \rightarrow [-1,1]$.  However, because of
the balance requirement $\sum r_i = 0$, the function $f$ would need to
be linear, resulting in a quite limited family of roundings.
Nevertheless, as we shall see in Section~\ref{sec:first_improvement},
this kind of rounding is sufficient to obtain a
$\alphalinear$-approximation.

In order to improve upon this, the choice of $r_i$ needs to look at
more than just $\mu_i$.  In Section~\ref{sec:pairing}, we devise a
$\alphasecond$-algorithm.  There the bias selection algorithm starts by
setting $r_i = c \cdot \mu_i$ but then adjusts some of the $r_i$
values in a controlled way so as to preserve $\sum r_i = 0$.  Somewhat
curiously, an effect of our rounding scheme is that two vertices $i$
and $j$ of the graph with the same vectors $\bv_i = \bv_j$ can be
rounded differently by the algorithm.  We are not aware of previous
algorithms where this happens.

\subsection{Overview of Analysis}
\label{sec:analysis overview}

To analyze the algorithm, first notice that from Lemma~\ref{lem:cor}
$\SDPVal(\{\bv_i\}) \ge \Opt(G) - \epsilon/2$. Thus, it suffices to
lower bound the value of the resulting bisection $y$ in terms of $\SDPVal(\{\bv_i\})$.

Now consider the intermediate cut $x$ of Algorithm~\ref{alg:generic}.
When constructing $x$, the behaviour of
the algorithm on an edge $(i, j) \in E$ depends solely on the pairwise
inner products $\mu_i$, $\mu_j$, $\rho_{ij}$ and the two
biases $r_i$ and $r_j$.  Notice that by Lemma~\ref{lem:cor}
$(\mu_i$, $\mu_j$, $\rho_{ij})$ is a configuration as defined in
Definition~\ref{def:conf}.  We would then like to compute the
``relative contribution'' $\alpha: \Conf \times [-1,1]^2 \rightarrow
\R_{\ge 0}$ defined such that $\alpha(\mu_i, \mu_j, \rho_{ij}, r_i, r_j)$ is
the contribution of the edge $(i,j)$ to the value of the rounded solution
divided by its contribution to the value of the \iSDP{} solution
$\bv_0, \ldots, \bv_n$.  In other words we define $\alpha$, somewhat informally,
as
\[
\alpha(\mu_i, \mu_j, \rho_{ij}, r_i, r_j) = \frac{\pro{x_i \ne x_j\,|\,\mu_i,\mu_j, \rho_{ij}, r_i, r_j}}{(1-\rho_{ij})/2}.
\]
A formal definition appears in Section~\ref{sec:analysis of ratio}, Definition~\ref{def:alpha}.
Given this definition, the following lemma, which lower bounds the value of
the cut $x$, is intuitively obvious. We prove it in Section~\ref{sec:analysis of ratio}.
\begin{lemma}
  \label{lemma:approx}
  Suppose that for every edge $(i, j) \in E$, it holds that $\alpha(\mu_i,
  \mu_j, \rho_{ij}, r_i, r_j) \ge \alpha$, for some $\alpha \ge 0$.
  Then the assignment $x$ produced by Algorithm~\ref{alg:generic}
  satisfies
  \[\Exp{\Val(x)} \ge \alpha \SDPVal(\{\bv_i\}).\]
\end{lemma}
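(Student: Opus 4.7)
The plan is to prove the lemma via a straightforward edge-by-edge linearity of expectation argument, using the definition of $\alpha$ essentially as a definition of $\Pr[x_i \ne x_j]$ per edge.

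First I would unfold both sides. For any random $\pm 1$ assignment $x$, we have $\Val(x) = \frac{1}{2}\sum_{ij \in E}(1-x_ix_j)$, so by linearity of expectation
\[
\Exp{\Val(x)} = \sum_{ij\in E} \pro{x_i \ne x_j},
\]
and analogously $\SDPVal(\{\bv_i\}) = \sum_{ij\in E}\frac{1-\rho_{ij}}{2}$. So proving the lemma reduces to showing that for each edge $ij\in E$,
\[
\pro{x_i \ne x_j} \;\ge\; \alpha\cdot \frac{1-\rho_{ij}}{2}.
\]

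Next I would make this per-edge claim precise by conditioning on the random choices made by $\textsc{SelectBias}$. Fix an edge $ij$. The triple $(\mu_i,\mu_j,\rho_{ij})$ is deterministic given the SDP solution; the biases $(r_i,r_j)$ are produced by $\textsc{SelectBias}$ (which may be randomized) and are independent of the Gaussian $\boldg$ drawn in the next step. Conditioned on any realization of $(r_i,r_j)$, the value of $x_k\in\{-1,1\}$ for $k\in\{i,j\}$ depends only on the pair of standard Gaussian variables $\iprod{\obw_i}{\boldg}$ and $\iprod{\obw_j}{\boldg}$, which by Fact~\ref{fact: projection on gaussian vectors} are jointly Gaussian with covariance $\iprod{\obw_i}{\obw_j}$ (and thus a function of $\mu_i,\mu_j,\rho_{ij}$). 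Hence $\pro{x_i\ne x_j \mid r_i,r_j}$ is exactly the quantity captured by the formal definition of $\alpha(\mu_i,\mu_j,\rho_{ij},r_i,r_j)$ in Section~\ref{sec:analysis of ratio}, so by hypothesis,
\[
\pro{x_i\ne x_j \mid r_i,r_j} \;=\; \alpha(\mu_i,\mu_j,\rho_{ij},r_i,r_j)\cdot \frac{1-\rho_{ij}}{2}\;\ge\;\alpha\cdot\frac{1-\rho_{ij}}{2}.
\]
Note that $(1-\rho_{ij})/2 \ge 0$, so the inequality direction is preserved. Taking expectation over $(r_i,r_j)$ yields the desired per-edge bound.

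Finally I would sum over all edges $ij \in E$:
\[
\Exp{\Val(x)} \;=\; \sum_{ij\in E}\pro{x_i\ne x_j} \;\ge\; \alpha\sum_{ij\in E}\frac{1-\rho_{ij}}{2}\;=\;\alpha\cdot\SDPVal(\{\bv_i\}),
\]
which is exactly the conclusion.

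There is no real obstacle; the only slightly delicate point is the conditioning to handle the randomness of $\textsc{SelectBias}$ and to make sure the formal definition of $\alpha$ (given in Section~\ref{sec:analysis of ratio}) really does equal $\pro{x_i\ne x_j\mid \mu_i,\mu_j,\rho_{ij},r_i,r_j}/\bigl((1-\rho_{ij})/2\bigr)$. Everything else is linearity of expectation. I would note in passing that the lemma bounds only the \emph{intermediate} cut $x$; bounding the final bisection $y$ requires a separate concentration argument (controlling $|b|$) which is handled elsewhere via the $\eps$-uncorrelation property from Lemma~\ref{lem:cor}.
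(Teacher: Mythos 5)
Your proposal is correct and follows essentially the same route as the paper: per-edge, $\pro{x_i\ne x_j}$ equals $\alpha(\mu_i,\mu_j,\rho_{ij},r_i,r_j)\cdot\frac{1-\rho_{ij}}{2}$ by Lemma~\ref{lemma:Lambda} and Definition~\ref{def:alpha}, and summing over edges via linearity of expectation gives the claim. The only difference is that you explicitly condition on the (possibly randomized) output of $\textsc{SelectBias}$, a point the paper's proof leaves implicit; this is a harmless and slightly more careful presentation of the same argument.
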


Finally, we need to show that the balancing step at the end of the
algorithm only incurs a small loss.  The following lemma, proved in
Section~\ref{sec:balance}, establishes this. The main idea is to
show that most of the time the solution $x$ is not too unbalanced to begin with. 
\begin{lemma}
  \label{lemma:balance_loss}
  Consider Algorithm~\ref{alg:generic} and suppose the biases selected
  in step~\ref{step:bias_select} satisfy $\sum r_i = 0$.  Then, it holds that
  \[
  \Exp{\Val(y)} \ge \Exp{\Val(x)} - \epsilon/2.
  \]
\end{lemma}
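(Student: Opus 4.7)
The plan is to show that the imbalance $b = \tfrac{1}{2}\sum_i x_i$ of the intermediate cut is small on average, and that flipping $|b|$ vertices in the balancing step therefore costs few edges. By Fact~\ref{fact: projection on gaussian vectors} the projection $\iprod{\obw_i}{\boldg}$ is a standard Gaussian, so the thresholding in step~\ref{step:x_i} makes $\Exp{x_i}=r_i$. The hypothesis $\sum_i r_i = 0$ therefore gives $\Exp{b}=0$, and hence $\Exp{b^2}=\Var{b} = \tfrac14 \sum_{i,j}\Cov(x_i,x_j)$. The task reduces to bounding these pairwise covariances in terms of $\tilde\rho_{ij} := \iprod{\obw_i}{\obw_j}$, the quantity controlled by the $\eps$-uncorrelation property of Lemma~\ref{lem:cor}.

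Again by Fact~\ref{fact: projection on gaussian vectors}, the pair $(\iprod{\obw_i}{\boldg}, \iprod{\obw_j}{\boldg})$ is jointly Gaussian with correlation $\tilde\rho_{ij}$, so the probabilities $\Pr[x_i=x_j=\pm 1]$ can be written in closed form using $\Gamma_{\tilde\rho_{ij}}$. A short computation verifies that the resulting expression for $\Exp{x_ix_j}$ reduces to $r_ir_j$ when $\tilde\rho_{ij}=0$, and Lemma~\ref{lemma:gamma indep bound} then gives $|\Cov(x_i,x_j)| = O(|\tilde\rho_{ij}|)$. Combining with item~\ref{lem:cor:uncorr} of Lemma~\ref{lem:cor} (invoked with $t=(20/\epsilon)^{12}$) yields
\[\Exp{b^2} \le O(n^2)\,\Exp[i,j]{|\tilde\rho_{ij}|} \le O(n^2 \cdot t^{-1/4}) = O(n^2\epsilon^3),\]
and by Cauchy--Schwarz $\Exp{|b|}\le \sqrt{\Exp{b^2}} = O(n\epsilon^{3/2})$.

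To translate this into a bound on $|\Val(x)-\Val(y)|$, observe that flipping a single vertex $v$ changes $\Val$ by at most the (normalized) degree of $v$. Since the flip set $S$ consists of $|b|$ vertices drawn uniformly at random from the majority side of $x$, which contains at least $n/2$ vertices, the expected sum of degrees over $S$ conditioned on $b$ is at most $|b|\cdot 2|E|/(n/2)$. Normalizing by $|E|$ gives $\Exp{|\Val(x)-\Val(y)|} = O(\Exp{|b|}/n) = O(\epsilon^{3/2})$, which is below $\epsilon/2$ once $\epsilon$ is small enough (and the statement is vacuous for large $\epsilon$ in the overall approximation-ratio argument).

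The main obstacle is the covariance estimate $|\Cov(x_i,x_j)| = O(|\tilde\rho_{ij}|)$. The dependence on $\tilde\rho_{ij}$ is linear, which is precisely why the stronger $\eps$-uncorrelation notion from Definition~\ref{def: eps uncorrelated sdp solutions} -- bounding the average $|\tilde\rho_{ij}|$ rather than its square, as in the $\alpha$-independence of \cite{raghavendra12approximating} -- is needed in order to match the additive loss of $\epsilon/2$ claimed by the lemma.
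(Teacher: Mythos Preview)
Your proof is correct and follows essentially the same route as the paper. Both arguments bound $\Var{b}$ by writing $\Cov(x_i,x_j)$ in terms of $\Gamma_{\tilde\rho_{ij}}$ and invoking Lemma~\ref{lemma:gamma indep bound} together with the $t^{-1/4}$-uncorrelation from Lemma~\ref{lem:cor}. The only difference is the passage from the variance bound to the conclusion: the paper applies Chebyshev to obtain the tail bound $\Pr[|\sum_i x_i|\ge \epsilon n/10]\le \epsilon/10$ (Lemma~\ref{lemma:almost_balanced}) and then says Lemma~\ref{lemma:balance_loss} ``follows immediately''; you instead use Cauchy--Schwarz to bound $\Exp{|b|}$ and then make explicit the degree-averaging argument that flipping $|b|$ random majority-side vertices costs $O(|b|/n)$ in normalized value. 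Your version is arguably more self-contained, since the paper leaves that last step to the reader.
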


Taken together, Lemmata~\ref{lem:cor}, \ref{lemma:approx} and
\ref{lemma:balance_loss} imply that Algorithm~\ref{alg:generic} is an
$(\alpha-\epsilon)$-approximation algorithm for \maxbisection{} so the main crux is understanding
the function $\alpha: \Conf \times [-1,1]^2 \rightarrow \R_{\ge 0}$.
We note that the running time of the algorithm is $n^{O(1/\epsilon^{12})}$.

\subsection{Analysis of Approximation Ratio}
\label{sec:analysis of ratio}

In this section we elaborate further on the definition of the function
$\alpha$, and Lemma~\ref{lemma:approx}.  First we express the probability
that two vertices are on the same side of the cut. Recall Definition~\ref{def:Lambda}
of the function  $\Lambda_{\trho}$.

\begin{lemma}
  \label{lemma:Lambda}
  Consider Algorithm~\ref{alg:generic} and any pair of variables $x_i,
  x_j$.  Denote $\trho = \iprod{\obw_i}{\obw_j}$.  Then,
  \[\pro{x_i = x_j} = \Lambda_{\trho}(r_i, r_j).\]
\end{lemma}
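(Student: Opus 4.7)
The plan is a direct calculation using the Gaussian projection fact and the symmetry property recorded in Lemma \ref{lemma:gammasymmetry}. First I would set $X_i = \iprod{\obw_i}{\boldg}$ and $X_j = \iprod{\obw_j}{\boldg}$. Since $\obw_i$ and $\obw_j$ are unit vectors with $\iprod{\obw_i}{\obw_j} = \trho$, Fact~\ref{fact: projection on gaussian vectors} tells us that $(X_i, X_j)$ is jointly Gaussian with zero means, unit variances, and covariance $\trho$. Writing $t_i = \Phi^{-1}((1-r_i)/2)$ and $t_j = \Phi^{-1}((1-r_j)/2)$, step~\ref{step:x_i} of Algorithm~\ref{alg:generic} says $x_i = -1$ exactly when $X_i < t_i$, and similarly for $x_j$.

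Next I would split into the two cases in which $x_i$ and $x_j$ agree:
\[
\pro{x_i = x_j} = \pro{X_i < t_i \wedge X_j < t_j} + \pro{X_i \ge t_i \wedge X_j \ge t_j}.
\]
The first probability equals $\Gamma_{\trho}\!\left(\tfrac{1-r_i}{2},\tfrac{1-r_j}{2}\right)$ by Definition~\ref{def:Gamma}. For the second probability I would invoke the symmetry $(X_i, X_j) \stackrel{d}{=} (-X_i, -X_j)$ of the centered bivariate Gaussian to rewrite it as $\pro{X_i < -t_i \wedge X_j < -t_j}$, and then use $-\Phi^{-1}(q) = \Phi^{-1}(1-q)$ to recognize this as $\Gamma_{\trho}\!\left(\tfrac{1+r_i}{2}, \tfrac{1+r_j}{2}\right)$.

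Finally, applying Lemma~\ref{lemma:gammasymmetry} with $q_1 = (1-r_i)/2$ and $q_2 = (1-r_j)/2$ gives
\[
\Gamma_{\trho}\!\left(\tfrac{1+r_i}{2}, \tfrac{1+r_j}{2}\right) = \Gamma_{\trho}\!\left(\tfrac{1-r_i}{2}, \tfrac{1-r_j}{2}\right) + 1 - \tfrac{1-r_i}{2} - \tfrac{1-r_j}{2},
\]
and since $1 - (1-r_i)/2 - (1-r_j)/2 = (r_i+r_j)/2$, summing the two terms yields precisely
\[
\pro{x_i = x_j} = 2\Gamma_{\trho}\!\left(\tfrac{1-r_i}{2}, \tfrac{1-r_j}{2}\right) + \tfrac{r_i + r_j}{2} = \Lambda_{\trho}(r_i, r_j),
\]
matching Definition~\ref{def:Lambda}. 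There is no real obstacle here: the proof is essentially bookkeeping, and the only slightly non-obvious step is recognizing that the ``both above threshold'' event must be converted via Gaussian sign-symmetry before Lemma~\ref{lemma:gammasymmetry} can be applied cleanly.
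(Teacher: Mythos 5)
Your proof is correct and follows essentially the same route as the paper's: project onto the Gaussian via Fact~\ref{fact: projection on gaussian vectors}, split $\pro{x_i = x_j}$ into the two agreement events, convert the ``both above threshold'' event via the sign-symmetry of the centered Gaussian together with $\Phi(-x)=1-\Phi(x)$, and finish with Lemma~\ref{lemma:gammasymmetry}. No issues.
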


\begin{proof}
  By Fact~\ref{fact: projection on gaussian vectors},
  $\iprod{\obw_i}{\boldg}$ and $\iprod{\obw_j}{\boldg}$ are jointly normal with
  covariance $\trho$ and variance $1$.  Thus,
  \begin{align*}
    \pro{x_i = -1 \wedge x_j = -1} &= \Gamma_{\trho}\left(\frac{1-r_i}{2}, \frac{1-r_j}{2}\right)\\
    \pro{x_i = 1 \wedge x_j = 1} &= \pro{\iprod{\obw_i}{\boldg} \ge \Phi^{-1}\left(\frac{1-r_i}{2}\right) \wedge \iprod{\obw_j}{\boldg} \ge \Phi^{-1}\left(\frac{1-r_j}{2}\right)}\\
    &= \pro{\iprod{\obw_i}{\boldg} < -\Phi^{-1}\left(\frac{1-r_i}{2}\right) \wedge \iprod{\obw_j}{\boldg} < -\Phi^{-1}\left(\frac{1-r_j}{2}\right)} \\
    &=\Gamma_{\trho}\left(\frac{1+r_i}{2}, \frac{1+r_j}{2}\right),
  \end{align*}
  where the middle step used that $\boldg$ and $-\boldg$ have the same distribution and the last step used
  $\Phi(-x) = 1-\Phi(x)$.  Using Lemma~\ref{lemma:gammasymmetry} we get
  \[
  \pro{x_i = x_j} = \pro{x_i = -1 \wedge x_j = -1} + \pro{x_i = 1 \wedge x_j = 1} = 2 \Gamma_{\trho}\left(\frac{1-r_i}{2}, \frac{1-r_j}{2}\right) + \frac{r_i}{2} + \frac{r_j}{2}.\qedhere
  \]
\end{proof}

We are now ready to give a definition of the function $\alpha$ described above.
\begin{definition}
  \label{def:alpha}
  For a configuration $(\mu_1, \mu_2, \rho) \in \Conf$, and for $r_1, r_2 \in [-1,1]$, let $\trho
  = \frac{\rho-\mu_1\mu_2}{\sqrt{(1-\mu_1^2)(1-\mu_2^2)}}$ (if $\mu_1 = \pm 1$
  or $\mu_2 = \pm 1$ we let $\trho = 0$), and define
  \[
  \alpha(\mu_1, \mu_2, \rho, r_1, r_2) = \frac{2\left(1 - \Lambda_{\trho}(r_1, r_2) \right)}{1 - \rho}.
  \]
\end{definition}

From the discussion above, we can immediately deduce Lemma~\ref{lemma:approx}.

\begin{proof}[Proof of Lemma~\ref{lemma:approx}]
When we run Algorithm~\ref{alg:generic}, the probability we cut an edge $ij$ is
\[ \pro{x_i \not = x_j}  = 1 -  \pro{x_i = x_j} = \frac{\left(1 - \rho_{ij}\right)\alpha(\mu_i,\mu_j, \rho_{ij}, r_i, r_j)}{2} \geq \frac{\left(1 - \rho_{ij}\right)\alpha}{2}.\]
We remind the reader that $\frac{\left(1 - \rho_{ij}\right)}{2}$ is exactly the contribution of the pair $ij$ to $\SDPVal(\{\bv_i\})$, hence,
\[\Exp{\Val(x)} = \sum_{ij \in E} \pro{x_i \not = x_j} \ge \alpha \SDPVal(\{\bv_i\}).\qedhere\]
\end{proof}

\subsection{Analysis of Balance}
\label{sec:balance}

In this section we prove Lemma~\ref{lemma:balance_loss}.  The lemma
follows immediately from the following lemma.  

\begin{lemma}\label{lemma:almost_balanced}
  Consider Algorithm~\ref{alg:generic} and assume that the biases selected
  in step~\ref{step:bias_select} satisfy $\sum r_i = 0$. Then
  the assignment $x$ chosen in step~\ref{step:x_i} satisfies
  \begin{equation*}
    \pro[\boldg]{\left|\sum_i x_i\right| \ge \epsilon n/10} \leq \epsilon/10.
  \end{equation*}
\end{lemma}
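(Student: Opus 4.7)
The plan is to apply Chebyshev's inequality to $X := \sum_{i} x_i$.  By Fact~\ref{fact: projection on gaussian vectors}, $\iprod{\obw_i}{\boldg}$ is a standard Gaussian, so the definition of $x_i$ in step~\ref{step:x_i} gives $\Exp{x_i} = r_i$; combined with the hypothesis $\sum_i r_i = 0$ this yields $\Exp{X} = 0$.  The remaining work is to bound $\Var{X} = \sum_i \Var{x_i} + \sum_{i \ne j} \Cov(x_i, x_j)$, where the diagonal trivially contributes at most $n$.

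For the off-diagonal terms I would use Lemma~\ref{lemma:Lambda} to write
\[
\Cov(x_i, x_j) \;=\; 2\Lambda_{\trho_{ij}}(r_i, r_j) - 1 - r_i r_j \;=\; 4 \Gamma_{\trho_{ij}}\!\left(\tfrac{1-r_i}{2}, \tfrac{1-r_j}{2}\right) - (1 - r_i)(1 - r_j),
\]
where $\trho_{ij} := \iprod{\obw_i}{\obw_j}$.  The key estimate I would then prove is the two-sided bound $|\Gamma_{\trho}(q_1, q_2) - q_1 q_2| \le 2|\trho|$: the upper half is exactly Lemma~\ref{lemma:gamma indep bound}, and the matching lower half follows by applying that upper half to $\Gamma_{\trho}(1-q_1, 1-q_2)$ and invoking the reflection identity of Lemma~\ref{lemma:gammasymmetry}.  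Plugging $q_i = (1-r_i)/2$ into this two-sided bound yields $|\Cov(x_i, x_j)| \le 8|\trho_{ij}|$.  Summing, and using the $\eps$-uncorrelation guarantee of item~\ref{lem:cor:uncorr} of Lemma~\ref{lem:cor}, gives
\[
\Var{X} \;\le\; n + 8 n^2 \cdot \Exp[i,j]{|\trho_{ij}|} \;\le\; n + 8 n^2 t^{-1/4}.
\]

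With the choice $t = (20/\epsilon)^{12}$ from step~1 of Algorithm~\ref{alg:generic}, we have $t^{-1/4} = (\epsilon/20)^3$, so $\Var{X} = O(n + n^2 \epsilon^3)$, and Chebyshev's inequality yields $\pro{|X| \ge \epsilon n/10} \le O(1/(\epsilon^2 n)) + O(\epsilon)$, which is at most $\epsilon/10$ once $n$ exceeds a polynomial in $1/\epsilon$; smaller instances can be solved exactly by brute force in constant time.  The main technical subtlety is deriving the two-sided covariance bound, and it is precisely here that the stronger notion of $\eps$-uncorrelation (Definition~\ref{def: eps uncorrelated sdp solutions}) is essential: one needs to control the average of $|\trho_{ij}|$ rather than the average of $\trho_{ij}^2$ that the weaker mutual-information-based notion of Raghavendra--Tan would supply, because the covariance of the indicator random variables $x_i, x_j$ depends \emph{linearly}, not quadratically, on the Gaussian correlation $\trho_{ij}$.
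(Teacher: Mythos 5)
Your proposal follows essentially the same route as the paper's proof: Chebyshev's inequality applied to $\sum_i x_i$, with each covariance expressed via Lemma~\ref{lemma:Lambda} as $4\Gamma_{\trho}(q_i,q_j)-4q_iq_j$ and then controlled by Lemma~\ref{lemma:gamma indep bound} together with the $\eps$-uncorrelation guarantee of Lemma~\ref{lem:cor}; your closing observation that the covariance depends linearly on $\trho_{ij}$, so that one must control $\Exp[i,j]{|\trho_{ij}|}$ rather than a quadratic proxy, is exactly the paper's motivation for the stronger notion. Two blemishes are worth flagging. First, the two-sided bound $|\Gamma_{\trho}(q_1,q_2)-q_1q_2|\le 2|\trho|$ is not needed: to upper-bound $\Var{X}=\sum_{i,j}\Cov(x_i,x_j)$ an upper bound on each covariance suffices, which is all the paper uses. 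Moreover, your derivation of the lower half is circular as stated: applying the upper bound to $\Gamma_{\trho}(1-q_1,1-q_2)$ and substituting into the reflection identity of Lemma~\ref{lemma:gammasymmetry} simply reproduces the upper bound $\Gamma_{\trho}(q_1,q_2)\le q_1q_2+2|\trho|$, since that identity reflects both coordinates while keeping $\trho$ fixed. A genuine lower bound would instead come from reflecting a single coordinate, e.g.\ $\Gamma_{\trho}(q_1,q_2)=q_1-\Pr\left[X\le t_1\wedge Y>t_2\right]\ge q_1-\left(q_1(1-q_2)+2|\trho|\right)$ -- but again, this is dispensable. Second, your explicit diagonal term makes the final Chebyshev bound come out as $\eps/10+100/(\eps^2 n)$, which strictly exceeds $\eps/10$ for every $n$; the paper avoids this because the expectation in Definition~\ref{def: eps uncorrelated sdp solutions} already ranges over all ordered pairs including $i=j$, so the diagonal is absorbed into the $8n^2t^{-1/4}$ term rather than added on top. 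Your brute-force fallback for small instances papers over this at the cost of modifying the algorithm; the cleaner fix is simply not to count the diagonal twice.
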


This lemma is analogous to (but does not follow from)
Theorem~\RTVarianceThm{} and Corollary~\RTBalanceCor{} in the full version of \cite{raghavendra12approximating}.
The main difference is that our lemma applies to any
choice of biases, whereas Theorem~\RTVarianceThm{} in
\cite{raghavendra12approximating} requires $r_i = \mu_i$.  This is
enabled by our stronger notion of an uncorrelated SDP
solution, i.e., Lemma~\ref{lem:cor}. The proof of Theorem~\RTVarianceThm{} in \cite{raghavendra12approximating} 
is an elegant use of information-theorectic techniques ultimately
relying on the so-called data processing inequality. While one can easily
extend that proof to our setting, we give a different proof which is somewhat
longer, but in our opinion more transparent and resulting in somewhat better bounds
as we do not need to pass back and forth between covariance and mutual
information.

\begin{proof}[Proof of Lemma~\ref{lemma:almost_balanced}]
  Define the random variable $X = \frac{1}{n} \sum_i x_i$. We have
  $\Exp{X} = \frac{1}{n} \sum_i r_i = 0$.  We now bound
  $\Var{X} = \Exp[i,j \in V]{\Cov[x_i, x_j]}$ from which the desired bound follows using 
  Chebyshev's inequality.  Let $\tau = 
  1/t^{1/4}$ and recall that by Lemma~\ref{lem:cor} we have
  \[\Exp[i,j \in V]{|\iprod{\obw_i}{\obw_j}|} \le \tau.\]
  Fix some pair $i, j$ and let $\trho = \iprod{\obw_i}{\obw_j}$.  By
  Lemma~\ref{lemma:Lambda} we have
  \[ \Cov[x_i,x_j] = \Exp{x_ix_j} - \Exp{x_i}\Exp{x_j} = 2\pro{x_i = x_j} - 1 - r_i r_j = 2\Lambda_{\trho}(r_i, r_j) - 1 - r_i r_j = 4 \Gamma_{\trho}\left(q_i, q_j\right) - 4 q_i q_j,\]
%  Plugging in the definition of $\Lambda_{\trho}$, a simple calculation shows that
%  \[ \Cov[x_i, x_j] = 4 \Gamma_{\trho}\left(q_i, q_j\right) - 4 q_i q_j,\]
  where $q_i = \frac{1-r_i}{2} = \pro{x_i = -1}$.
  By Lemma~\ref{lemma:gamma indep bound}, it follows that $\Cov[x_i,
    x_j] \le 8 |\trho|$.  Averaging over all pairs $i, j$ we get
  \[\Var{X} = \Exp[i,j \in V]{\Cov[x_i, x_j]} \le \Exp[i,j \in V]{8 |\iprod{\obw_i}{\obw_j}|} \le 8\tau.\]
  Denoting by $\sigma = \sqrt{\Var{X}}$ it follows from Chebyshev's inequality that
  \[\pro{|X| \ge \sigma^{2/3}} \le \sigma^{2/3}. \]
  Plugging in our bound $\sigma^2 \le 8 \tau = 8 t^{-1/4}$ we have
  $\sigma^{2/3} \le 2t^{-1/12} = \epsilon/10$, completing the proof.
\end{proof}

\subsection{Finding the Uncorrelated \iSDP\ Solution}\label{sec:lem cor proof}

In this section we prove Lemma~\ref{lem:cor} which is restated below for convenience.
\begin{relemma}{lem:cor}
  There is an algorithm which, given an integer $t$ and a graph $G =
  (V,E)$, runs in time $n^{O(t)}$ and outputs a set of vectors $\bv_0,
  \ldots, \bv_n$ such that
  \begin{enumerate}
  \item $\SDPVal(\{\bv_i\}) \ge \Opt(G) - 10 t^{-1/12}$,
  \item $\sum_{i} \iprod{\bv_0}{\bv_i} = 0$,
  \item The triangle inequalities \eqref{eq:triangle ineq} are satisfied
 \item The vectors $\bv_1,\ldots, \bv_n$ are $t^{-1/4}$-uncorrelated with respect to $\bv_0$.
%$\Exp[i,j \in V]{|\iprod{\obw_i}{\obw_j}|} \leq t^{-1/4}$, where $\bw_i$ denotes the normalized component of $\bv_i$ orthogonal to $\bv_0$.
  \end{enumerate}
\end{relemma}
\begin{proof}
  Without loss of generality assume that $t$ is a sufficiently large constant and construct random
  vectors $\bv'_0, \bv'_1, \ldots, \bv'_n$ by applying Lemma~\RTMutualInfLemma{} from the full
  version of \cite{raghavendra12approximating}. 
  The \emph{expected} objective value of $\bv'_i$'s (where the expectation is over
  the randomness in their Lemma~\RTMutualInfLemma{}) equals $\SDP_{t+2}(G)\geq \Opt(G)$.
  Furthermore, the same lemma guarantees that the average \emph{mutual information} of certain random
  variables (associated with the vectors), indicated by $I(X_i, X_j)$, is low. We will not define
  mutual information and refer the interested reader to \cite{raghavendra12approximating} as we can
  immediately apply Lemma~\RTCovarMlInfLemma{} from that paper which relates $I(X_i, X_J)$ to $|\iprod{\bw'_i}{\bw'_j}|$ to
  arrive at the following conclusion.
  \begin{align*}
    \Exp[\{\bv_i'\}]{\Exp[i,j \in V]{|\iprod{\bw'_i}{\bw'_j}|}} \leq \Exp[\{\bv_i'\}]{\Exp[i,j \in V]{\sqrt{32 I(X_i, X_j)}}}
    \leq \sqrt{32 \Exp[\{\bv_i'\}]{\Exp[i,j \in V]{I(X_i, X_j)}}} \leq \frac{\sqrt{32}}{\sqrt{t-1}} < \frac{6}{\sqrt{t}},
  \end{align*}
  where $\bw'_i$'s are defined from $\bv'$ analogous to how $\bw_i$'s are defined from $\bv_i$'s.
  Furthermore the $\bv'_i$'s are a solution to level-2 \la{} relaxation of \maxbisection{} and in particular satisfy
  $\sum_i \iprod{\bv'_0}{\bv'_i} = 0$ along with all triangle inequalities. Applying two Markov bounds we conclude,
  \begin{align*}
    \pro[\{\bv_i'\}]{\SDPVal(\{\bv'_i\}) \le \Opt(G) - 9 t^{-1/12}} &\le 1 - 9 t^{-1/12},&
    \pro[\{\bv_i'\}]{\Exp[i,j \in V]{|\iprod{\bw'_i}{\bw'_j}|} \geq t^{-5/12}} &\le 6 t^{-1/12}.
  \end{align*}
  Thus, by resampling $\bv'_0, \bv'_1, \ldots, \bv'_n$ an (expected) $3t^{1/12}$ times we obtain an \iSDP\ solution where all
  the following three conditions as well as the triangle inequalities on $\bv'_i$'s hold:
  \begin{align*}
    \SDPVal(\{\bv'_i\}) &\ge \Opt(G) - 9 t^{-1/12},&     \Exp[i,j \in V]{|\iprod{\bw'_i}{\bw'_j}|} &\leq t^{-5/12},
    & \sum_{i \in V} \iprod{\bv'_0}{\bv'_i} &= 0.
  \end{align*}

  The above vectors have all the required conditions of the Lemma
  except the last. In particular, we have a bound on the average of
  the inner products $\iprod{\bw'_i}{\bw'_j}$ as opposed to the
  stronger bound on the inner products $\iprod{\obw_i}{\obw_j}$.
  But the only way in which the stronger bound can fail to hold
  is if many $\|\bw'_i\|$'s are small, i.e., if many of the $|\mu_i|$
  values are close to $1$.  However, such vectors can be corrected for
  a small additional loss in \iSDP\ value.

  In particular, define vectors $\bv_0, \ldots, \bv_n$ as $\bv_0 = \bv'_0$ and
  \begin{align*}
    \bv_i &=
    \begin{cases}
      \bv'_i & \mbox{if } \|\bw'_i\| \ge t^{-1/12} \\
      \bv'_i - \bw'_i + \bw^*_i & \mbox{otherwise}
    \end{cases}
  \end{align*}
  for $i \ge 1$.  Here $\bw^*_i$ is a new vector orthogonal to all
  other vectors and of length $\|\bw^*_i\| = \|\bw'_i\|$.

  Notice that now, 
  \begin{align*}
    |\iprod{\obw_i}{\obw_j}| &= 
    \begin{cases}
      0 &\mbox{ if }\min(\|\bw'_i\|, \|\bw'_j\|) < t^{-1/12}\\ 
      \frac{|\iprod{\bw'_i}{\bw'_j}|}{\|\bw'_i\| \|\bw'_j\|} & \mbox{otherwise,}
    \end{cases}
  \end{align*}
  which in particular is bounded by $|\iprod{\bw'_i}{\bw'_j}|/t^{-2/12}$
  and therefore
  \[\Exp[i,j \in V]{|\iprod{\obw_i}{\obw_j}|} \leq t^{-5/12}/t^{-2/12} = t^{-1/4}.\]
  Furthermore we can bound the difference in any inner product by 
  \[|\iprod{\bv_i}{\bv_j} - \iprod{\bv'_i}{\bv'_j}| = |\iprod{\bw_i}{\bw_j} - \iprod{\bw'_i}{\bw'_j}| \le t^{-1/12},\]
  and so $\SDPVal(\{\bv_i\}) \ge \SDPVal(\{\bv'_i\}) - t^{-1/12}$.
  Clearly, the condition $\sum \iprod{\bv_0}{\bv_i} = 0$ is still
  satisfied since all projections on $\bv_0$ remain the same.  It
  remains to check the triangle inequalities.  Consider any pair
  $\bv_i, \bv_j$ such that one of them was changed.  We then have
  $\rho_{ij} = \iprod{\bv_i}{\bv_j} = \mu_i \mu_j$, and the four
  inequalities \eqref{eq:triangle ineq} are equivalent to
  \[(1 \pm \mu_i) (1 \pm \mu_j) \ge 0\]
  which clearly hold.
\end{proof}

%%% Local Variables:
%%% mode: latex
%%% TeX-master: "bisection.tex"
%%% End:

\section{Linear Biases: A \alphalinear{}-Approximation}\label{sec:first_improvement}

In this section we study how far one can get by considering bias
selection algorithms that set $r_i$ to be a linear function of $\mu_i$.
Recall that the Raghavendra-Tan algorithm, which uses $r_i = \mu_i$,
falls into this category.

\begin{definition}
  \label{def:linear_alpha}
  For $c \in [0,1]$, define
  \[\alpha(c) := \min_{(\mu_1, \mu_2, \rho) \in \Conf} \alpha(\mu_1, \mu_2, \rho, c \cdot \mu_1, c \cdot \mu_2).\]
\end{definition}

The following Lemma is an immediate corollary of the analysis in Section~\ref{sec:analysis overview}.

\begin{lemma}
  For any $0 \leq c \leq 1$, Algorithm~\ref{alg:generic} with the bias selection
  algorithm that sets $r_i = c \cdot \mu_i$ has approximation ratio at
  least $\alpha(c) -\epsilon$.
\end{lemma}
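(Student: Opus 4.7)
The plan is to simply assemble the three main lemmas established in Section~\ref{sec:analysis overview} (Lemmas~\ref{lem:cor}, \ref{lemma:approx}, and \ref{lemma:balance_loss}) and verify that the linear bias selection algorithm $r_i = c \cdot \mu_i$ meets the hypotheses required by each.

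First I would check the feasibility hypothesis for Lemma~\ref{lemma:balance_loss}, namely $\sum_i r_i = 0$. Since the algorithm sets $r_i = c\cdot\mu_i$, we get $\sum_i r_i = c\sum_i \mu_i = c \sum_i \iprod{\bv_0}{\bv_i} = 0$, where the last equality is item~\ref{lem:cor:balance} of Lemma~\ref{lem:cor}. So the bias selection is a valid input for the rest of the analysis.

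Next I would invoke Lemma~\ref{lemma:approx} to lower-bound the value of the intermediate cut $x$. For every edge $(i,j) \in E$, the triple $(\mu_i, \mu_j, \rho_{ij})$ lies in the polytope $\Conf$, again by Lemma~\ref{lem:cor} (item~3, the triangle inequalities). Hence
\[
\alpha(\mu_i, \mu_j, \rho_{ij}, c\mu_i, c\mu_j) \ge \min_{(\nu_1,\nu_2,\sigma)\in\Conf} \alpha(\nu_1, \nu_2, \sigma, c\nu_1, c\nu_2) = \alpha(c)
\]
by Definition~\ref{def:linear_alpha}. Applying Lemma~\ref{lemma:approx} with this uniform bound yields $\Exp{\Val(x)} \ge \alpha(c)\cdot \SDPVal(\{\bv_i\})$.

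Finally I would stitch everything together with the standard two error terms. By the choice $t = (20/\epsilon)^{12}$ in Algorithm~\ref{alg:generic}, item~1 of Lemma~\ref{lem:cor} gives $\SDPVal(\{\bv_i\}) \ge \Opt(G) - 10 t^{-1/12} = \Opt(G) - \epsilon/2$, and Lemma~\ref{lemma:balance_loss} gives $\Exp{\Val(y)} \ge \Exp{\Val(x)} - \epsilon/2$. Combining:
\[
\Exp{\Val(y)} \;\ge\; \alpha(c)\cdot \SDPVal(\{\bv_i\}) - \epsilon/2 \;\ge\; \alpha(c)\bigl(\Opt(G) - \epsilon/2\bigr) - \epsilon/2 \;\ge\; \bigl(\alpha(c) - \epsilon\bigr)\Opt(G),
\]
using $\alpha(c)\le 1$ and $\Opt(G)\le 1$ (after normalization, or simply $\Opt(G)\le |E|$ with the corresponding trivial bound). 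There is no real obstacle here: the lemma is essentially a corollary that bundles together the three structural results already proved, with the linear choice of biases trivially inheriting the balance condition from the SDP.
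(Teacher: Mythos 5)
Your proposal is correct and is exactly the argument the paper intends: the paper states this lemma as an ``immediate corollary'' of the analysis in Section~\ref{sec:analysis overview}, and your write-up simply makes explicit the verification that $\sum_i r_i = c\sum_i\mu_i = 0$ (via item~\ref{lem:cor:balance} of Lemma~\ref{lem:cor}) and the chaining of Lemmas~\ref{lem:cor}, \ref{lemma:approx}, and \ref{lemma:balance_loss}. The only caveat, shared with the paper's own implicit treatment, is that converting the two additive $\epsilon/2$ losses into a multiplicative $\epsilon$ loss uses $\Opt(G)\ge 1/2$ after normalization, which at worst costs a constant rescaling of $\epsilon$.
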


\begin{numericalclaim}\label{claim: best alpha from lin biases}
  $\max_{c \in [0,1]} \alpha(c) \ge \alphalinearfull$, and it is achieved for $c \approx \alphalinearbestc$.
\end{numericalclaim}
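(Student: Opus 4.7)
The plan is to split the verification into two parts: an unrigorous one-dimensional search that identifies a near-optimal bias parameter $c^* \approx \alphalinearbestc$, followed by a rigorous computer-assisted lower bound on $\alpha(c^*)$ over the three-dimensional polytope $\Conf$. Note that restricting attention to the one-parameter family $r_i = c \cdot \mu_i$ is natural in view of the discussion preceding the claim: any bias rule of the form $r_i = f(\mu_i)$ that preserves $\sum r_i = 0$ for all inputs satisfying $\sum \mu_i = 0$ must be linear, i.e., of the form $f(\mu) = c\mu$, and $c \in [0,1]$ is the interesting range.

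First, since $\alpha(c)$ is the infimum of a continuous function on a compact set and the function itself is continuous in $c$, one can heuristically estimate $\alpha(c)$ by discretizing $[0,1]$ on a fine grid and, for each grid value of $c$, minimizing $\alpha(\mu_1,\mu_2,\rho,c\mu_1,c\mu_2)$ over a dense sample of configurations drawn from $\Conf$. Maximizing the resulting estimates yields the candidate $c^* \approx \alphalinearbestc$ together with the estimate $\alpha(c^*) \approx \alphalinearfull$.

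Second, fixing this $c^*$ and defining $F(\mu_1,\mu_2,\rho) := \alpha(\mu_1,\mu_2,\rho, c^* \mu_1, c^* \mu_2)$, the task is to certify rigorously that $F \ge \alphalinearfull$ everywhere on $\Conf$. I would follow the computer-assisted approach outlined in Section~\ref{sec:num-proofs}: partition $\Conf$ into small axis-aligned cubes of some side length $\delta$, and on each cube prove a uniform lower bound on $F$ by combining the value of $F$ at the centre with an a priori Lipschitz bound on its partial derivatives. The partial derivatives of the bivariate normal CDF $\Gamma_{\trho}$ entering $\Lambda_{\trho}$ are given in closed form by Lemma~\ref{lemma:gamma deriv}, so explicit Lipschitz estimates are available; function values themselves are evaluated via rigorous interval-arithmetic enclosures of the bivariate normal CDF. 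Choosing $\delta$ small enough, every cube passes the test, giving the claim.

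The main obstacle is the boundary behaviour of $F$. Near $|\mu_1| \to 1$ or $|\mu_2| \to 1$, the quantity $\trho = (\rho - \mu_1 \mu_2)/\sqrt{(1-\mu_1^2)(1-\mu_2^2)}$ is an indeterminate $0/0$ form, so the Lipschitz bounds on which the cube-by-cube argument relies degenerate. However, the triangle inequalities \eqref{eq:triangle ineq} force $\rho = \mu_1 \mu_2$ whenever $|\mu_i| = 1$, and in this regime $F$ admits a simple closed form (the $x_i$ are essentially deterministic), which can be checked analytically in a thin boundary strip. A similar separate treatment is needed near $|\trho| = 1$, where $\Gamma_{\trho}$ collapses to a univariate integral. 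Outside these thin strips the brute-force enumeration delivers the bound.
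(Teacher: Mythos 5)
The first thing to note is that the paper does not prove this statement at all: it is deliberately stated as a ``Numerical Claim'' and rests on a non-rigorous computation, and the only rigorous counterpart in the paper is Theorem~\ref{thm:rig-linear}, which is weaker in two ways (the constant drops to $0.87362$ and the minimization is over $\Conf_\delta$ rather than all of $\Conf$). Your first stage, the heuristic grid search for $c^*$, is in the same spirit as what the authors actually did and is fine as far as it goes.

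The gap is in your second stage, the claimed rigorous certification of $F \ge \alphalinearfull$ over all of $\Conf$. One of the two global minimizers of $F$ is the integral configuration $\phi_2 = (1,-1,-1)$; indeed $c^*$ is chosen precisely so that the value there, $(1+c^2)/2$, equals the value at the interior minimizer $\phi_1$. So the ``thin boundary strip'' you propose to dispatch separately contains a binding worst case and cannot be treated as a negligible nuisance region. Moreover, your proposed analytic treatment does not work inside the strip: the triangle inequalities force $\rho = \mu_1\mu_2$ only on the exact face $|\mu_1| = 1$, whereas for $|\mu_1| = 1-a$ with $a$ small the normalized correlation $\trho = (\rho-\mu_1\mu_2)/\sqrt{(1-\mu_1^2)(1-\mu_2^2)}$ still sweeps an interval of length $\Theta(1)$ as $\rho$ ranges over its admissible values, so $F$ has no closed form there and neither Lipschitz bounds nor interval subdivision converge (every subcube touching that corner remains inconclusive). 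This is exactly the obstruction the authors identify in Section~\ref{sec:num-proofs}, and their resolution is not to prove the claim but to modify the algorithm via Lemma~\ref{lemma:make sdp solution smooth} so that only $\Conf_\delta$ matters, accepting the weaker constant of Theorem~\ref{thm:rig-linear}. A secondary issue: since the optimum is the crossing point of the curve $(1+c^2)/2$ with a Gaussian-integral curve, certifying the eight-digit value $\alphalinearfull$ as a rigorous lower bound on $\max_{c}\alpha(c)$ would require controlling that crossing to within $10^{-8}$, which your outline does not address.
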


Figure~\ref{fig:alpha_linear_plots} shows plots of $\alpha(c)$ for $c
\in [0,1]$ and $c \in [0.83, 0.88]$.  

Just like with our $\alphasecond$-algorithm (to be presented in the
next section), we can obtain a rigorous proof of a slightly weaker
version of Claim~\ref{claim: best alpha from lin biases}.  In
particular we prove that for $c = 0.86451$ (a slight modification of)
Algorithm~\ref{alg:generic} gives a $\alphalinear$-approximation.
This is done in Theorem~\ref{thm:rig-linear}.

\begin{figure}
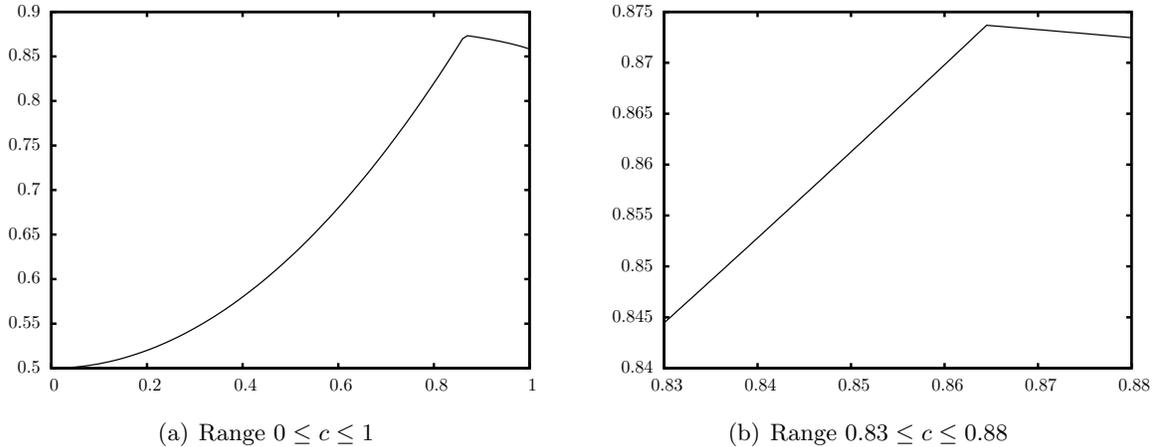

  \centering
  \subfigure[Range $0 \le c \le 1$]{\scalebox{0.62}{\import{plots/}{alpha-linear-full-src.tex}}}
  \subfigure[Range $0.83 \le c \le 0.88$]{\scalebox{0.62}{\import{plots/}{plots/alpha-linear-zoom-src.tex}}}
  \caption{Plot of $\alpha(c)$ for entire range and zoomed in around optimal $c$.}
  \label{fig:alpha_linear_plots}
\end{figure}

Let us now spend some time discussing the worst case configurations for $\alpha(c)$, as our
understanding of these will guide our choices when obtaining further
improvements. Denote by $c^* \approx \alphalinearbestc$ the optimal value of $c$.  
It turns out that (up to symmetry\footnote{Due to symmetry, the configuration
  $(-\mu_1, -\mu_2, \rho)$ is completely equivalent to $(\mu_1, \mu_2, \rho)$.})
there are two distinct worst case configurations $\phi_1, \phi_2$ for $\alpha(c^*)$, approximately
\begin{align*}
  \phi_1 &= (\worstconfiglinearb, \worstconfiglinearb, \worstconfiglinearrho) &
  \phi_2 &= (1, -1, -1).
\end{align*}
The presence of the integral configuration $(1, -1, -1)$ may seem
surprising at first, but has a very natural explanation.  For this
configuration, we have $\trho = 0$, meaning that the two vertices are
rounded completely independently, one with expectation $c$ and the
other with expectation $-c$.  Thus the probability that such an edge
is cut by the algorithm is precisely $\frac{1+c^2}{2}$, and since the
\iSDP\ value for this configuration is $1$ this implies an upper bound
of $\alpha(c) \le \frac{1+c^2}{2}$, meaning that $c$ needs to be sufficiently large in order
for us to obtain a good approximation ratio.  Indeed, the $c < c^*$
part of Figure~\ref{fig:alpha_linear_plots} follows this curve.

The other worst case configuration is more interesting, and is quite
similar to the kind of configuration that is the worst for the
Goemans-Williamson \maxcut{} algorithm.  On this configuration, the
approximation ratio improves as $c$ decreases. Intuitively, this is
because the configuration has both vertices biased in the same
direction, so putting less importance on the bias results in a greater
probability that the edge is cut.  The optimal choice $c^*$ is the
point where the ratio on $\phi_1$ meets the curve $\frac{1+c^2}{2}$.

\subsection{Limitations}

Even though $\max \alpha(c) \approx \alphalinear$, it is possible that
a better ratio could be obtained by choosing $c$ adaptively after
seeing the graph $G$ and \iSDP\ solution $\bv_0, \ldots, \bv_n$. %(note that by the minimax principle this is equivalent to having a distribution $D_r$ over $[0,1]$ and choosing $c \sim D_r$).
To rule out the possibility of any significant improvement of
this form, we exhibit a distribution $D_{\Conf}$ over configurations $(\mu_1,
\mu_2, \rho)$ such that
\begin{equation}
  \label{eqn:mixed_performance}
\max_{c \in [0,1]} \frac{\Exp[(\mu_1, \mu_2, \rho) \sim D_{\Conf}]{1 - \Lambda_{\trho}(c \cdot \mu_1, c \cdot \mu_2)}}{\Exp[(\mu_1, \mu_2, \rho) \sim D_{\Conf}]{(1-\rho)/2}} \le \alphalinearlimit.
\end{equation}
The distribution is quite simple, and is only supported on the two
worst-case configurations for $\alpha(c^*)$.  Specifically, $(\mu_1, \mu_2,
\rho) \sim D_{\Conf}$ is chosen as
\begin{equation*}
  \begin{array}{lllll}
    \phi_1 && \text{with probability \alphalinearlimitp{},}\\
    \phi_2 && \text{otherwise.}
  \end{array}
\end{equation*}

In Figure~\ref{fig:alpha_linear_limit} we plot the approximation ratio
on $\phi_1$ and $\phi_2$ as a function of $c$, as well as the ratio of
\eqref{eqn:mixed_performance} as a function of $c$.  While the
latter curve might appear to be a constant, it does have small
variations of order $10^{-4}$.

\begin{figure}
  \centering
  \scalebox{1}{\import{plots/}{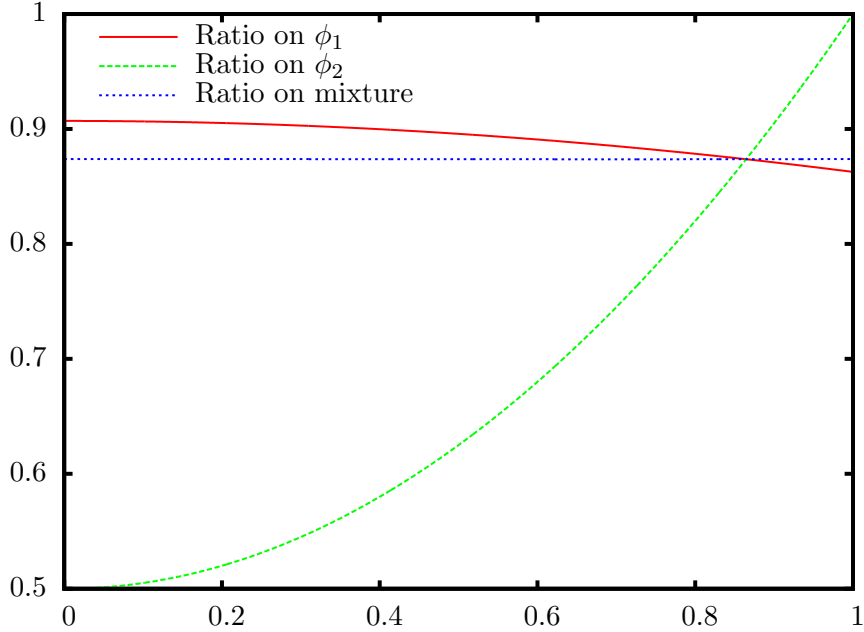}}
  \caption{Approximation ratio on the two configurations and their mixture as a function of $c$}
  \label{fig:alpha_linear_limit}
\end{figure}

%%% Local Variables:
%%% mode: latex
%%% TeX-master: "bisection.tex"
%%% End:

\section{Pairing Vertices: A \alphasecond-approximation}\label{sec:pairing}

In this section we describe a bias selection algorithm which yields a
$\alphasecond$-approximation for \maxbisection{}.
Let us start with an informal description of how to obtain the
improvement.  Recall from the discussion on the algorithm in
Section~\ref{sec:first_improvement} that an obstacle to further
improvements was the ``conflict'' between the two critical
configurations $\phi_1$ which resembled a critical configuration for
\maxcut{} and $\phi_2$ which was just an integral configuration.
Arguably, configurations like $\phi_1$ in some sense capture the
difficulty of \maxbisection{}, whereas the integral configuration
$\phi_2$ should be easy.  With this in mind, it is natural to decrease
the value of $c$ to perform better on $\phi_1$ and similar configuration, and then
do some adjustments on vertices with large $|\mu_i|$ in order to perform
well on $\phi_2$ and more generally on near-integral configurations.

A first idea is the following: as long as there are edges $(i,j)$
which are near-integral in the \iSDP\ solution, say, $\mu_i > 1-\delta$ and
$\mu_j < -1+\delta$ for some small constant $\delta$, set $r_i =
1-\delta$ and $r_j = -1+\delta$, respectively.  Once all such edges
are processed, use $r_i = c \cdot \mu_i$ for all other vertices.  Once one
takes care of some technical details this idea can be made to work,
however the improvement over the algorithm of the previous section
is minor, of order, say, $10^{-4}$.

In order to get a more impressive improvement, we use a ``smooth''
version of the above idea.  As in the linear bias selection, we start
by assigning $r_i = c \cdot \mu_i$ for all $i$.  We then pick off pairs of
vertices $(i, j)$ such that $\mu_i > 0$ and $\mu_j < 0$ are as large
as possible (in absolute value).  We then add some value $\Delta r >
0$ to $r_i$ and subtract $\Delta r$ from $r_j$.  Clearly, this
operation preserves $\sum r_i = 0$.  The remaining choice is now how
to choose the ``boost'' $\Delta r$.  It is somewhat natural to
restrict ourselves to choosing $\Delta r := (1-c) f(\min(|\mu_i|, |\mu_j|))$
where $f: [0,1] \rightarrow [0,1]$ is a non-decreasing function which for
technical reasons we require to be Lipshitz continuous and satisfy $f(0) = 0$.
We refer to any such
$f$ as a ``boost function''.  
Notice that before the boosting all biases are in the interval $[-c, c]$
so after the boosting all biases are in $[-1, 1]$, i.e., valid.
Ultimately we choose $f$ to be piece-wise linear though it is quite possible
that further improvements are possible with more complicated choices of $f$.  
More formally, our bias values are given by Algorithm~\ref{alg:biasselect}.

\begin{algorithm}
\caption{\maxbisection{} bias selection}
\label{alg:biasselect}
\begin{algorithmic}[1]
\REQUIRE $\mu_1, \ldots, \mu_n \in [-1,1]$, parameter $c \in [0,1]$, boost function $f: [0,1] \rightarrow [0,1]$
\ENSURE Biases $r_1, \ldots, r_n \in [-1,1]$ such that $\sum r_i = 0$.
\STATE $r_i \leftarrow c \cdot \mu_i$ for $i \in [n]$.
\STATE $S \leftarrow V$
\WHILE{$\max_{i \in S} \mu_i > 0 \wedge \min_{j \in S} \mu_j < 0$}
\STATE $i \leftarrow \argmax_{i \in S} \mu_i$
\STATE $j \leftarrow \argmin_{j \in S} \mu_j$
\STATE $\beta \leftarrow \min(|\mu_i|, |\mu_j|)$
\STATE $r_i \leftarrow r_i + (1-c)f(\beta)$
\STATE $r_j \leftarrow r_j - (1-c)f(\beta)$
\STATE $S \leftarrow S \setminus \{i,j\}$
\ENDWHILE
\RETURN $r_1, \ldots, r_n$
\end{algorithmic}
\end{algorithm}

Let us now analyze the performance of the algorithm.  As opposed to
the linear bias selection algorithm used in the previous section,
given some configuration $(\mu_1, \mu_2, \rho)$ we do not know
exactly what $r$-values were used to round it.  However, we do have
the following Lemma, which provides bounds on these $r$-values.
\begin{lemma}
  \label{lem:rvaluebound}
  For any vertex $i$, the value $r_i$ produced by
  Algorithm~\ref{alg:biasselect} satisfies
  \begin{align}
    \label{eqn:weak_r_bound}
    \sgn(r_i) &= \sgn(\mu_i) &
    c |\mu_i| &\le |r_i| \le c |\mu_i| + (1-c)f(|\mu_i|) \le 1.
  \end{align}
  Furthermore, for any vertex $j$ such that $\sgn(\mu_i) \ne \sgn(\mu_j)$ one of the following two hold,
  \begin{align}
    \label{eqn:strong_r_bound}
    \begin{aligned}
      |r_j| & \ge c |\mu_j| + (1-c)f(\min(|\mu_i|, |\mu_j|)), \text{ or, }  \\
      |r_i| &\ge c |\mu_i| + (1-c)f(\min(|\mu_i|, |\mu_j|)).
    \end{aligned}
  \end{align}
\end{lemma}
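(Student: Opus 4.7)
The plan is to analyze Algorithm~\ref{alg:biasselect} directly, proving the two parts separately.

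For the weak bound \eqref{eqn:weak_r_bound}, I would first observe that each vertex $i$ has its $r_i$ modified at most once, namely only if it is ever pulled from $S$. If $\mu_i > 0$ then $i$ can only appear in the $\argmax$ role (since $\argmin$ always picks a vertex with $\mu<0$), so $r_i$ is only ever increased; symmetrically for $\mu_i<0$. This immediately yields $\sgn(r_i) = \sgn(\mu_i)$ and the lower bound $|r_i| \ge c|\mu_i|$. When $i$ is pulled, it is paired with some vertex $j$ and the boost magnitude is $(1-c)f(\beta)$ with $\beta = \min(|\mu_i|,|\mu_j|) \le |\mu_i|$; monotonicity of $f$ gives the upper bound $|r_i| \le c|\mu_i| + (1-c)f(|\mu_i|)$, and since $|\mu_i|, f(|\mu_i|) \in [0,1]$ this is at most $c + (1-c) = 1$.

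For the strong bound \eqref{eqn:strong_r_bound}, assume without loss of generality that $\mu_i > 0 > \mu_j$. The key structural observation I would use is that the algorithm processes positive-$\mu$ vertices in (weakly) decreasing order of $\mu_{\cdot}$ and negative-$\mu$ vertices in decreasing order of $|\mu_{\cdot}|$, since each iteration removes the current $\argmax$ and $\argmin$ from $S$. Moreover, $i$ and $j$ cannot both survive the loop: the loop only terminates once $S$ has lost all positive or all negative $\mu$-values, and $i, j$ have opposite signs. Let $k_i \in \{1,2,\ldots,\infty\}$ be the iteration in which $i$ is pulled (with $k_i = \infty$ if it survives), and define $k_j$ analogously; then at most one of $k_i, k_j$ is infinite.

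The analysis now splits into two symmetric cases depending on whether $k_i \le k_j$ or $k_j < k_i$. In the first case, at iteration $k_i$ the vertex $j$ is still in $S$, so the negative vertex $j'$ that $i$ is paired with satisfies $|\mu_{j'}| \ge |\mu_j|$ (with $j' = j$ when $k_i = k_j$). Monotonicity of $f$ then yields
\[
|r_i| = c|\mu_i| + (1-c)\, f(\min(|\mu_i|, |\mu_{j'}|)) \ge c|\mu_i| + (1-c)\, f(\min(|\mu_i|, |\mu_j|)),
\]
which is the second inequality of \eqref{eqn:strong_r_bound}; the case $k_j < k_i$ is completely symmetric and yields the first. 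There is no serious obstacle here; the only point worth a moment of care is the ``survivor'' case where one of $k_i, k_j$ equals $\infty$, but this is absorbed into the above cases since at most one of the two can be infinite. The rest is routine bookkeeping about which vertex is boosted by how much at each iteration.
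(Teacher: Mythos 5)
Your proof is correct and follows essentially the same route as the paper's: the weak bound via "each $r_i$ is modified at most once, in the direction of $\sgn(\mu_i)$, by a boost of at most $(1-c)f(|\mu_i|)$ using monotonicity of $f$," and the strong bound via a case split on which of $i,j$ is removed from $S$ first, noting that the partner of the first-removed vertex has $|\mu|$ at least that of the other (with the "both survive" possibility ruled out by the loop condition). Your explicit $k_i,k_j$ bookkeeping and the ordering observation are just a more detailed phrasing of the paper's two symmetric cases.
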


In other words, for a pair of vertices whose $\mu$-values are of opposite sign,
at least one of them picks up a ``boost'' which is as large as the
boost of the smaller of the two. %The proof appears in Section~\ref{sec:pairing proofs}.  

\begin{proof}[Proof of Lemma~\ref{lem:rvaluebound}]
  The first part, \eqref{eqn:weak_r_bound}, is straightforward: the
  value of $r_i$ is initialized to $c \mu_i$ which clearly satisfies
  \eqref{eqn:weak_r_bound}.  After this, it is changed at most once,
  in which case it has the value $(1-c)f(\beta)$ added or subtracted
  to it depending on the sign of $\mu_i$, and by monotonicity of $f$
  and the fact that $\beta = \min(|\mu_i|, |\mu_{j'}|)$ for some $j'$
  we have $f(\beta) \le f(|\mu_i|)$.

  For the second part, \eqref{eqn:strong_r_bound}, notice that at least
  one of $i$ and $j$ has to be selected in the loop of the algorithm.
  We consider two cases, depending on which was selected first.  Suppose
  $j$ was selected before or in the same iteration
  as $i$.  It was then selected together with some vertex $i' \in V$
  such that $\sgn(\mu_{i'}) = -\sgn(\mu_j) = \sgn(\mu_i)$ and $|\mu_{i'}| \ge |\mu_i|$.
  Thus the boost given to $j$ was at least
  \[(1-c) f(\min(|\mu_j|, |\mu_{i'}|)) \ge (1-c) f(\min(|\mu_j|, |\mu_i|)), \]
  where we have used monotonicity of $f$.% and $\min(|\mu_j|, |\mu_k|) \ge \min(|\mu_j|, |\mu_i|)$.

  The other case, when vertex $i$ is selected before vertex $j$, is
  completely symmetric.
\end{proof}

\begin{definition}
  Given $\mu_1, \mu_2 \in [-1,1]$ the \emph{permissible biases}
  $R_{c,f}(\mu_1,\mu_2) \subseteq [-1,1] \times [-1,1]$
  of the pair are all values of $r_1, r_2$ satisfying
  \eqref{eqn:weak_r_bound} if $\sgn(\mu_1) = \sgn(\mu_2)$ and 
  (\ref{eqn:weak_r_bound}-\ref{eqn:strong_r_bound}) if $\sgn(\mu_1) \ne \sgn(\mu_2)$.

  Notice that the permissible biases of a pair $(\mu_1,\mu_2)$ depends on the
  parameters $c \in [0,1]$ and $f: [0,1] \rightarrow [0,1]$, a monotone function
  satisfying $f(0) = 0$, of Algorithm~\ref{alg:biasselect} hence the notation $R_{c, f}$.
\end{definition}

Note that when $\sgn(\mu_1) = \sgn(\mu_2)$, $R_{c,f}(\mu_1,
\mu_2)$ is of the form $I_1 \times I_2$ where $I_1$ (resp. $I_2$) is the interval of
$r$-values with the same sign as $\mu_1$ (resp. $\mu_2$) satisfying \eqref{eqn:weak_r_bound}.
When $\sgn(\mu_1) \ne \sgn(\mu_2)$ then $R_{c,f}(\mu_1,\mu_2)$
can be similarly written as a union $I_1 \times I_2 \cup I_1' \times
I_2'$, corresponding to which of the two variables $r_1$ and $r_2$ is
subject to the stronger bound of \eqref{eqn:strong_r_bound}.

Now, we can lower bound the approximation ratio of the resulting
algorithm by computing the minimum of $\alpha(\mu_1, \mu_2, \rho, r_1,
r_2)$ over all permissible biases. This motivates the following definitions.

\begin{definition}
  \label{def:pairing alpha}
  For $c \in [0,1]$ and a boost function $f: [0,1] \rightarrow [0,1]$, define
  \[\alpha_{c,f}(\mu_1, \mu_2, \rho) = \min_{r_1, r_2 \in R_{c,f}(\mu_1,\mu_2)} \alpha(\mu_1, \mu_2, \rho, r_1, r_2),\]
  and let
  \[\alpha(c, f) = \min_{(\mu_1, \mu_2, \rho) \in \Conf} \alpha_{c,f}(\mu_1, \mu_2, \rho).\]
\end{definition}

An immediate corollary of Lemma~\ref{lem:rvaluebound} and
Lemma~\ref{lemma:approx} is that, for a fixed value of $c$ and $f$,
the approximation ratio when using Algorithm~\ref{alg:biasselect} to
select biases is at least $\alpha(c, f)$. 
Thus, for a given $c$ and $f$ the approximation ratio of the algorithm
can be computed as a five-dimensional optimization problem.  For a
general $f$ however, the domain of feasible points may not even be
convex.  While it turns out that the problem is convex for the $f$
that we ultimately use, we show that the optimization over $r_1$ and
$r_2$ can be eliminated so that we are again left with a minimization
problem over the space of all configurations. This significantly simplifies 
the computations needed to evaluate $\alpha(c, f)$ and makes our computer-assisted case analysis feasible.

As $R_{c,f}(\mu_1,\mu_2)$ is either of the form $I_1 \times I_2$ or $I_1
\times I_2 \cup I_1' \times I_2'$ for some intervals $I_1, I_2, I_1',
I_2' \subseteq [-1,1]$, we make the following definition.
\begin{definition}
  For a configuration $\mu_1, \mu_2, \rho$ and two closed intervals
  $I_1 = [a_1, b_1], I_2 = [a_2, b_2] \subseteq [-1,1]$, define
  \begin{equation}
    \label{eq:worstr}
    \alpha(\mu_1, \mu_2, \rho, I_1, I_2) = \min_{\substack{r_1 \in I_1\\r_2 \in I_2}} \alpha(\mu_1,
    \mu_2, \rho, r_1, r_2)
  \end{equation}
\end{definition}

Minimizing $\alpha$ over $(r_1, r_2) \in R_{c,f}(\mu_1,\mu_2)$ boils
down to at most two computations of $\alpha(\mu_1, \mu_2, \rho, I_1,
I_2)$.  We have the following theorem, which narrows the search over
$r_1$, $r_2$ down to at most nine different possibilities. The proof is rather 
technical and is left for the end of the current section.
\begin{lemma}
  \label{lemma:worstr} 
  For every configuration $\mu_1, \mu_2, \rho$ and closed intervals
  $I_1 = [a_1, b_1], I_2 = [a_2, b_2]$, we have that 
  \[\alpha(\mu_1, \mu_2, \rho, I_1, I_2) = \min_{(r_1, r_2) \in S \cap I_1 \times I_2} \alpha(\mu_1, \mu_2, \rho, r_1, r_2),\]
  where $S$ is defined as follows. Recall that $\trho = \frac{\rho-\mu_1\mu_2}{\sqrt{(1-\mu_1^2)(1-\mu_2^2)}}$.
  \begin{itemize}
  \item If $\trho \le 0$ then $S$ is the extreme points of the set $I_1 \times I_2$, i.e.
    \[S = \set{(a_1, a_2), (a_1, b_2), (b_1, a_2), (b_1, b_2)}.\]
  \item If $\trho > 0$ then $S$ is the extreme points plus five extra points defined in terms of the function
    $g(x) = 1 - 2 \Phi\left(\Phi^{-1}\left(\frac{1-x}{2}\right) / \trho\right)$. More precisely,
    \[S = \{(0, 0), (a_1, a_2), (a_1, b_2), (b_1, a_2), (b_1, b_2), (a_1, g(a_1)), (b_1, g(b_1)), (g(a_2), a_2), (g(b_2), b_2)\}.\]
  \end{itemize}
\end{lemma}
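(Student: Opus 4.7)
The plan is to observe that, by Definition~\ref{def:alpha}, $\alpha(\mu_1, \mu_2, \rho, r_1, r_2) = \frac{2(1 - \Lambda_{\trho}(r_1, r_2))}{1-\rho}$, so minimizing $\alpha$ over $(r_1, r_2) \in I_1 \times I_2$ is equivalent to maximizing $\Lambda_{\trho}$ over the same rectangle. I would then compute the first and second partials of $\Lambda_{\trho}$, reducing the problem to standard KKT-style case analysis. Using the chain rule with $q_i = (1-r_i)/2$ and Lemma~\ref{lemma:gamma deriv}, the gradient is
\[
\frac{\partial \Lambda_{\trho}}{\partial r_1}(r_1, r_2) \;=\; \frac{1}{2} \;-\; \Phi\!\left(\frac{t_2 - \trho\, t_1}{\sqrt{1-\trho^2}}\right), \qquad t_i = \Phi^{-1}(q_i),
\]
and symmetrically in $r_2$. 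Differentiating once more using the inverse function theorem $dt_i/dr_i = -1/(2\phi(t_i))$ shows that $\partial^2 \Lambda_{\trho} / \partial r_i^2$ has the opposite sign to $\trho$, so $\Lambda_{\trho}$ is convex in each variable when $\trho < 0$ and concave in each variable when $\trho > 0$ (and for $\trho = 0$, $\Lambda_0(r_1, r_2) = (1 + r_1 r_2)/2$ is bilinear by direct computation).

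For the case $\trho \le 0$, convexity (or bilinearity) of $\Lambda_{\trho}$ in each coordinate separately implies that on every edge of $I_1 \times I_2$, the maximum is attained at an endpoint. Iterating this observation in both coordinates pins the maximizer to a corner, so $S$ may consist of the four extreme points alone. For the case $\trho > 0$, the maximizer on $I_1 \times I_2$ lies either in the interior, on the relative interior of an edge, or at a corner. An interior maximizer requires both partials to vanish, giving simultaneously $t_2 = \trho\, t_1$ and $t_1 = \trho\, t_2$; since $\trho < 1$, the only solution is $t_1 = t_2 = 0$, i.e., $(r_1, r_2) = (0,0)$. On the edge $r_1 = a_1$ (fixed), concavity in $r_2$ places the maximum at an endpoint or at a critical point satisfying $\partial_{r_2} \Lambda_{\trho} = 0$, which unravels via the definition of $g$ to $r_2 = g(a_1)$; symmetric computations on the other three edges yield the remaining four points of $S$.

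The main obstacle will be bookkeeping the boundary and degenerate cases: when $|\mu_i| = 1$, Definition~\ref{def:alpha} sets $\trho = 0$, which falls into the first case; when $|\trho| = 1$ the Gaussian is degenerate and $\Gamma_{\trho}$ must be handled by direct inspection; and when candidate points in $S$ fall outside $I_1 \times I_2$, the concavity/convexity argument on each edge still locates the true maximizer among the remaining members of $S \cap (I_1 \times I_2)$, so taking this intersection is harmless. Once these cases are disposed of, the lemma follows from the smooth calculus argument above.
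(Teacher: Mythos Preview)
Your proposal is correct and follows essentially the same route as the paper's proof: the paper also computes the first partial of $\alpha$ (equivalently of $\Lambda_{\trho}$) via Lemma~\ref{lemma:gamma deriv}, shows the second partial has the sign of $\trho$ to get concavity/convexity in each coordinate, and then does the same KKT-style case split (corners for $\trho\le 0$; corners, the interior solution $(0,0)$, and the edge critical points $r_j=g(r_i)$ for $0<\trho<1$), with the degenerate cases $\trho=\pm 1$ handled by direct evaluation of $\Lambda_{\pm 1}$. The only cosmetic difference is that the paper works directly with $\alpha$ rather than recasting the problem as maximizing $\Lambda_{\trho}$.
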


In other words, if the optimizer for \eqref{eq:worstr} is not $(0,0)$, one of
the $r$-values is always at an extreme point of its domain, and the
other is either at an extreme point, or at a point directly
computable from the first value.  In fact, since the permissible
intervals $R_{c,f}(\mu_1, \mu_2)$ only intersect the origin $(0,0)$ at
their endpoints, the possibility $(0,0)$ can be discarded when
computing $\alpha_{c,f}(\mu_1,\mu_2)$.  Thus, the value of
$\alpha_{c,f}(\mu_1, \mu_2, \rho)$ can be computed by evaluating
$\alpha(\mu_1, \mu_2, \rho, r_1, r_2)$ on at most $16$ possible bias
pairs $(r_1, r_2)$. 

%% The following lemma will be needed for the formal proof:

%% \begin{lemma}
%%   If $f$ is Lipshitz continuous and $f(0) = 0$ then $\alpha_{c,f}: \Conf
%%   \rightarrow [0,1]$ is Lipshitz continuous.
%% \end{lemma}

Given the above lemma we use a numerical optimizer
to compute the value $\alpha(c, f)$ for a particular choice of the parameters
$c$ and $f$. 
The result is the following claim. 

\begin{numericalclaim}
  \label{claim:pairing parameters}
  For $c = 0.8056$ and $f(x) = 1.618 \max(0, x-0.478)$, $\alpha(c, f)
  \ge \alphasecondfull$.  
\end{numericalclaim}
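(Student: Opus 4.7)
The plan is to verify the claim via a computer-assisted case analysis over the three-dimensional configuration polytope $\Conf$, leveraging Lemma~\ref{lemma:worstr} to dispose of the two remaining coordinates $r_1, r_2$. Concretely, for the fixed choice $c = 0.8056$ and $f(x) = 1.618 \max(0, x-0.478)$, the function $\alpha_{c,f}(\mu_1,\mu_2,\rho)$ is, by the lemma and Definition~\ref{def:pairing alpha}, the pointwise minimum of $\alpha(\mu_1,\mu_2,\rho, r_1, r_2)$ over a set of at most $16$ concrete pairs $(r_1,r_2)$ determined explicitly from $(\mu_1, \mu_2, \rho)$ (each pair is either an extreme point of the permissible rectangle, or of the form $(r_1, g(r_1))$ or $(g(r_2), r_2)$ with $g$ as in Lemma~\ref{lemma:worstr}). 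Thus the task reduces to lower bounding $16$ explicit three-variable functions of $(\mu_1, \mu_2, \rho)$ over $\Conf$.

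First I would subdivide $\Conf \subseteq [-1,1]^3$ into a finite family of small axis-aligned boxes (modulo the triangle-inequality facets, which only reduce the search region). On each box, I would compute $\alpha$ at the center (using a numerical routine for $\Gamma_{\trho}$) and then extend this pointwise evaluation to a uniform lower bound on the whole box. The natural tool here is a Lipschitz estimate: $\alpha$ is built from $\Lambda_{\trho}$, and by Lemma~\ref{lemma:gamma deriv} the derivative of $\Gamma_{\trho}$ in each coordinate is just $\Phi$ applied to a Gaussian quantile, which is uniformly bounded by $1$. A similar bound can be derived for $\partial_{\trho}\Gamma_{\trho}$ using the PDF formula. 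Combining these with the explicit change of variables $\trho = (\rho - \mu_1\mu_2)/\sqrt{(1-\mu_1^2)(1-\mu_2^2)}$, I can bound the gradient of each of the $16$ candidate functions on a box away from the degeneracy locus, and thereby turn a centre-point evaluation into a rigorous lower bound for the whole box, after subtracting an error term proportional to the box diameter times the Lipschitz constant. If every such bound is at least $\alphasecondfull$, the claim follows.

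The main obstacle is the behaviour near the boundary of $\Conf$, where $|\mu_1|$, $|\mu_2|$, or $|\trho|$ approaches $1$: there the change-of-variables factor blows up, so naive Lipschitz bounds degrade and the required grid spacing would be prohibitive. I would handle this by carving off a boundary shell $\{|\mu_1|, |\mu_2| \ge 1-\delta\} \cup \{|\trho| \ge 1-\delta\}$ for a small constant $\delta$, and analyzing it separately. In the nearly-integral regime, the triangle inequalities force the configuration to be close to one of the finitely many integral configurations such as $\phi_2 = (1,-1,-1)$; on each such patch, $\trho \to 0$ and $\alpha$ can be analyzed by hand (or via a one-dimensional reduction in the free variable) to verify that the approximation ratio there is strictly above $\alphasecondfull$ by a margin that dominates the shell thickness. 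In the remaining interior region the Lipschitz approach is uniform, so a standard adaptive refinement (as in Section~\ref{sec:num-proofs}, in the spirit of \cite{zwick02computer,sjogren09rigorous}) terminates in finite time and certifies the bound. The actual margin is small --- only of order $\alphasecondmargin$ above the target --- which is precisely why the honest claim is phrased as \emph{Numerical}, with the four-digit version being rigorously certified in Section~\ref{sec:num-proofs}.
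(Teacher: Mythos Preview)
Your overall plan---reduce the $(r_1,r_2)$ minimization to at most $16$ explicit candidates via Lemma~\ref{lemma:worstr}, then run an adaptive branch-and-bound over $\Conf$ using interval/Lipschitz bounds---is exactly the machinery the paper deploys in Section~\ref{sec:num-proofs}. However, the paper does \emph{not} prove Claim~\ref{claim:pairing parameters} this way (or at all): it is stated as a non-rigorous numerical observation obtained from an optimizer, and the rigorous statement is the weaker Theorem~\ref{thm:rig-pairing}, which only bounds $\alpha_\delta(c,f)$ over the truncated space $\Conf_\delta$ after the preprocessing of Lemma~\ref{lemma:make sdp solution smooth}.

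The reason is precisely the boundary analysis you sketch, which has a genuine gap. First, having $|\mu_1|$ close to $1$ does \emph{not} force the configuration close to an integral point: the triangle inequalities only pin $\rho \approx \mu_1\mu_2$, leaving a one-parameter family with $\mu_2$ free. Second, and more seriously, $\trho$ does not tend to $0$ there; along such an approach the quotient $(\rho-\mu_1\mu_2)/\sqrt{(1-\mu_1^2)(1-\mu_2^2)}$ is a $0/0$ form and can take any value in $[-1,1]$, so your ``$\trho\to 0$ and analyze by hand'' step fails. Third, the paper explicitly notes that worst or near-worst configurations for this $(c,f)$ sit right at this boundary, so there is no slack to absorb a shell of thickness $\delta$: the margin over $\alphasecondfull$ is of order $\alphasecondmargin$, not something that dominates the degeneration of your Lipschitz constants. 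This is exactly why the paper sidesteps the issue by perturbing the SDP solution (Lemma~\ref{lemma:make sdp solution smooth}) and proving the bound only over $\Conf_\delta$, rather than attempting to certify the full $\Conf$ statement.
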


For our formal proof that we can achieve approximation ratio at least
$\alphasecond$, we need to modify Algorithm~\ref{alg:generic} slightly
to exlude certain types of configurations that are challenging for our
prover program.  In particular, we are only able to prove a good
approximation ratio for configurations in which all $|\mu_i|$'s and
$|\rho_{ij}|$'s are bounded away from $1$, so we modify
Algorithm~\ref{alg:generic} to perform a simple preprocessing step on
the vectors first to make sure that they are not too close to being
integral.  The details of this appears in Section~\ref{sec:num-proofs}
with the $\alphasecond$-algorithm being given by
Theorem~\ref{thm:rig-pairing}.  The choice of $c$ and $f$ used in our
formal proof is the same as in Claim~\ref{claim:pairing parameters}.

Figure~\ref{fig:fplot} shows the graphs of the two functions $\mu
\mapsto c \mu$ and $\mu \mapsto c\mu + (1-c)f(\mu)$, corresponding to
the typical lower and upper bound for the bias $r$ as a function of
$\mu$, for the values used in Claim~\ref{claim:pairing parameters}.

\begin{figure}
  \centering
  \scalebox{1}{\import{plots/}{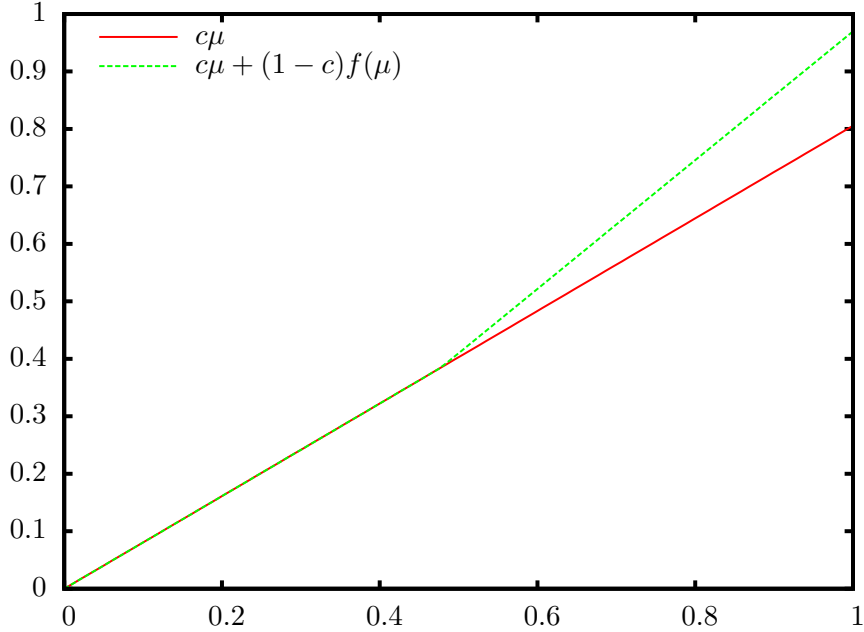}}
  \caption{The two functions $c\mu$ and $c\mu + (1-c)f(\mu)$.}
  \label{fig:fplot}
\end{figure}

When attempting to improve the approximation ratio, it turns out that
there are now several different forms of critical or near-critical
configurations, each of which imposes some restrictions on the
behaviour of $c$ and $f$.  Moreover, as is common for this type of
algorithm, our computations indicate that the worst configurations
$\mu_1, \mu_2, \rho$ lie at the surface of the space of
configurations $\Conf$.  In Figure~\ref{fig:surface ratio}, we give
contour plots of $\alpha_{c,f}(\mu_1, \mu_2, \rho)$ along this surface.

\begin{figure}
  \centering
  \subfigure[Lower envelope $\rho = -1 + |\mu_1 + \mu_2|$]{\scalebox{1}{\import{plots/}{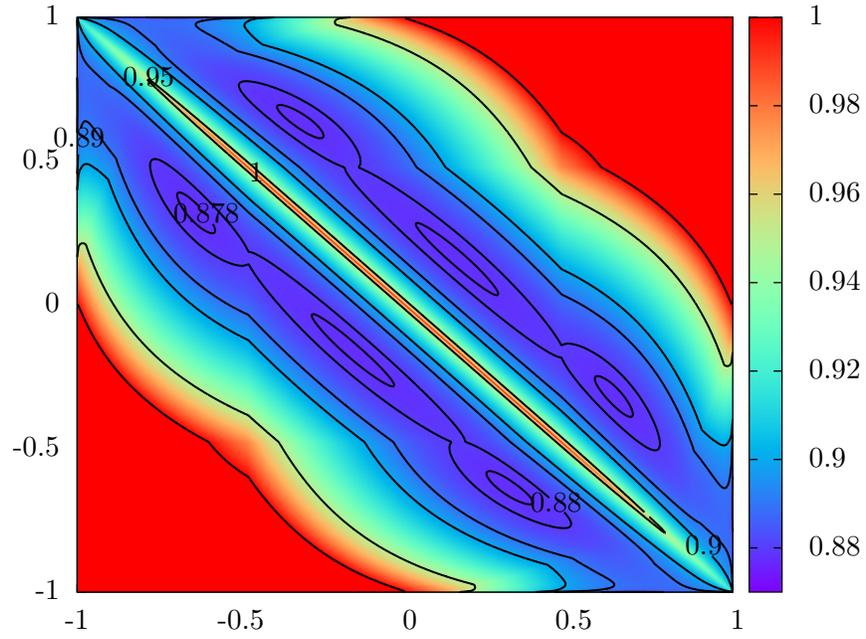}}}
  \subfigure[Upper envelope $\rho = 1 - |\mu_1 - \mu_2|$]{\scalebox{1}{\import{plots/}{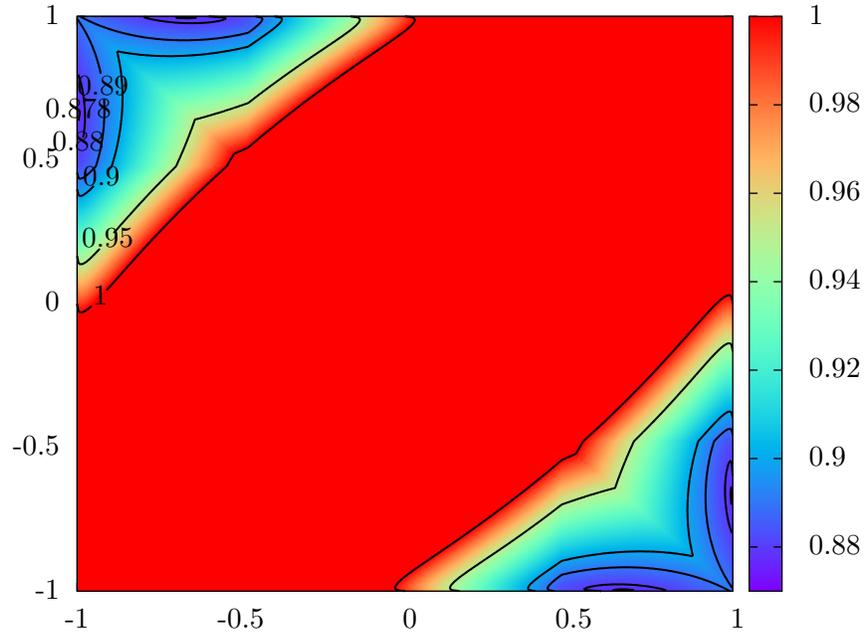}}}
  \caption{Approximation ratio along the surface of $\Conf$ when projected on the $(\mu_1,\mu_2)$-plane.}
  \label{fig:surface ratio}
\end{figure}

%% \begin{figure}
%%   \centering
%%   \subfigure[$\mu_2 = -1 + |\mu_1 + \rho|$]{\scalebox{1}{\import{plots/}{pairing-contour-mu2min}}}
%%   \caption{Approximation ratio along the surface of $\Conf$}
%%   \label{fig:surface ratio 2}
%% \end{figure}

%\subsection{Proofs of Lemmata of the Pairing Algorithm}
%\label{sec:pairing proofs}

%% In this section we prove Lemmata~\ref{lem:rvaluebound} and~\ref{lemma:worstr}.

We now come back to the proof of Lemma~\ref{lemma:worstr} restated here for convenience.

\begin{relemma}{lemma:worstr} 
  For every configuration $\mu_1, \mu_2, \rho$ and closed intervals
  $I_1 = [a_1, b_1], I_2 = [a_2, b_2]$, we have that 
  \[\alpha(\mu_1, \mu_2, \rho, I_1, I_2) = \min_{(r_1, r_2) \in S \cap I_1 \times I_2} \alpha(\mu_1, \mu_2, \rho, r_1, r_2),\]
  where $S$ is defined as follows. Recall that $\trho = \frac{\rho-\mu_1\mu_2}{\sqrt{(1-\mu_1^2)(1-\mu_2^2)}}$.
  \begin{itemize}
  \item If $\trho \le 0$ then $S$ is the extreme points of the set $I_1 \times I_2$, i.e.
    \[S = \set{(a_1, a_2), (a_1, b_2), (b_1, a_2), (b_1, b_2)}.\]
  \item If $\trho > 0$ then $S$ is the extreme points plus five extra points defined in terms of the function
    $g(x) = 1 - 2 \Phi\left(\Phi^{-1}\left(\frac{1-x}{2}\right) / \trho\right)$. More precisely,
    \[S = \{(0, 0), (a_1, a_2), (a_1, b_2), (b_1, a_2), (b_1, b_2), (a_1, g(a_1)), (b_1, g(b_1)), (g(a_2), a_2), (g(b_2), b_2)\}.\]
  \end{itemize}
\end{relemma}

\begin{proof}
  We consider several cases depending on the value of $\trho$.
  \paragraph{Case 1: $|\trho| < 1$.} Using
  Lemma~\ref{lemma:gamma deriv} and the definition of
  $\Lambda_{\trho}$ we have that
  \begin{align*}
    \frac{\deriv}{\deriv r_1} \Lambda_{\trho}(r_1, r_2) &= \frac{1}{2} - \Phi\left(\frac{t(r_2) - \trho t(r_1)}{\sqrt{1-\trho^2}}\right),%\\
  \end{align*}
  where we write $t(r) = \Phi^{-1}\left(\frac{1-r}{2}\right)$.  Thus,
  \begin{equation}
    \label{eqn:alpha deriv}
    \frac{\deriv}{\deriv r_1}\alpha(\mu_1, \mu_2, \rho,
    r_1, r_2) = \frac{-2}{1-\rho} \frac{\deriv \Lambda_{\trho}}{\deriv
      r_1}(r_1, r_2) = \frac{1}{1-\rho}\left(2\Phi\left(\frac{t(r_2) - \trho t(r_1)}{\sqrt{1-\trho^2}}\right)-1\right).
  \end{equation}

  \paragraph{Subcase 1.1: $-1 < \trho \le 0$.}
  Computing the second derivative of $\alpha$ with respect to $r_1$ we
  have
  \begin{align*}
    \frac{\deriv^2}{\deriv r_1^2}\alpha(\mu_1, \mu_2, \rho, r_1, r_2)
    &=-\frac{2\trho t'(r_1)}{(1-\rho)\sqrt{1-\trho^2}}\phi\left(\frac{t(r_2) - \trho t(r_1)}{\sqrt{1-\trho^2}}\right)\\
    &=\frac{\trho}{\phi(t(r_1))(1-\rho)\sqrt{1-\trho^2}}\phi\left(\frac{t(r_2) - \trho t(r_1)}{\sqrt{1-\trho^2}}\right) \leq 0.
  \end{align*}
  where we have used that $\frac{\deriv}{\deriv x}\Phi^{-1}(x) = 1/\phi(\Phi^{-1}(x))$.
  Thus $\alpha$ is concave in $r_1$ in this subcase.  By
  symmetry the same holds for $r_2$ as well which implies that 
  $\alpha(\mu_1, \mu_2, \rho, I_1, I_2)$ is minimized at one of the
  four extreme points as claimed.

  \paragraph{Subcase 1.2: $0 < \trho < 1$.}
  This case requires a little more work.
  Fix some minimum $r_1^*$, $r_2^*$ of \eqref{eq:worstr}.  If $(r_1^*,
  r_2^*) \in \{a_1, b_1\} \times \{a_2, b_2\}$ we are done, so we can
  assume that one of $r_1^*$ and $r_2^*$ lies strictly inside its interval.
  Suppose $r_1^*$ is in the interior of $I_1$, i.e., $a_1 < r_1^* < b_1$.  Then
  necessarily 
  \begin{equation*}
    \frac{\deriv}{\deriv r_1}\alpha(\mu_1, \mu_2, \rho,
  r_1^*, r_2^*) = 0.
  \end{equation*}
  By \eqref{eqn:alpha deriv} this implies that
  \[\Phi\left(\frac{t(r_2^*) - \trho t(r_1^*)}{\sqrt{1-\trho^2}}\right) = 1/2,\]
  which has the unique solution
  \begin{align}
    \frac{t(r_2^*) - \trho t(r_1^*)}{\sqrt{1-\trho^2}} &= 0, \nonumber\\
    \intertext{or equivalently,}
    \label{eqn:alpha deriv zero cond}
    t(r_2^*) &= \trho t(r_1^*),
  \end{align}
  Similarly, if $a_2 < r_2^* < b_2$, it must be the case that
  $t(r_1^*) = \trho t(r_2^*)$.
  
  This implies that if \emph{both} $r_1^*$ and $r_2^*$ lie strictly inside
  their respective intervals then $t(r_1^*) = \trho t(r_2^*) =
  \trho^2 t(r_1^*)$. As $|\trho| < 1$ this implies $t(r_1^*) = t(r_2)^* = 0$ 
  which has the unique solution $r_1^* = r_2^* = 0$.

  On the other hand if \emph{exactly one} of $r_1^*$ and $r_2^*$ lies
  strictly inside its respective interval, say $r_1^*$, then by
  \eqref{eqn:alpha deriv zero cond}, $r_1^* = t^{-1}(t(r_2^*)/\trho)=g(r_2^*)$.  

  \paragraph{Case 2: $\trho = 1$.}  In this case, 
  \begin{equation}
    \label{eqn:lambda when rho is 1}
    \alpha(\mu_1, \mu_2, \rho, r_1, r_2) = \frac{2\left(1 - \Lambda_1(r_1, r_2) \right)}{1 - \rho}
    = \frac{2\left(1 - \pro[g\sim \mathcal{N}(0, 1)]{\strut g \not\in [t(r_1), t(r_2)]} \right)}{1 - \rho} = \frac{|r_1-r_2|}{(1-\rho)}.
  \end{equation}
  The minimizer $(r_1^*, r_2^*)$ of this expression depends on
  whether $I_1 \cap I_2 = \emptyset$ or not.  If $I_1 \cap I_2 =
  \emptyset$ then the unique minimizer $(r_1^*, r_2^*)$ is in $\{a_1,
  b_1\} \times \{a_2, b_2\}$.  Otherwise, if $I_1 \cap I_2 \ne
  \emptyset$, then the minimum is zero and any $r_1^* = r_2^* = r^* \in I_1 \cap I_2$ is a
  minimizer of \eqref{eqn:lambda when rho is 1}. In particular we
  can choose $r^*$ to be the endpoint of one of the intervals.  Noting
  that when $\trho = 1$ we have $g(x) = x$ finishes this case.

  \paragraph{Case 3: $\trho = -1$.} 
  Similarly to the previous case we now have
  \[
  \alpha(\mu_1, \mu_2, \rho, r_1, r_2) = \frac{2\left(1 - \Lambda_{-1}(r_1, r_2) \right)}{1 - \rho}
    = \frac{2\left(1 - \pro[g\sim \mathcal{N}(0, 1)]{\strut g \in [-t(r_1), t(r_2)]} \right)}{1 - \rho} = \frac{2-|r_1+r_2|}{(1-\rho)}.
  \]
  The unique minimizer $(r_1^*, r_2^*)$ of this expression is clearly in $\{a_1, b_1\} \times \{a_2, b_2\}$.
\end{proof}

%%% Local Variables:
%%% mode: latex
%%% TeX-master: "bisection.tex"
%%% End:

\section{\balmaxtwosat{}}\label{sec:2sat}
Algorithm~\ref{alg:generic} can be directly applied to any \balcsp{P}.
In particular it is interesting to do this for \balmaxtwosat{}.

In the language of this paper, the best algorithm for \maxtwosat{}
\cite{lewin02improved} uses a linear bias selection algorithm
$r_i = c \cdot b_i$ (see description in \cite{austrin07balanced}) and
so it already satisfies the constraint $\sum r_i = 0$.  Thus it
immediately extends to the case of \balmaxtwosat{}, implying that this
problem is approximable within $\alpha_{LLZ} - \epsilon$ for every
$\epsilon > 0$, where $\alpha_{LLZ} \approx 0.9401$ is the
approximation threshold for \maxtwosat{} assuming the UGC.

Furthermore, it is easy to see that for any predicate $P$, \balcsp{P}
is at least as hard as \csp{P};  the reduction from \csp{P} to
\balcsp{P} simply produces two disjoint copies of the \csp{P} instance
and negates all literals in one of the copies.

In particular, this implies that assuming the UGC, the approximation
threshold of \balmaxtwosat{} is the same as the threshold for
\maxtwosat{}, namely $\alpha_{LLZ} \approx 0.9401$.  This fact, that
the balance constraint does not make \maxtwosat{} harder, can be seen
as circumstantial evidence that \maxbisection{} is as easy as
\maxcut{}.

%%% Local Variables:
%%% mode: latex
%%% TeX-master: "bisection.tex"
%%% End:

\section{Proofs of Approximation Ratios}\label{sec:num-proofs}

Unfortunately, our formal proofs of approximation ratios are based on
case analysis of several million cases, and we therefore have to
construct them with the assistance of a computer.  The case analysis
is similar to that of e.g., \cite{zwick02computer,sjogren09rigorous}
and proceeds by recursively dividing the search space $[-1,1]^3$ into
subcubes.  When processing a cube $C \subseteq [-1,1]^3$, we can compute lower and upper
bounds on the performance of our algorithm $\alpha(\mu_1, \mu_2, \rho, r_1, r_2)$ for $(\mu_1, \mu_2,
\rho) \in C$.  To handle this and to also take care of the rounding
errors inherent in finite precision calculations, we use interval
arithmetic.

When processing a cube $C$, there are four possibilies:
\begin{enumerate}
\item $C$ is completely outside the space of configurations $\Conf$.  %Then this case is OK.
\item The lower bound on $\alpha$ in the cube exceeds the
  approximation ratio we are trying to prove.  %Then this case is OK.
\item The upper bound on $\alpha$ in the cube is lower than the
  approximation ratio we are trying to prove.  %Then the bound we are trying to prove is false.
\item None of the above: the case is inconclusive.  Then we subdivide
  $C$ into eight subcubes in the natural way, and we check each of them recursively.  
\end{enumerate}
Note that we need to run the above test till we reach our precision threshold and no inconclusive cases remain. Also, this will translate into a proof for our approximation performance as long as we avoid case 3. 
%In this way, we keep processing cubes until no inconclusive cases remain.  
Unfortunately, it turns out that there is one issue to deal
with.  Specifically, consider a configuration $(\mu_1, \mu_2, \rho)$
where $\mu_1 \approx \pm 1$, or more precisely a cube $C$ such that
all configurations in $C$ have $\mu_1 \approx \pm 1$.  Then the
dependence of $\trho =
\frac{\rho-\mu_1\mu_2}{\sqrt{1-\mu_1^2}\sqrt{1-\mu_2^2}}$ on $\mu_1$
is not Lipschitz continuous meaning that even when the cube is small
the uncertainty in $\trho$ can be very large, which in turn results in
poor bounds on the value of $\alpha$ and in particular our lower bound
will not be strong enough to conclude that this case is not problematic.
%OK.  
This turns out to be not just a hypothetical issue, but a very real one,
because in our algorithm there are worst or near-worst configurations
which have $\mu_1$ (or $\mu_2$) close or equal to $\pm 1$.  A similar
issue occurs when $\rho \approx 1$, in which case the SDP value is
close to $0$ and we need very sharp estimates on $\trho$ in order to
get a sufficiently strong lower bound on $\alpha$.  To overcome this,
the simplest recourse is to slightly alter Algorithm~\ref{alg:generic} 
by adding a preprocessing step which precludes
%ensure that we no 
configurations $(\mu_1,\mu_2,\rho_{12})$ where $|\mu_i|$ or $|\rho|$ is close to $1$.  This causes us an
additional small loss in the SDP objective value.  We have the
following theorem.

\begin{lemma}
  \label{lemma:make sdp solution smooth}
  Given $\delta > 0$ and an SDP solution $\bv_0, \ldots, \bv_n$ (unit
  vectors), we can construct in polynomial time a new SDP solution
  $\bv_0', \ldots, \bv_n'$ (unit vectors) such that
  \begin{enumerate}
  \item $\SDPVal(\{\bv'_i\}) \ge \SDPVal(\{\bv_i\}) - \delta$,
  \item $|\iprod{\bv'_i}{\bv'_j}| \le 1 - \delta$ for every $0 \le i < j\le k$,
  \item If $\{\bv_i\}$ satisfies the triangle inequalities than so
    does $\{\bv_i'\}$,
  \item If $\{\bv_i\}$ is $\epsilon$-uncorrelated for some $\epsilon \ge
    0$ then so is $\{\bv_i'\}$.
  \end{enumerate}
\end{lemma}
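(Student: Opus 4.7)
The plan is to perturb each non-anchor vector by mixing it with a tiny fresh orthogonal direction, leaving $\bv_0$ untouched. Concretely, set $\bv_0' = \bv_0$ and, for $i \geq 1$, let
\[ \bv_i' \;=\; s\,\bv_i + \gamma\, \bu_i, \qquad s \defin \sqrt{1-\gamma^2}, \]
where $\bu_1, \ldots, \bu_n$ are fresh mutually orthogonal unit vectors, each also orthogonal to every $\bv_j$, and $\gamma^2 = \Theta(\delta)$ (concretely $\gamma^2 = 2\delta$, up to a small constant adjustment needed for the SDP loss). Each $\bv_i'$ is a unit vector by construction, and no additional dimensions beyond $n$ fresh orthogonal directions are required, so the construction runs in polynomial time.

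Items 1 and 2 fall out of a direct calculation. The only nontrivial inner products are $\iprod{\bv_0'}{\bv_i'} = s\,\mu_i$ and $\iprod{\bv_i'}{\bv_j'} = s^2 \rho_{ij}$ for distinct $i,j \geq 1$, both bounded in absolute value by $s$. Since $(1-\delta)^2 \ge 1-\gamma^2$ gives $s \le 1-\delta$, this proves item 2. Summing the change in edge contributions and using $\sum_{ij \in E}\rho_{ij} = |E| - 2\,\SDPVal(\{\bv_i\})$ gives
\[ \SDPVal(\{\bv_i'\}) \;=\; s^2\,\SDPVal(\{\bv_i\}) + \tfrac{\gamma^2}{2}|E|, \]
so the loss is at most $\gamma^2\,\SDPVal(\{\bv_i\}) = O(\delta)$ under the paper's normalization $\Val(\cdot) \in [0,1]$; tuning the constant in $\gamma^2$ yields the clean bound $\delta$.

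Item 4 is essentially automatic. Writing $\bw_i' = \bv_i' - \iprod{\bv_0'}{\bv_i'}\bv_0' = s\bw_i + \gamma \bu_i$, we have $\|\bw_i'\|^2 = s^2\|\bw_i\|^2 + \gamma^2 \geq s^2\|\bw_i\|^2$, while for $i\neq j$, $\iprod{\bw_i'}{\bw_j'} = s^2\,\iprod{\bw_i}{\bw_j}$, because the $\bu_i$'s are mutually orthogonal and orthogonal to every $\bw_k$. Dividing yields $|\iprod{\obw_i'}{\obw_j'}| \leq |\iprod{\obw_i}{\obw_j}|$, so the average pairwise absolute correlation can only shrink and $\epsilon$-uncorrelation is preserved. (If $\bw_i = 0$ originally, then $\bw_i' = \gamma\bu_i$ is orthogonal to every other $\bw_j'$, which is even better.)

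The one step with any real content is item 3. By the symmetries $\bv_i \leftrightarrow -\bv_i$ and $\bv_j \leftrightarrow -\bv_j$, all four inequalities \eqref{eq:triangle ineq} reduce to showing $s\mu_i + s\mu_j + s^2\rho_{ij} + 1 \geq 0$ whenever $a := 1+\mu_i+\mu_j+\rho_{ij} \geq 0$. The key algebraic observation is the identity
\[ s\mu_i + s\mu_j + s^2\rho_{ij} + 1 \;=\; (1-s)\bigl[\,s(\mu_i+\mu_j+1) + 1\,\bigr] + s^2 a, \]
in which all three summands are manifestly nonnegative: $1-s \geq 0$ since $\gamma \in [0,1]$; the bracketed term is at least $1-s \geq 0$ using $\mu_i+\mu_j+1 \geq -1$; and $s^2 a \geq 0$ by assumption. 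Spotting (or guessing) this decomposition is the only mild obstacle; once it is in hand, the lemma assembles routinely, and the choice $\bv_0' = \bv_0$ (rather than also shrinking $\bv_0$) is precisely what makes this algebraic decomposition work out nonnegative.
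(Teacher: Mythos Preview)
Your construction is exactly the paper's: keep $\bv_0$ fixed and replace each $\bv_i$ by $\sqrt{1-\gamma^2}\,\bv_i + \gamma\,\bu_i$ for fresh orthogonal $\bu_i$'s. Your verifications are correct and in fact more detailed than the paper's sketch, which simply records the scalings $\mu_i' = s\mu_i$, $\rho_{ij}' = s^2\rho_{ij}$ and declares the four items ``straightforward calculations''; your algebraic decomposition for item~3 and the direct inequality $|\iprod{\obw_i'}{\obw_j'}| \le |\iprod{\obw_i}{\obw_j}|$ for item~4 are precisely what the paper leaves implicit.
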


\begin{proof}[Proof sketch]
  We replace every vector $\bv_1, \ldots, \bv_n$ by $\bv_i' =
  \sqrt{1-\delta} \bv_i + \sqrt{\delta} \bu_i$, where $\bu_i$ is a
  unit vector orthogonal to all other vectors (we keep $\bv'_0 =
  \bv_0$ the same).  This has the effect of scaling all $\mu_i$'s by $1-\delta$ and all $\rho_{ij}$'s by $(1-\delta)^2$, i.e., 
  \begin{align*}
    \mu_i' &= (1-\delta) \mu_i & \rho'_{ij} &= (1-\delta)^2 \rho_{ij}.
  \end{align*}
As a result, % using which 
the four items can be proven through straightforward
  calculations.  The last item may be easier to think about in the
  probabilistic view: in terms of the local distributions, the
  transformation we did has the effect of mixing the local
  distributions with the uniform distribution, which clearly only
  decreases correlations.
\end{proof}

With this lemma in place, it is natural to introduce a variation
$\Conf_\delta$ of the space of configurations $\Conf \subseteq
[-1,1]^3$, where we exclude all configurations where some coordinate
exceeds $1-\delta$ in absolute value, i.e., 
\[
\Conf_\delta := \Conf \cap [-1+\delta, 1-\delta]^3.
\]
We refer to such configurations as \emph{smooth}.  We then extend the
various $\alpha$ definitions which involve minimization over $\Conf$
in a similar way: analogously to Definition~\ref{def:linear_alpha} we
write
\[\alpha_\delta(c) := \min_{(\mu_1, \mu_2, \rho) \in \Conf_\delta} \alpha(\mu_1, \mu_2, \rho, c \cdot \mu_1, c \cdot \mu_2),\]
and analogously to Definition~\ref{def:pairing
  alpha} we write
\[\alpha_\delta(c, f) = \min_{(\mu_1, \mu_2, \rho) \in \Conf_\delta} \alpha_{c,f}(\mu_1, \mu_2, \rho).\]

By Lemma~\ref{lemma:make sdp solution smooth}, if $\alpha_\delta(c,f)
\ge \alpha$ for some $c$, $f$ and $\delta$, using the framework of
Section~\ref{sec:general} we immediately obtain an $(\alpha - \delta -
\epsilon)$-approximation algorithm (for any $\epsilon > 0$, with
running time $O(n^{\poly(1/\epsilon)})$), and similarly for
$\alpha_\delta(c)$.

By computer-assisted case analysis, we are able to prove the following
two theorems, lower bounding the approximation ratios of our two types
of rounding on smooth configurations. First, we are able to justify the performance of our first algorithm as presented in Section~\ref{sec:first_improvement}. 
\begin{theorem}\label{thm:rig-linear}
  For $c = 0.86451$ and $\delta = 10^{-5}$, we have $\alpha_\delta(c) \ge
  0.87362$.
\end{theorem}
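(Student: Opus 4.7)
The proof is a computer-assisted case analysis following the framework described at the start of Section~\ref{sec:num-proofs}. The plan is to recursively subdivide the box $[-1+\delta, 1-\delta]^3$ into subcubes and verify, on each subcube $C$, that either (i) $C$ lies outside the configuration polytope $\Conf$ (one of the triangle inequalities \eqref{eq:triangle ineq} fails throughout $C$), or (ii) a rigorous lower bound on the function $\alpha(\mu_1, \mu_2, \rho, c\mu_1, c\mu_2)$ over $C$ exceeds $0.87362$. Cubes satisfying neither test are split into eight equal pieces and reprocessed. Test (i) is a routine check of four affine inequalities on coordinate intervals; test (ii) is where the real work lies.

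To implement (ii), I would build interval-arithmetic enclosures of $\Phi$, $\Phi^{-1}$, and the bivariate normal CDF $\Gamma_{\trho}$, using standard series expansions or quadratures with explicit remainder bounds, and compose them to obtain a rigorous enclosure of
\[
\alpha(\mu_1, \mu_2, \rho, c\mu_1, c\mu_2) = \frac{2\bigl(1 - \Lambda_{\trho}(c\mu_1, c\mu_2)\bigr)}{1 - \rho}, \qquad \trho = \frac{\rho - \mu_1 \mu_2}{\sqrt{(1-\mu_1^2)(1-\mu_2^2)}},
\]
over any input cube $C \subseteq \Conf_\delta$. As $C$ shrinks these enclosures should tighten and strictly exceed the target everywhere except in neighbourhoods of the near-worst configurations identified in Section~\ref{sec:first_improvement}, namely $\phi_1 \approx (0.177, 0.177, -0.646)$ and $\phi_2 = (1,-1,-1)$; in those regions one simply subdivides more aggressively until the bound crosses the threshold.

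The main obstacle is the sensitivity of $\trho$ near the boundary of $[-1,1]^3$: the partial derivatives $\partial \trho / \partial \mu_i$ blow up like $(1-\mu_i^2)^{-3/2}$ as $|\mu_i| \to 1$, and the denominator $1-\rho$ shrinks to zero as $\rho \to 1$. Both effects inflate the interval enclosures of $\alpha$ and, without smoothing, would prevent the recursion from terminating on cubes touching those boundaries. This is the whole reason for restricting to $\Conf_\delta$: with $\delta = 10^{-5}$ one has $1 - \mu_i^2 \ge \delta$ and $1 - \rho \ge \delta$, so the relevant quantities become Lipschitz with constants polynomial in $1/\delta$, and interval widths contract under subdivision at a controllable rate. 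Even so, since $\phi_2$ sits on the original boundary, a disproportionate share of the subdivision budget will accumulate in the corresponding corner of $\Conf_\delta$, and careful engineering of the enclosure near $\rho \approx 1$ will be needed to keep the tree size manageable.

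Finally, the target $0.87362$ is slightly below the Matlab estimate $\alphalinearfull$ attained at the optimal $c^* \approx \alphalinearbestc$; the chosen $c = 0.86451$ differs from $c^*$ only in the fifth digit, so the achievable minimum should still sit close to $\alphalinearfull$, leaving a slack of order $5 \cdot 10^{-5}$ that the interval arithmetic must preserve throughout. The proof concludes once the recursion halts with every cube classified as (i) or (ii), and in particular with no cube ever entering case 3 of Section~\ref{sec:num-proofs}.
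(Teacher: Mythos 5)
Your proposal matches the paper's proof method: the theorem is established by exactly the computer-assisted recursive subdivision of the (smoothed) configuration space described in Section~\ref{sec:num-proofs}, with interval-arithmetic enclosures of $\alpha(\mu_1,\mu_2,\rho,c\mu_1,c\mu_2)$ on each subcube and the restriction to $\Conf_\delta$ serving precisely to tame the blow-up of $\trho$ near $|\mu_i|\approx 1$ and of the denominator near $\rho\approx 1$. The paper reports that this recursion terminates after roughly $20$ million cases, which is the entire content of its proof; your description adds no gaps and none are present.
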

The proof of Theorem~\ref{thm:rig-linear} consists of roughly $20$
million cases and the theorem takes about $9$ minutes to prove on a
SunFire X2270 machine with Intel X5675 CPUs.

Most importantly, the next theorem implies our improved approximation guarantee, as described in 
%, and in particular implying
Theorem~\ref{thm:main}.
\begin{theorem}\label{thm:rig-pairing}
  For $c = 0.8056$, $f(x) = 1.618 \max(0, x-0.478)$ and $\delta =
  10^{-5}$, we have $\alpha_{\delta}(c,f) \ge 0.87762$.
\end{theorem}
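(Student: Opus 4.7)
\medskip

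\noindent\textbf{Proof plan for Theorem~\ref{thm:rig-pairing}.}
The plan is to carry out a computer-assisted case analysis on the three-dimensional cube $\Conf_\delta$, following the recipe outlined at the top of Section~\ref{sec:num-proofs} but applied to the pairing-based rounding instead of the linear one. First I would reduce the five-dimensional optimization defining $\alpha_{c,f}(\mu_1,\mu_2,\rho)$ to an easily enumerable problem: by Definition~\ref{def:pairing alpha} together with Lemma~\ref{lemma:worstr}, for any fixed configuration the quantity $\alpha_{c,f}(\mu_1,\mu_2,\rho)$ equals $\min_{(r_1,r_2)\in S'}\alpha(\mu_1,\mu_2,\rho,r_1,r_2)$, where $S'$ is a set of at most $16$ candidate pairs obtained by taking, for each of the (one or two) rectangles $I_1\times I_2$ describing $R_{c,f}(\mu_1,\mu_2)$, the four extreme points plus the (at most) five additional points of Lemma~\ref{lemma:worstr} that can arise when $\trho>0$. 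Since the rectangles only meet $(0,0)$ at their endpoints, this point can be discarded.

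Next I would set up an interval-arithmetic evaluator for $\alpha(\mu_1,\mu_2,\rho,r_1,r_2)$. Given a cube $C\subseteq \Conf_\delta$, the quantity $\trho = (\rho-\mu_1\mu_2)/\sqrt{(1-\mu_1^2)(1-\mu_2^2)}$ lies in a bounded interval since $|\mu_i|\le 1-\delta$, so the division and square roots are safe and Lipschitz on $C$. From there, $\Lambda_{\trho}(r_1,r_2) = 2\Gamma_{\trho}\!\left(\tfrac{1-r_1}{2},\tfrac{1-r_2}{2}\right)+\tfrac{r_1+r_2}{2}$ can be bracketed using a numerically validated implementation of the bivariate normal CDF, together with Lemma~\ref{lemma:gamma deriv} for derivative-based monotonicity bounds when the cube is small; the function $g$ appearing in Lemma~\ref{lemma:worstr} is similarly handled via $\Phi^{-1}$. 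All arithmetic is performed in directed-rounding interval mode so that the resulting enclosure of $\alpha_{c,f}(\mu_1,\mu_2,\rho)$ over $C$ is mathematically rigorous.

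The outer loop is then the standard recursive subdivision: starting from $\Conf_\delta$, process each cube $C$ by (i) discarding it if it is disjoint from $\Conf$ (verifiable by checking the inequalities \eqref{eq:triangle ineq} on an enclosure), (ii) marking it \emph{safe} if the lower end of the enclosure of $\min_{(r_1,r_2)\in S'}\alpha(\mu_1,\mu_2,\rho,r_1,r_2)$ exceeds $0.87762$, (iii) aborting if the upper end ever falls below $0.87762$ at a concrete configuration (which, given Claim~\ref{claim:pairing parameters}, will not happen), and (iv) otherwise splitting $C$ into eight subcubes and recursing. The final certificate is the (large but finite) tree of cubes; combined with Lemmata~\ref{lem:cor}, \ref{lemma:approx}, \ref{lemma:balance_loss} and \ref{lemma:make sdp solution smooth}, it yields the claimed approximation ratio of $\alphasecond$ for \maxbisection{}.

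The main obstacle I anticipate is controlling the precision near the near-critical configurations identified in Section~\ref{sec:pairing} (e.g.\ the analogues of $\phi_1$ and $\phi_2$ from Section~\ref{sec:first_improvement}): there $\alpha_{c,f}$ is very close to $0.87762$, so naive interval bounds will remain inconclusive until the cubes are extremely small, causing a combinatorial blow-up in the number of leaves. To mitigate this, I would (a) exploit the $(\mu_1,\mu_2)\mapsto(-\mu_1,-\mu_2)$ symmetry to halve the search region, (b) use first-order Taylor enclosures for $\Lambda_{\trho}$ in the three inner products (the derivatives are bounded in $\Conf_\delta$), which converge quadratically rather than linearly as cubes shrink, and (c) dispatch on the sign of $\trho$ and on which rectangle of $R_{c,f}$ is active, so that inside each branch only a small constant number of candidate pairs in $S'$ actually need to be minimized over. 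The bookkeeping for the second rectangle in the case $\sgn(\mu_1)\neq\sgn(\mu_2)$, and for the piecewise-linear kink of $f$ at $x=0.478$, are the other places where the implementation requires care, but neither poses a fundamental difficulty.
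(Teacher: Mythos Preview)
Your proposal is correct and follows essentially the same computer-assisted case analysis as the paper: reduce the inner optimization to the finite candidate set via Lemma~\ref{lemma:worstr}, evaluate $\alpha_{c,f}$ on subcubes of $\Conf_\delta$ using interval arithmetic, and recursively subdivide inconclusive cubes until every leaf is either infeasible or certifiably above $0.87762$. The paper reports that this yields roughly $140$ million cases; the implementation refinements you suggest (symmetry, Taylor enclosures, branching on the sign of $\trho$ and the active rectangle of $R_{c,f}$) are sensible optimizations but not required for correctness.
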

The proof of Theorem~\ref{thm:rig-pairing} consists of roughly $140$
million cases and the theorem takes about $25$ minutes to prove on a
SunFire X2270 machine with Intel X5675 CPUs.

\section{Conclusion and Future Work}
\label{sec:conclusions}

We introduced a new class of rounding algorithms for the
\maxbisection{} problem and \balcsps{} extending the work of
\cite{raghavendra12approximating}. We analyzed the results to present
a \alphasecond{}-approximation algorithm for \maxbisection{} and an
$(\alpha_{LLZ}-\epsilon)$-approximation algorithm for \balmaxtwosat{},
improving on approximation ratios of $\RTratio$ and $0.93$
respectively \cite{raghavendra12approximating}.  Our improved bound $\alphabest$ is so far based on
extensive numerical evidence, but we are currently working on a formal
proof of this bound.  Our algorithm for \balmaxtwosat{} is optimal
assuming the \ugc{} and the ratio of our algorithm for \maxbisection{} is off
from the UGC-hardness threshold by less than $10^{-3}$.
The most obvious open question is to
close this small gap. We conjecture that there is an
$(\alphagw-\epsilon)$-approximation algorithm for \maxbisection{},
i.e., it has the same approximation threshold as \maxcut{}.

It is worth noting that there are constraint satisfaction problems
where adding a bisection constraint makes the problem strictly harder.
A natural example is \textsc{Min Cut} which is
solvable in polynomial time but its bisection variant, \textsc{Min
  Bisection}, is NP-hard to solve exactly and R3SAT-hard to
approximate within a factor $4/3$ \cite{feige02relations}.

It would be interesting to come up with a generic algorithm family that provides the best approximation algorithm for all \balcsp{P} problems. In particular, while the seminal work of Raghavendra~\cite{raghavendra08optimal} shows that, assuming the UGC, for any predicate $P$ the best approximation algorithm for \csp{P} is to run a certain rounding scheme on its natural SDP relaxation there is no analog for \balcsp{P}. Notice that \cite{raghavendra08optimal} \emph{does not} provide a (practical) way to compute the approximation factor of this algorithm and just proves its optimality, hence a parallel result for \balcsp{P} would be incomparable to the current paper and \cite{raghavendra12approximating}.

%% We note that \cite{raghavendra08optimal} was preceded by a
%% paper~\cite{austrin07towards} showing a similar result for any $2$-CSP
%% over a Boolean domain assuming a further conjecture. So, in attempting
%% to show a parallel for \balcsp{P} it might be worthwhile to start
%% with binary predicates.

\bibliographystyle{alpha}
\bibliography{references}
\begin{appendix}
\section{Proofs of Some Properties of the Bivariate Gaussian Distribution}
\label{sec:gaussian proofs}

In this section we prove some lemmata from Section~\ref{sec:gaussian prelims}

\begin{relemma}{lemma:gammasymmetry}
  For every $\trho \in [-1,1]$, $q_1, q_2 \in [0,1]$, we have
  \[
  \Gamma_{\trho}(1-q_1, 1-q_2) = \Gamma_{\trho}(q_1, q_2) + 1 - q_1 - q_2.
  \]
\end{relemma}
\vspace{-1cm}
\begin{proof}
  Define $(X, Y)$ as a pair of jointly Gaussian random variables each of which has mean 0 and variance 1, where $Cov(X, Y) = \trho$.
  From definition of $\Gamma$ we have,
  \begin{align*}
    \Gamma_{\trho}(1-q_1, 1-q_2) &= \pro{X \leq \Phi^{-1}(1-q_1) \wedge Y \leq \Phi^{-1}(1-q_2)}\\
    &= \pro{-X \geq -\Phi^{-1}(1-q_1) \wedge -Y \geq -\Phi^{-1}(1-q_2)}.\\
    \intertext{Observing that $\Phi^{-1}(1-q_1)=-\Phi(q_1)$ and $(-X, -Y)$ has exactly the same distribution as $(X, Y)$, }
    &= \pro{X \geq \Phi^{-1}(q_1) \wedge Y \geq \Phi^{-1}(q_2)}
    = 1-\pro{X < \Phi^{-1}(q_1) \vee Y < \Phi^{-1}(q_2)}\\
    &= 1-\left(\pro{X < \Phi^{-1}(q_1)} + \pro{Y < \Phi^{-1}(q_2)} - \pro{X < \Phi^{-1}(q_1) \wedge Y < \Phi^{-1}(q_2)} \right)\\
    &= 1-q_1-q_2+\pro{X \leq \Phi^{-1}(q_1) \wedge Y \leq \Phi^{-1}(q_2)} = 1 - q_1 - q_2 + \Gamma_{\trho}(q_1, q_2),
  \end{align*}
  where we have used inclusion-exclusion and the fact that $\pro{X = \Phi^{-1}(q_1)} = 0$ in the last two lines of the proof.
\end{proof}

\begin{relemma}{lemma:gamma indep bound}
  For any $\trho \in [-1,1]$, $q_1, q_2 \in [0,1]$, we have
  \[\Gamma_{\trho}(q_1, q_2) \le q_1 q_2 + 2 |\trho|.\]
\end{relemma}
\vspace{-1cm}
\begin{proof}
  Let $t_1 = \Phi^{-1}(q_1)$, $t_2 = \Phi^{-1}(q_2)$.  We may assume $|\trho| \le 1/2$ since otherwise the Lemma is trivially true.  For any $x \in \R$, we have
  \begin{align*}
    \Pr[Y \le t_2 \,|\, X = x] &= \Phi\left(\frac{t_2 - \trho x}{\sqrt{1-\trho^2}} \right) \\
      &\le \Phi(t_2-\trho x) + \frac{1/\sqrt{1-\trho^2} - 1}{\sqrt{2 \pi e}} & \text{(by Lemma~\ref{lem:gaussian conc}, \eqref{eqn:gaussian anti-conc2})} \\
    &\le \Phi(t_2) + \frac{|\trho x|}{\sqrt{2 \pi}} + \frac{1/\sqrt{1-\trho^2} - 1}{\sqrt{2 \pi e}} & \text{(by Lemma~\ref{lem:gaussian conc}, \eqref{eqn:gaussian anti-conc1})}\\
      &\le \Phi(t_2) + \frac{|\trho x|}{\sqrt{2 \pi}} + \frac{\trho^2}{\sqrt{2 \pi e}}. & \text{(by $|\trho| \le 1/2$)}
  \end{align*}
  From this we conclude that
  \begin{align*}
    \Gamma_{\trho}(q_1, q_2) &= \int_{x=-\infty}^{t_1} \phi(x) \Phi\left(\frac{t_2 - \trho x}{\sqrt{1-\trho^2}}\right) \ud x \\
    &\le \int_{x = -\infty}^{t_1} \phi(x) \left(\Phi(t_2) + \frac{\trho^2}{\sqrt{2\pi e}} + \frac{|\trho x|}{\sqrt{2 \pi}}\right) \ud x \\
    & = q_1 q_2 + \frac{\trho^2}{\sqrt{2 \pi e}} + \frac{|\trho|}{\sqrt{2 \pi}} \int_{x = -\infty}^{t_1} |x| \phi(x) \ud x \\
    &\le q_1 q_2 + |\trho| + |\trho|,
  \end{align*}
  where in the last step we have used $\Exp{|X|} = \sqrt{2/\pi}$ for $X \sim \mathcal{N}(0,1)$. This completes the proof.
\end{proof}
The preceding proof used two standard anti-concentration bounds for Gaussians, as summarized by the following Lemma.

\begin{lemma}\label{lem:gaussian conc}
  Let $X\sim \mathcal{N}(0, 1)$ be a standard gaussian random variable, and $t \in \R$, $a\leq b$, $\alpha\geq 1$ real numbers. Then,
  \begin{align}
    \pro{a \leq X \leq b} &\leq (b-a)/\sqrt{2\pi}, \label{eqn:gaussian anti-conc1}\\
    \pro{X \leq \alpha t} &\leq \pro{X \leq t} + \frac{\alpha-1}{\sqrt{2\pi e}}. \label{eqn:gaussian anti-conc2}
  \end{align}
\end{lemma}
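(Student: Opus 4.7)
Both inequalities are elementary anti-concentration facts that reduce to simple calculus on the Gaussian density $\phi(x) = \frac{1}{\sqrt{2\pi}} e^{-x^2/2}$. My plan is to prove each one directly, the first by bounding $\phi$ by its maximum and the second by a case analysis on the sign of $t$ combined with a maximization argument.

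For \eqref{eqn:gaussian anti-conc1}, I would simply observe that $\phi$ attains its maximum value $\phi(0) = 1/\sqrt{2\pi}$ at the origin, so
\[
\pro{a \le X \le b} = \int_a^b \phi(x)\,\ud x \le (b-a) \cdot \max_{x \in \R} \phi(x) = (b-a)/\sqrt{2\pi}.
\]
This is one line and requires no additional work.

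For \eqref{eqn:gaussian anti-conc2}, I would split on the sign of $t$. If $t \le 0$, then since $\alpha \ge 1$ we have $\alpha t \le t$, so $\pro{X \le \alpha t} \le \pro{X \le t}$ and the inequality holds trivially because the additional term $\frac{\alpha-1}{\sqrt{2\pi e}}$ is nonnegative. If $t > 0$, then $\alpha t \ge t \ge 0$ and
\[
\pro{X \le \alpha t} - \pro{X \le t} = \int_t^{\alpha t} \phi(x)\,\ud x \le (\alpha - 1) t \cdot \phi(t),
\]
where I have used that $\phi$ is decreasing on $[0,\infty)$ so that $\phi(x) \le \phi(t)$ throughout the integration interval.

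The remaining step is to bound $t\phi(t)$ uniformly over $t \ge 0$. Differentiating, $\frac{\ud}{\ud t}(t\phi(t)) = \phi(t)(1 - t^2)$, which vanishes at $t = 1$ and gives a maximum there. The value at the maximum is $t\phi(t)\big|_{t=1} = \frac{1}{\sqrt{2\pi}} e^{-1/2} = \frac{1}{\sqrt{2\pi e}}$. Substituting back yields $(\alpha-1) t \phi(t) \le \frac{\alpha-1}{\sqrt{2\pi e}}$, completing the proof. I don't expect any serious obstacle here: both statements are standard consequences of direct estimates on the Gaussian density, and the only mildly nontrivial step is the one-variable optimization of $t\phi(t)$, which is routine.
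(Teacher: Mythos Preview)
Your proof is correct and essentially identical to the paper's: both bound $\phi$ by its maximum for \eqref{eqn:gaussian anti-conc1}, and for \eqref{eqn:gaussian anti-conc2} both split on the sign of $t$, bound the integrand on $[t,\alpha t]$ by $\phi(t)$ when $t\ge 0$, and then maximize $t\phi(t)$ via the derivative $(1-t^2)e^{-t^2/2}$ to get the value $1/\sqrt{2\pi e}$ at $t=1$.
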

\begin{proof}
  To prove \eqref{eqn:gaussian anti-conc1} observe that
  \[\pro{a \leq X \leq b} = \frac{1}{\sqrt{2\pi}}\int_{x=a}^b e^{-x^2/2} \ud x \leq \frac{1}{\sqrt{2\pi}}\int_{x=a}^b \ud x = \frac{b-a}{\sqrt{2\pi}}.\]
  Proceeding to \eqref{eqn:gaussian anti-conc2}, the case $t < 0$ holds trivially.  For $t \ge 0$ we have
  \[\pro{t \leq X \leq \alpha t} = \frac{1}{\sqrt{2\pi}}\int_{x=t}^{\alpha t} e^{-x^2/2} \ud x \leq \frac{1}{\sqrt{2\pi}}\int_{x=t}^{\alpha t} e^{-t^2/2} \ud x = \frac{(\alpha-1) t e^{-t^2/2}}{\sqrt{2\pi}} \leq \frac{\alpha-1}{\sqrt{2 \pi e}},\]
  where the last inequality follows because the derivative of the function $f(t) = te^{-t^2/2}$ is $(1-t^2)e^{-t^2/2}$, hence
  $f(t)$ achieves its maximum at $t=1$.
\end{proof}

We now prove Lemma~\ref{lemma:gamma deriv} repeated here
for convenience.

\begin{relemma}{lemma:gamma deriv}
  For every $\trho \in (-1,1), q_1, q_2 \in [0,1]$, we have
  \begin{align*}
    \frac{\deriv}{\deriv q_1} \Gamma_{\trho}(q_1, q_2) &= \Phi\left(\frac{t_2 - \trho t_1}{\sqrt{1-\trho^2}}\right), %\\
%  \frac{\deriv^2 \Lambda_{\trho}}{\deriv r_1^2}(r_1, r_2) &=  -\frac{\trho}{2\sqrt{1-\trho^2}} \cdot \frac{\phi\left(\frac{t(r_2) - \trho t(r_1)}{\sqrt{1-\trho^2}}\right)}{\phi(t(r_1))}
  \end{align*}
  where $t_i = \Phi^{-1}(q_i)$.
\end{relemma}
\begin{proof}
  We have
\[\Gamma_{\trho}(q_1, q_2) = \int_{x=-\infty}^{t_1} \phi(x) \Phi\left(\frac{t_2 - \trho x}{\sqrt{1-\trho^2}}\right) \ud x,\]
giving
\begin{align*}
  \frac{\deriv}{\deriv q_1}\Gamma_{\trho}(q_1, q_2) &= \frac{\dif t_1}{\dif q_1} \phi(t_1) \Phi\left(\frac{t_2 - \trho t_1}{\sqrt{1-\trho^2}}\right) 
  = \Phi\left(\frac{t_2 - \trho t_1}{\sqrt{1-\trho^2}}\right),
\end{align*}
where the second step used $\frac{\dif}{\dif x}\Phi^{-1}(x) = \frac{1}{\phi(\Phi^{-1}(x))}$.
\end{proof}

%%% Local Variables:
%%% mode: latex
%%% TeX-master: "bisection.tex"
%%% End:

\end{appendix}
\end{document}